\def\ben{\begin{equation}}
\def\een{\end{equation}}
\def\bena{\begin{eqnarray}}
\def\eena{\end{eqnarray}}
\def\d{{\rm d}}
\def\C{{\mathcal C}}
\def\L{{\mathcal L}}
\def\D{D}
\def\mr{{\mathbb R}}
\def\G{{\mathcal G}}
\def\H{{\mathcal H}}
\def\F{{\mathcal F}}
\def\T{{\mathbb T}}
\newcommand{\mn}{{\mathbb N}}
\renewcommand{\D}{{\mathcal D}}
\newcommand{\R}{{\mathcal R}}
\renewcommand{\S}{{\mathscr S}}
\renewcommand{\O}{{\mathcal O}}
\newcommand{\e}{{\rm e}}
\renewcommand{\atop}[2]{\genfrac{}{}{0pt}{}{#1}{#2}}
\newcommand{\varp}{\frac{\delta}{\delta\varphi}}
\newcommand{\LscO}{\mathcal{L}^{0, \Lambda_{0}}}
\newcommand{\LscIr}{\mathcal{L}^{\Lambda_{0}, \Lambda_{0}}}
\def\bra{\langle }
\def\ket{\rangle}
\renewcommand{\bigotimes}{\otimes}
\newcommand{\La}{\Lambda}
\newcommand{\Lao}{\Lambda_0}
\newcommand{\vp}{\varphi}
\newcommand{\pa}{\partial}
\newcommand{\de}{\delta}
\newcommand{\vecp}{\vec{p}}
\newcommand{\m}{\mathcal{X}}
\newcommand{\Pol}{\mathbf{P}}
\newcommand{\In}{\mathfrak{L}\,}
\newcommand{\openbox}{\leavevmode
  \hbox to.77778em{%
  \hfil\vrule
  \vbox to.675em{\hrule width.6em\vfil\hrule}%
  \vrule\hfil}}
\newcommand{\proofname}{Proof.}
\newcounter{proof}%
\newenvironment{proof}[1][\proofname]{
  \th@nonumberplain
  \def\theorem@headerfont{\itshape}%
  \normalfont
  \@thm{proof}{proof}{#1}}%
  {\@endtheorem}
\theoremstyle{plain}
\newtheorem{thms}{Theorem}
\newtheorem{bounds}{Bound}
\newtheorem{lemm}{Lemma}
\newtheorem{propos}{Proposition}
\newtheorem{result}{Theorem}
\newtheorem{cor}{Corollary}
\newtheorem{definitions}{Definition}
\newtheorem{remark}{Remark}
\newenvironment{defn}[1][]{
  \begin{definitions}[#1]\label{def:#1}%
}{\end{definitions}}
\newenvironment{thm}[1][]{
  \begin{thms}\label{thm:#1}%
}{\end{thms}}
\newenvironment{prop}[1][]{
  \begin{propos}%
}{\end{propos}}
\newenvironment{corollary}[1][]{
  \begin{cor}%
}{\end{cor}}
\newenvironment{lemma}[1][]{
  \begin{lemm}%
}{\end{lemm}}
\newenvironment{rem}[1][]{
  \begin{remark}%
}{\end{remark}}
\title{Recursive construction of\\ operator product expansion coefficients}
\author{Jan Holland$^{1}$\thanks{\tt holland@cpht.polytechnique.fr}\: and 
Stefan Hollands$^{2}$\thanks{\tt stefan.hollands@uni-leipzig.de}\:
\\ \\
{\it ${}^{1}$Centre de Physique Th\'eorique, CNRS, UMR 7644} \\
{\it  \'Ecole Polytechnique, F-91128 Palaiseau, France} \medskip \\
{\it ${}^{2}$Institut f\"ur Theoretische Physik, Universit\"at Leipzig
} \\
{\it Br\"uderstr. 16, Leipzig, D-04103, Germany
}  \\
}
\begin{document}
\maketitle

\begin{abstract}
We derive a novel formula for the derivative of operator product expansion (OPE) coefficients with respect to a coupling constant.  
The formula only involves the OPE coefficients themselves, and no further input, and is in this sense self-consistent. Furthermore, unlike other formal identities of this general nature 
in quantum field theory (such as the formal expression for the Lagrangian perturbation of a correlation function), our formula is completely well-defined from the start, i.e. 
requires no further UV-renormalization. This feature is a result of a cancelation of UV-divergences between various terms in our identity. Our proof, and an analysis of 
the features, of our identity is given for the example of massive, Euclidean $\varphi^4$ theory in 4 dimensional Euclidean space, and relies heavily on the framework of the renormalization group flow equations. It is valid to arbitrary, but finite orders in perturbation theory. The final formula, however, makes no explicit reference to the renormalization group flow, nor to perturbation theory, and we conjecture that it also holds non-perturbatively. The identity can be applied constructively because it gives recursive algorithm for the computation of OPE coefficients to arbitrary (finite) 
perturbation order in terms of the zeroth order coefficients corresponding to the underlying free field theory, which in turn are trivial to obtain. We briefly illustrate the relation of this 
method to more standard methods for computing the OPE in some simple examples.
\end{abstract}

\section{Introduction}

The \emph{operator product expansion} (OPE)~\cite{Wilson:1969ub, Zimmermann:1973wp} is the statement that any product of local quantum fields $\O_{A_{1}},\ldots, \O_{A_{N}}$ admits a short distance expansion of the form
\ben\label{OPEintro}
\O_{A_{1}}(x_{1})\cdots \O_{A_{N}}(x_{N}) \sim \sum_{B} \C_{A_{1}\ldots A_{N}}^{B}(x_{1},\ldots,x_{N})\, \O_{B}(x_{N})\, ,
\een
where $A_{1},\ldots,A_{N},B$ are labels for the composite fields of the theory under investigation. The \emph{OPE coefficients} $\C_{A_{1}\ldots A_{N}}^{B}(x_{1},\ldots,x_{N})$ are distributions with singularities located at the diagonals $x_{i}=x_{j}$ for $i\neq j$. The relation \eqref{OPEintro} is understood in the weak sense, i.e. as an insertion into an arbitrary quantum state (or into a correlation function), and it is further usually understood to be an \emph{asymptotic expansion}: If the sum on the right side is carried out to a sufficiently large but finite order, then the remainder goes to zero fast as $x_{i}\to x_{N}$ for all $i\in\{1,\ldots, N\}$. Within perturbative quantum field theory on Euclidean space, the expansion even converges~\cite{Hollands:2011gf,Holland:2012vw, Holland:2013we}. In 
the same framework, one can also prove that the OPE coefficients fulfill stringent consistency conditions that essentially express a kind of associativity law~\cite{Holland:2012vw, Holland:2013we}.  In this sense, 
the OPE should be viewed as encoding the ``algebraic'' content of a quantum field theory. 

The OPE is by now a well-established tool in quantum field theory, and has found many practical uses, for example in the analysis of asymptotically free non-abelian 
gauge theories in 4 dimensions, or in the analysis of conformal field theories, especially in 2 dimensions. Our main interest in the OPE arises from the fact that it may be viewed, in a sense, 
as the defining structure of a quantum field theory. This viewpoint becomes particularly compelling if one considers quantum field theory on general curved spacetime manifolds~\cite{Hollands:2008vq,Hollands:2008wr}. As is well-known~\cite{Wald199411}, there is in gerneral no preferred ``vacuum state'' on a general curved spacetime, and it is therefore not fruitful to formulate quantum field theory via entities depending explicitly
or implicitly on such a concept, such as correlation functions, S-matrices, or functional integrals. The main advantage of the OPE is precisely that it is independent of any arbitrary choices of states --
it holds when inserted into {\em any} (sufficiently regular) state. 

If we accept as a possible viewpoint that the OPE is a fundamental, defining structure of quantum field theory, then there should be ways to generate examples of quantum field theories via 
the direct construction of their OPE's. In 2 dimensional conformal field theories, such constructions of essentially algebraic nature are indeed possible, e.g . via the concept of 
``vertex algebras'', which essentially encode the OPE of a theory. A rich class of models can thereby be produced, see e.g.~\cite{FrenkelLepowskyMeurman198903,Lepowsky:2004vn,Huang:1997yt,Borcherds1986}. Unfortunately, direct constructions of this type do not 
seem possible in higher dimensions or for non-conformally invariant theories. In practice, one proceeds mostly by starting from a free field theory, and computes the coefficients perturbatively. The 
perturbative corrections are computed essentially by inserting the OPE into suitable matrix elements. While this procedure is consistent, it would be much more natural, from our viewpoint, 
to have a procedure for determining the OPE that does not depend on auxiliary matrix elements. 

One way to achieve this could be to focus attention directly the consistency relations that must be satisfied by the OPE coefficients~\cite{Holland:2012vw}. However, it is unclear how to handle these 
relations efficiently in order to generate solutions, either directly, or indirectly (e.g. via perturbation theory, symmetry arguments etc.). Instead, we will derive in the present paper a novel formula for the derivative of the OPE coefficients with respect to the coupling constant of the theory, which we present concretely for the $\varphi^4$-model in 4 dimensions. More precisely, our main result is
%
\begin{thm}[OPE deformation]
\label{thmint1}
The derivative of the OPE coefficients w.r.t. the coupling constant $g$ in massive Euclidean $\varphi^{4}$-theory with BPHZ renormalization conditions can be expressed as
\ben\label{eq:thmint1}
\begin{split}
& \partial_{g}\, \C_{A_{1}\ldots A_{N}}^{B}(x_{1},\ldots, x_{N})= \frac{-1}{4!}\int\d^{4}y \bigg[ \C_{\In A_{1}\ldots A_{N} }^{B}(y,x_{1},\ldots,x_{N}) \\
& \qquad -\sum_{i=1}^{N}\sum_{[C]\leq [A_{i}]} \!\!\!\C_{\In A_{i}}^{C}(y,x_{i})\, \C_{A_{1}\ldots \widehat{A_{i}}\,C\ldots A_{N}}^{B}(x_{1},\ldots, x_{N})- \sum_{[C]< [B]} \C_{A_{1}\ldots A_{N}}^{C}(x_{1},\ldots,x_{N})\, \C_{\In C }^{B}(y,x_{N}) \bigg]\, ,
\end{split}
\een
where the index $\In$ corresponds to the ''coupling operator'' $\O_{\In}=\varphi^{4}$ and where $\widehat{A_{i}}$ denotes omission of the corresponding index. $[A]$ stands 
for the dimension of the composite field $\O_A$. 
\end{thm}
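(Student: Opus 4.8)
The plan is to begin from the elementary identity, valid in the cutoff theory defined by the renormalization group flow equations, that differentiating a correlation function with respect to the coupling amounts to inserting the vertex operator $\O_{\In}=\vp^{4}$ integrated over $\mr^{4}$, and then to convert this into a statement about OPE coefficients by inserting the OPE on both sides. First, working at finite UV cutoff $\Lao$, I would establish
\ben
\begin{split}
\pa_{g}\big\langle \O_{A_{1}}(x_{1})\cdots\O_{A_{N}}(x_{N})\,\vp(\xi_{1})\cdots\vp(\xi_{m})\big\rangle
= & -\frac{1}{4!}\int\d^{4}y\,\big\langle \O_{\In}(y)\,\O_{A_{1}}(x_{1})\cdots\O_{A_{N}}(x_{N})\,\vp(\xi_{1})\cdots\vp(\xi_{m})\big\rangle \\
& {}+(\text{mixing terms}),
\end{split}
\een
where the mixing terms encode the explicit $g$-dependence of the BPHZ-normalised composite fields $\O_{A_{i}}$ and are finite sums of correlators with a single $\O_{A_{i}}$ replaced by an operator $\O_{C}$ with $[C]\le[A_{i}]$. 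The structural point is that $\pa_{g}\O_{A_{i}}$ is again a renormalised composite field of dimension $\le[A_{i}]$ whose expansion coefficients are precisely $-\tfrac{1}{4!}\int\d^{4}y\,\C_{\In A_{i}}^{C}(y,x_{i})$, understood with the appropriate finite-part prescription; these mixing terms will eventually reproduce the second sum in \eqref{eq:thmint1}, and the analogous statement for $\O_{B}$ the third.

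Next, I would insert the OPE of $\O_{A_{1}}(x_{1})\cdots\O_{A_{N}}(x_{N})$ about $x_{N}$ on the left-hand side and carry out the $g$-derivative, obtaining $\sum_{B}(\pa_{g}\C_{A_{1}\ldots A_{N}}^{B})\,\langle\O_{B}(x_{N})\cdots\rangle$ together with a term containing $\pa_{g}\langle\O_{B}(x_{N})\cdots\rangle$, to which the identity above is applied once more. On the right-hand side I would split the $y$-integration according to the location of $y$ relative to the cluster $\{x_{1},\dots,x_{N}\}$: near $x_{i}$ using the nested expansion $\O_{\In}(y)\O_{A_{i}}(x_{i})\sim\sum_{C}\C_{\In A_{i}}^{C}(y,x_{i})\O_{C}(x_{i})$ followed by the OPE of the remaining $N$ fields; where $y$ accompanies the whole cluster using the $(N{+}1)$-fold OPE with coefficient $\C_{\In A_{1}\ldots A_{N}}^{B}$; and elsewhere using the $N$-fold cluster OPE followed by $\O_{\In}(y)\O_{B}(x_{N})\sim\sum_{C}\C_{\In C}^{B}(y,x_{N})\O_{C}(x_{N})$. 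The associativity and factorisation properties of the OPE and the remainder bounds from the flow-equation construction justify these nested expansions and the interchange of the $y$-integral with the sums over intermediate fields, to any fixed order in perturbation theory.

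Comparing the coefficients of $\langle\O_{B}(x_{N})\,\vp(\xi_{1})\cdots\vp(\xi_{m})\rangle$ on the two sides --- legitimate because such insertions separate the OPE coefficients --- then yields an identity for $\pa_{g}\C_{A_{1}\ldots A_{N}}^{B}$ as a sum of integrals of products of OPE coefficients. The remaining task is to bring this into the manifestly cutoff-independent form \eqref{eq:thmint1}: one must show that the pieces with $[C]>[A_{i}]$ arising near $y=x_{i}$ and those with $[C]\ge[B]$ arising from $\O_{\In}$ acting on $\O_{B}$ recombine with the full coefficient $\C_{\In A_{1}\ldots A_{N}}^{B}$ and with the mixing terms of the first step, so that after cancelation only the three stated terms remain; the resulting integrand is then absolutely integrable near each $y=x_{i}$, by the scaling-degree bound on $\C_{\In A_{1}\ldots A_{N}}^{B}-\sum_{[C]\le[A_{i}]}\C_{\In A_{i}}^{C}\,\C_{A_{1}\ldots\widehat{A_{i}}\,C\ldots A_{N}}^{B}$ at the partial diagonal $y=x_{i}$, and for $|y|\to\infty$, by the corresponding large-distance decay estimates, where the mass gap is essential.

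The step I expect to be the main obstacle is this last reorganisation: establishing, with bounds uniform in $\Lao$, that the dimension-restricted subtractions in \eqref{eq:thmint1} are exactly those required to make the $y$-integral converge, and that the omitted over-subtracted contributions cancel identically against the BPHZ mixing terms. This is precisely the cancelation of UV divergences advertised in the abstract, and it is where the sharp flow-equation estimates for OPE coefficients near partial diagonals --- not merely their existence --- are indispensable.
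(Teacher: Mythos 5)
Your outline captures the correct architecture of the result --- the $g$-derivative as an integrated $\varphi^4$-insertion, the identification of the two subtraction sums as the counter terms that render the $y$-integral convergent near the partial diagonals and at infinity, and the role of the mass gap --- and it proceeds by a genuinely different route than the paper: you work with correlation functions, insert convergent OPEs region by region in $y$, and invoke associativity/factorisation, whereas the paper works with the generating functionals $L^{\Lambda,\Lambda_0}(\O_A)$ and $G^{\Lambda,\Lambda_0}(\otimes\O_{A_i})$, proves the insertion formulas (propositions \ref{thmAP2} and \ref{thmAGAP}) by showing that both sides satisfy the same flow equation and boundary conditions, and then passes to the theorem by \emph{exact finite} algebra using the remainder identities \eqref{RGD}, \eqref{eq:partrem} and \eqref{OPEderconv}, followed by an induction in $[B]$. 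The paper's route never needs the convergent (infinite) OPE, nor any interchange of infinite sums over intermediate fields with the $y$-integral.

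However, as you yourself acknowledge, the proposal leaves its central step unresolved, and this is a genuine gap rather than a routine verification. Concretely: (i) your first displayed identity is not well defined as written, since $\int\d^4y\,\langle\O_{\In}(y)\O_{A_1}(x_1)\cdots\rangle$ diverges at $y=x_i$; the correct statement involves the \emph{oversubtracted} functionals $L^{\Lambda,\Lambda_0}_{D=[A]}(\O_A\otimes\varphi^4(y))$ and $G^{\Lambda,\Lambda_0}([\O_{A_j}\otimes\varphi^4(y)]_{[A_j]};\cdots)$, and their absolute integrability (uniformly in $\Lambda_0$) is precisely what requires the refined short-distance bounds of the appendix, e.g.\ \eqref{eqboundN=2} with the exponent $3+\tfrac1r$; your ``mixing terms'' are then forced, not postulated, by the $g$-independence of the BPHZ boundary conditions \eqref{BCL1} together with the remainder formula \eqref{RGD}. (ii) The ``comparison of coefficients of $\langle\O_B(x_N)\,\varphi(\xi_1)\cdots\rangle$'' is not immediate, because the one-insertion functionals are only dual to $\D^B$ at equal dimension ($\D^B L^{0,\Lambda_0}(\O_C)=\delta_{B,C}$ only for $[C]=[B]$); contributions with $[C]<[B]$ survive and must be removed by the induction in $[B]$ that the paper carries out at the end of its proof. (iii) The claimed identical cancellation of the over-subtracted pieces against the mixing terms is exactly the content of the chain of manipulations leading from \eqref{inteideq1} to \eqref{preinduct}, and it uses the vanishing statement \eqref{OPEderconv} (that $\D^B$ annihilates the Taylor remainder of $G_{[B]}$), which in turn rests on the Lowenstein rules and the boundary conditions --- none of which follows from general ``associativity'' alone. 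Until these three points are supplied, the proposal is a correct plan rather than a proof.
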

We note the following implications of this formula:
\begin{description}
\item[Recursion scheme:] Equation \eqref{eq:thmint1} allows for a recursive construction of OPE coefficients to any order in perturbation theory: Expanding both sides of the equation as formal power series in the coupling constant, we note that, due to the derivative $\partial_{g}$ on the left hand side, we obtain a formula expressing coefficients of order $r\in\mathbb{N}$ in terms of lower order ones (see corollary \ref{cor} below). The only \emph{initial data} necessary for this construction are the zeroth-order OPE coefficients, which are quite easy to obtain.

\item[Features of the scheme:] The method of computation for OPE coefficients based on eq.\eqref{eq:thmint1} is quite different from customary methods, which generally rely on certain short distance/large momentum expansions  of particular Feynman diagrams~\cite{collins,Weinberg:1996kr} associaated with matrix elements of operator products. Our method, by contrast, is formulated entirely in terms of OPE coefficients, and is in this sense entirely self-consistent. In particular, the fundamental state-independence of these coefficients  is evident in this scheme, 
simply because no state enters the recursion formula. 

\item[Deformation theory:] In the language of ordinary, finite dimensional algebra, perturbations of the algebra product are usually referred to as \emph{deformations}. One can see in equation \eqref{OPEintro} that, formally, the OPE coefficients play a role similar to the structure constants of an algebra. In the light of this analogy, theorem \ref{thmint1} can be interpreted as a formula describing the \emph{deformation of the OPE algebra} caused by the $\varphi^{4}$-interaction. It would be very interesting to pursue this analogy further.

\item[Self-consistency:]  Unlike other hierarchies of equations, such as e.g. the Dyson-Schwinger equation, our scheme does not require an additional ``renormalization'' procedure, i.e. all aspects of the equation are completely well-defined from the outset! For example, one may suspect, based on dimensional analysis, that the $y$-integral would diverge when $y \to x_i$, and this would indeed be the case for the 
individual terms under the integral. However, as we will see, the divergences of the individual terms cancel out precisely, i.e. the integrand is an integrable function in the variable $y$! 
The same is shown to occur for potential divergences of the integral for large $y$. This supports our viewpoint that it should be possible to view the OPE as a fundamental property of quantum field theory. 

\item[Beyond perturbation theory:] Although our deformation formula \eqref{eq:thmint1} is derived within the framework of renormalized perturbation theory, its final form 
no longer makes any reference to perturbation theory. Eq.~\eqref{eq:thmint1} is a first order differential equation in $g$, and since $g=0$
corresponds to the free field theory, the ``initial values'' of all OPE-coefficients are known at $g=0$. Thus, it is conceivable that one could actually show that a unique solution to~\eqref{eq:thmint1} must exist (beyond the level of formal power series). We view this as a promising approach to a non-perturbative definition of the OPE-coefficients.  
\end{description}

It does not seem straightforward to motivate our  deformation formula \eqref{eq:thmint1} by formal methods such as path integrals. 
The first term on the right hand side resembles Lagrangian perturbation theory, since we essentially have an additional ``insertion'' of the interaction $\In$ which is integrated over the ``insertion point'' $y$. This is very similar to formulas for the Lagrangian perturbation of correlation functions, which are easy to motivate formally via a path integral. However, it seems unclear how to motivate the other terms by formal methods. 

Our derivation of eq.~\eqref{eq:thmint1} is, by contrast, 
mathematically rigorous, and relies on the renormalization group flow equation approach to quantum field theory~\cite{Polchinski:1983gv,Wilson:1971bg,Wilson:1971dh,Wegner:1972ih,Keller:1990ej,Keller:1991bz,Wetterich}. We will briefly review this framework in section \ref{sec:FE}, where we also define various quantities of relevance for the purpose of this paper. In section \ref{sec:proof} we give the proof of the theorem, followed in section \ref{sec:iteration} by an application of the recursion scheme mentioned above. In appendix \ref{APbounds} we derive bounds on Schwinger functions with operator insertions, which are used in the proof of theorem \ref{thmint1}. These estimates constitute a slight improvement over previously known bounds in the flow equation framework in the short distance regime, so the appendix might be a side result of some interest in its own right.

\paragraph{Notation and conventions:}

The convention for the Fourier transform in $\mr^4$ used in this paper is
\ben
f(x) = \int_p  \hat f(p)\, \e^{ipx} := \int_{\mr^4} \frac{\d^4 p}{(2\pi)^4}\, 
\e^{ipx} \hat f(p) \, .
\een
We use a
standard
multi-index notation.
Our multi-indices are elements $w = (w_1, \dots, w_n) \in \mn^{4n}$,
where each  $w_i \in \mn^4$
is a four-tuple with components $w_{i,\mu} \in \mn$
and $\mu=1,\dots,4$. For $f(p_{1},\ldots,p_{n})$ a smooth function on $\mr^{4n}$, we use the shorthand $f(\vec p)$ and we set
\ben\label{multider}
\pa^{w} f(\vec p) = \prod_{i,\mu}
\left( {\pa \over \pa p_{i,\mu}} \right)^{w_{i,\mu}} f(\vec p)
\een
and
\ben
w! = \prod_{i,\mu} w_{i,\mu}! \, , \quad |w|=\sum_{i,\mu} w_{i,\mu} \, .
\een
If a function $f(\vec{x};\vec{p})$ depends on two sets of variables, $(\vec{x},\vec{p})\in\mathbb{R}^{4n_{1}}\times\mathbb{R}^{4n_{2}}$, then we write $\partial_{\vec{p}}^{w}$ to indicate that the partial derivatives are taken with respect to the variables $(p_{1},\ldots,p_{n_{2}})$ as in \eqref{multider}. Derivatives $\partial^w$ of a product of
functions
$f_1 \cdots f_r$ are distributed over the factors using the Leibniz rule, which results
in the sum of all terms of the form $c_{\{v_i\}} \ \partial^{v_1} f_1
\cdots \partial^{v_r} f_r$. Here
each $v_i$ is now a $4n$-dimensional multi-index, where
$v_1+\ldots+v_r=w$,
and where
\ben
c_{\{v_i\}} = \frac{(v_1+\ldots+v_r)!}{v_1! \cdots v_r!} \le r^{|w|}
\een
is the associated multi-nomial weight factor. 

Given a set of momenta $(p_{1},\ldots, p_{n})\in\mathbb{R}^{4n}$, we agree on the shorthand notation
\ben\label{pshort}
\vec{p}:= (p_{1},\ldots, p_{n})\quad , \quad |\vec{p}|_{n}:= \sup_{J\subseteq \{1,\ldots, n\}}\, \Big|\sum_{i\in J} p_{i} \Big|\quad , \quad \vec{p}_{n+2}:= (\vec{p},k,-k)
\een
Later we will often simply write $|\vec{p}|$ instead of $|\vec{p}|_{n}$.

If $F(\varphi)$ is a differentiable function (in the Frechet space
sense)
of the
Schwartz space function $\varphi \in \S(\mr^4)$, we denote its
functional
derivative as
\ben
\frac{\d}{\d t} F(\varphi + t\psi) |_{t=0} = \int \d^4 x \
\frac{\delta F(\varphi)}{\delta \varphi(x)} \ \psi(x) \ ,
\quad \psi \in \S(\mr^4)\ ,
\een
where the right side is understood in the sense of distributions in
$\S'(\mr^4)$. Multiple functional derivatives are
denoted in a similar way and define in general distributions on
multiple Cartesian copies of $\mr^4$.

By $M_{N}$ we denote the spacetime domain
\ben\label{MN}
M_{N}:= \{ (x_{1},\ldots,x_{N})\in\mathbb{R}^{4N} \, |\, x_{i}\neq x_{j}\,\text{ for all } 1\leq i<j\leq N \}\, .
\een
We use the convention that $\mathbb{N}$ are the natural numbers without $0$, that $\mathbb{N}_{0}$ are the natural numbers including zero and that $\mathbb{R}_{+}$ are the real numbers greater than $0$. The scaling degree of a function $u \in C^\infty_0(M_N)$ at the diagonal is defined as
\ben\label{sddef}
\operatorname{sd} (u) :=\inf\Big\{p \in\mathbb{R}\, :\,  \lim_{\epsilon\to 0^+} \epsilon^{p} \ u(\epsilon x_1, \dots, \epsilon x_N) =0\, \text{ for all }(x_{1},\ldots,x_{N})\in M_{N}\Big\}\, ,
\een
where the limit is required to be uniform on compact sets $K \subset M_N$. We similarly define the scaling degree with respect to a subset of points $(x_1, \dots, x_M), M<N$, i.e. 
at a subdiagaonal $M_M \subset M_N$, which is denoted as $\operatorname{sd}_{\{1,...,M\}}$.

\section{The flow equation framework}\label{sec:FE}

The model studied in this paper is the hermitian scalar field theory with self-interaction $g \varphi^4$ and
mass $m > 0$ on flat 4-dimensional Euclidean space. The quantities of
interest in this (perturbative) quantum field theory will be defined
in this section via the flow equation (FE) method \cite{Polchinski:1983gv, Wegner:1972ih, Wilson:1971bg,Wilson:1971dh}. 
We will give a brief outline of the general formalism with a focus on objects of relevance to our study of the OPE, following closely~\cite{Hollands:2011gf}.
The original presentation of the particular method used here can be found in~\cite{Keller:1990ej}, and for more detailed reviews we refer the reader to \cite{Muller:2002he} and \cite{Kopper:1997vg} (in German).

Before we are ready to give the definition of the OPE coefficients $\C_{A_{1}\ldots A_{N}}^{B}$ in section \ref{sec:OPEcoefs}, we first introduce the basics of the flow equation approach in sections \ref{subsec:CAGs} and \ref{subsec:CAGint}. In sections \ref{sec:regCAG} and \ref{subsec:subdiv} we then discuss the regularization of short distance singularities of operator products within this approach and we define versions of Zimmermann's ''normal products'', which are intimately related to the operator product expansion.

\subsection{Connected amputated Green's functions (CAG's)}\label{subsec:CAGs}

We first formulate our quantum field theory with finite infrared (IR) and ultraviolet (UV) cutoffs, called $\Lambda$ and $\Lambda_{0}$ respectively, which can be removed in the end. 
 In the following, we always assume
\ben
\label{ka}
0<\Lambda\ , \quad  \sup(\La,m) < \Lao \ .
\een 
As we are dealing with a massive theory, an infrared cutoff is of course not actually necessary. It is introduced in the flow equation framework as a technical device, which will later allow us to derive the name-giving differential equations. The theory is defined in terms of
\begin{enumerate}

\item the propagator $C^{\Lambda,\Lambda_0}$, which reads in momentum space:
\ben\label{propreg}
C^{\La,\Lao}(p)\,:=\, {1 \over  p^2+m^2}
\left[ \exp \left(- {p^2+m^2 \over \Lao^2} \right) - \exp
\left(- {p^2+m^2 \over \La^2} \right) \right] 
\een
Removing the cutoffs corresponds to taking the limits $\La \to 0$ and $\Lao \to \infty$, which recovers the full propagator $1/(p^{2}+m^{2})$. Other choices of regularization than \eqref{propreg} are equally legitimate. The definition (\ref{propreg}) has the advantage of being analytic in $p^2$
for $\Lambda>0$. The propagator \eqref{propreg} defines a corresponding
Gaussian measure
$\mu^{\La,\Lao}$, whose covariance is $\hbar C^{\La,\Lao}$.
Here the factor $\hbar$ is introduced in order to obtain a consistent
\emph{loop expansion}\footnote{If one considers the usual Feynman diagram expansion of the quantities of interest defined below, then every closed loop  yields  a power of $\hbar$.} in the following.

\item the interaction Lagrangian, including renormalization counter terms (we also require the symmetry $\varphi\to-\varphi$, which causes odd powers of the basic field to vanish):
\ben
L^{\Lambda_0}(\varphi) = \int \d^4 x \ \bigg( a^{\Lambda_0}
\, \varphi(x)^2
+b^{\Lambda_0} \, \partial \varphi(x)^2+c^{\Lambda_0}
\, \varphi(x)^4 \bigg) 
\label{ac}
\een
Here the \emph{basic field} $\varphi \in \S(\mr^4)$ is any Schwartz space function. 
The counter terms
$a^{\Lambda_0} = O(\hbar),\  b^{\Lambda_0} = O(\hbar^2)$
and $c^{\Lambda_0} = \frac{g}{4!} +
O(\hbar)$ will be
adjusted--and actually diverge--when $\Lambda_0 \to \infty$,
in order to obtain a well
defined limit of the quantities of interest. This has been anticipated 
by making them ``running couplings'', i.e. functions of the ultra
violet cutoff $\Lambda_0$.
\end{enumerate}
The correlation ($=$ Schwinger- $=$ Green's- $=$ $n$-point-) functions of $n$ basic fields with
cutoff are defined by the expectation values
\ben\label{pathint}
\begin{split}
 \langle \varphi(x_1) \cdots \varphi(x_n) \rangle &\equiv  \mathbb{E}_{\mu^{\Lambda,\Lambda_{0}}} \bigg[\exp \bigg( -\frac{1}{\hbar}
L^{\Lambda_0}\bigg) \, \varphi(x_1) \cdots \varphi(x_n) \bigg] \bigg/ Z^{\Lambda,\Lambda_0} \\
& =
\int \d\mu^{\Lambda,\Lambda_0} \ \exp \bigg( -\frac{1}{\hbar}
L^{\Lambda_0}\bigg) \, \varphi(x_1) \cdots \varphi(x_n)\bigg/ Z^{\Lambda,\Lambda_0} \, .
\end{split}
\een
This expression is simply the standard Euclidean path-integral, but with the free part in the Lagrangian absorbed into the Gaussian measure $\d\mu^{\La,\Lao}$.
The normalization factor $Z^{\Lambda,\Lambda_0}$
is chosen so that $\langle 1 \rangle = 1$. To keep this factor finite one actually has to impose an additional volume cutoff, but the infinite volume limit can be taken without difficulty once we pass to perturbative
connected correlation functions, which we shall do in a moment.
For more details on this limit see~\cite{Kopper:2000qm,Muller:2002he}.
 The correct behavior of the running couplings $a^{\Lambda_{0}},b^{\Lambda_{0}},c^{\Lambda_{0}}$ is determined by deriving
first a differential equation in the parameter $\Lambda$ for the Schwinger functions, see eq.\eqref{fe},
and by then defining the couplings implicitly through the boundary
conditions for this equation given below in eqs.~\eqref{CAGBC1} and \eqref{CAGBC2}.

These differential equations, referred to from now on as flow equations, are written more conveniently in
terms of the hierarchy of ``connected, amputated
Green's functions'' (CAG's),
whose generating functional is given by the following 
convolution\footnote{The convolution is defined in general by
$(\mu^{\Lambda,\Lambda_0} \star F)(\varphi) =
\int \d\mu^{\Lambda,\Lambda_0}(\varphi') \ F(\varphi+\varphi')$.}
of the Gaussian measure with the exponentiated interaction,
\ben\label{CAGdef}
-L^{\Lambda, \Lambda_0} := \hbar \, \log \, \mu^{\Lambda,\Lambda_0}
\star \exp \bigg(-\frac{1}{\hbar} L^{\Lambda_0}
\bigg)- \hbar \log  Z^{\La,\Lao} \ .
\een
One can expand the functionals $L^{\Lambda,\Lambda_0}$ as formal power series in terms of Feynman diagrams with $l$ loops, $n$ external legs and propagator $C^{\Lambda,\Lambda_{0}}(p)$. One can show that, indeed, only connected diagrams contribute,
and the (free) propagators on the external legs are removed. While we will 
not use diagrammatic decompositions in terms of Feynman diagrams here, we will also analyze the functional (\ref{CAGdef})
in the sense of formal power series in $\hbar$ (''loop expansion''),
\ben\label{genfunc}
L^{\Lambda, \Lambda_0}(\varphi) := \sum_{n>0}^\infty
\sum_{l=0}^\infty {\hbar^l}
\int \d^4x_1 \dots \d^4 x_n\ \L^{\Lambda,\Lambda_0}_{n,l}(x_1, \dots, x_n)
\,
\varphi(x_1) \cdots \varphi(x_n) \, .
\een
No
statement
is made about the
convergence of the series in $\hbar$.

Translation invariance of the connected amputated functions in position space implies that their Fourier transforms, denoted
$\L^{\La,\Lao}_{n,l}(p_1, \dots, p_n)$, are supported
at $p_1+\ldots+p_n=0$. Therefore, we can write, by
abuse of notation
\ben
\L^{\Lambda,\Lambda_0}_{n,l}(p_1, \dots, p_n) = \delta^{4}{(\sum_{i=1}^n
p_i)}
\, \L^{\Lambda,\Lambda_0}_{n,l}(p_1, \dots, p_{n-1}) \, ,
\een
i.e. the momentum variable $p_{n}$ is determined in terms of the remaining $n-1$
independent momenta by momentum conservation. One can show that, as
functions of these remaining independent momenta, the connected
amputated
Green's functions
are smooth for $\La_{0}<\infty$, $\L^{\Lambda,\Lambda_0}_{n,l}(p_1, \dots, p_{n-1})
\in C^\infty(\mr^{4(n-1)})$.

To obtain the flow equations for the CAG's, we take the $\La$-derivative of
eq.\eqref{CAGdef}:
\ben
\partial_{\La} L^{\La,\Lao} \,=\,
\frac{\hbar}{2}\,
\langle\frac{\delta}{\delta \vp},\dot {C}^{\La}\star
\frac{\delta}{\delta \vp}\rangle L^{\La,\Lao}
\,-\,
\frac{1}{2}\, \langle \frac{\delta}{\delta
  \vp} L^{\La,\Lao},
\dot {C}^{\La}\star
\frac{\delta}{\delta \vp} L^{\La,\Lao}\rangle  +
\hbar \partial_\Lambda \log Z^{\La,\Lao} \ .
\label{fe}
\een
Here we use the following notation:
 We write $\,\dot {C}^{\La}\,$ for the derivative 
$\partial_{\La} {C}^{\La,\Lao}\,$, which, as we note,
does not depend on $\Lao$. Further, by $\langle\ ,\  \rangle$ we denote the standard scalar product in
$L^2(\mathbb{R}^4, \d^4 x)\,$, and $\star$ stands for
convolution in $\mr^4$. As an example, 
\ben
\langle\frac{\delta}{\delta \vp},\dot {C}^{\La}\star
\frac{\delta}{\delta \vp}\rangle = \int \d^4x\, \d^4y \ \dot {C}^{\La}(x-y)
\frac{\delta}{\delta \varphi(x)} \frac{\delta}{\delta \varphi(y)}
\een
is the ``functional Laplace operator''. 
%
The CAG's are defined uniquely as a solution to the differential equation \eqref{fe} only after we impose suitable boundary
conditions. These are\footnote{We restrict to BPHZ renormalization in this paper. Other choices are of course possible, and equally legitimate.}, using
the multi-index convention introduced above in ``Notations and Conventions'':
\ben\label{CAGBC1}
\partial^w_{\vec{p}} \L^{0,\Lambda_0}_{n,l}(\vec 0) = \de_{w,0}\ \de_{n,4}\
 \de_{l,0}\ \frac{g}{4!} \quad \text{for $n+|w|
\le 4$,}
\een
 as well as
\ben\label{CAGBC2}
\partial^w_{\vec{p}} \L^{\Lambda_0,\Lambda_0}_{n,l}(\vec p) = 0 \quad \text{for
$n+|w| > 4$.}
\een
Here $\delta_{a,b}$ is the Kronecker-delta. 
The  CAG's are then determined by integrating the
flow equations subject to these boundary conditions, see e.g.~\cite{Keller:1990ej,Muller:2002he}.

\subsection{Insertions of composite fields}\label{subsec:CAGint}

In the previous section we have defined Schwinger functions of products of the basic field. We now turn to the \emph{composite operators} (or ''composite fields''), which are given by the monomials
\ben\label{compop}
\O_{A}= \partial^{w_{1}}\varphi\cdots \partial^{w_{n}}\varphi\, , \quad A=\{n,w\}\, .
\een
Here $w=(w_{1},\ldots,w_{n})\in\mathbb{N}^{4n}$ is a multi-index (see also our notation and conventions section), and we denote the canonical dimension of such a field by
\ben
[A]:= n+\sum_{i}|w_{i}| \, .
\een
The Schwinger functions with insertions of composite operators are obtained by replacing the action $L^{\Lambda_0}$ with an
action containing additional sources, expressed through smooth functionals.
Particular examples of such functionals are {\em local} ones. Any
such local functional can by definition be
written as
\ben
F (\varphi)= \sum_A \int \d^4 x \  \O_A(x) \ f^A(x) \,\, ,
\quad f^A \in C^\infty_0(\mr^4) \, ,
\een
where the composite operators $\O_A$ are as in eq.~\eqref{compop} and
where the sum is finite. Recall that we may restrict attention to composite fields~\eqref{compop} with
an even number of factors of $\varphi$ as a result of our symmetry requirement $\varphi\to-\varphi$.
We now modify the action $L^{\Lambda_0}$ by adding sources $f^A$ as follows:
\ben
L^{\Lambda_0}\to L^{\Lao}_F:=L^{\Lambda_0}- F - \sum_{j=0}^\infty
B^{\Lambda_0}_j(\underbrace{F \otimes \cdots \otimes F}_j) 
\een
Here the last term represents the counter terms which are needed to eliminate the additional divergences arising from composite field insertions in the limit $\Lambda_0 \to \infty$. 
For each $j$ it is a linear functional\footnote{$C^\infty(\S(\mr^4))$ denotes the space
of smooth (in the Frechet sense) functionals. All our functionals are actually
formal power series in $\hbar$, so we should write more accurately $C^\infty(\S(\mr^4))[[\hbar]]$ for the space appearing below.}
\ben
B_j^{\Lambda_0}: [C^\infty(\S(\mr^4))]^{\otimes j}
\to C^\infty(\S(\mr^4))
 \  ,
\een
that is symmetric, and of order $O(\hbar)$. These functionals will be defined implicitly below through a flow 
equation and boundary conditions, see eqs.~\eqref{BCL1} and \eqref{BCL2}. To obtain the Schwinger functions with insertions of $r$~composite operators we now simply take functional
derivatives with respect to the sources, setting the sources
$f^{A_i} =0$ afterwards:
\ben
\langle \O_{A_1}(x_1) \cdots \O_{A_r}(x_r)  \rangle :=\hbar^r
\frac{\delta^r}{\delta f^{A_1}(x_1) \dots \delta f^{A_r}(x_r)}
\ (Z^{\La,\Lao})^{-1} \int \d\mu^{\Lambda,\Lambda_0}
\exp \bigg(-\frac{1}{\hbar}  L^{\Lambda_0}_F(\varphi)
\bigg)\biggr|_{ f^{A_i}=0} 
\een
Note that the CAG's discussed in the previous section are a special case
of this equation; there we
take $F = \int \d^4 x \ f(x) \ \varphi(x)$, and we have
$B^{\Lambda_0}_j(F^{\otimes j})=0$, because no extra counter terms
are required for this insertion. As above, we can define
a corresponding effective action
as
\ben\label{genfuncins}
-L^{\Lambda,\Lambda_0}_F := \hbar \, \log \, \mu^{\Lambda,\Lambda_0}
\star \exp \bigg(-\frac{1}{\hbar} ( L^{\Lambda_0}
- F - \sum_{j=0}^\infty B^{\Lambda_0}_j(F^{\otimes j})) \bigg)
- \log Z^{\La,\Lao}
\een
which now depends on the sources $f^{A_i}$, as well as on $\varphi$. From this modified effective action we determine the generating functionals of the CAG's with
$r$ operator insertions:
\ben
L^{\Lambda,\Lambda_0}(\O_{A_1}(x_1) \otimes \dots \otimes \O_{A_r}(x_r))
:=
\frac{\delta^r \ L^{\Lambda,\Lambda_0}_F}{\delta f^{A_1}(x_1) \dots \delta
  f^{A_r}(x_r)}
\,  \Bigg|_{f^{A_i} =  0} \, .
\een
The CAG's with insertions defined this way are multi-linear, as indicated by the
tensor product notation, and symmetric in the insertions. We can also expand the CAG's with insertions in $\varphi$ and $\hbar$ again (in momentum space):
\ben\label{genfunctinsert}
L^{\Lambda,\Lambda_0} \bigg( \bigotimes_{i=1}^r \O_{A_i}(x_i) \bigg) =\sum_{n,l \ge 0} {\hbar^l} \int \d^4p_1\dots \d^4p_n \
\L^{\Lambda,\Lambda_0}_{n,l}\bigg( \bigotimes_{i=1}^r \O_{A_i}(x_i);
p_1,
\dots, p_n \bigg)
\prod_{j=1}^n \hat{\vp}(p_j) 
\een
Due to the insertions in $\L_{n,l}^{\La,\Lao} (\otimes_j
\O_{A_j}(x_j),
\vec p)$, there is no restriction on the momentum set
$\vec p$ in this case. Translation invariance, however, implies that the CAG's 
with insertions at a translated set of points $x_j + y\,$ are
obtained
from those
at $y=0\,$ through multiplication by $\e^{iy\sum_{i=1}^n p_i}$, i.e.
\ben\label{CAGtrans}
\L^{\Lambda,\Lambda_0}_{n,l}\bigg( \bigotimes_{i=1}^r \O_{A_i}(x_i+y);
p_1,
\dots, p_n \bigg)=
\e^{iy\sum_{i=1}^n p_i}\, \L^{\Lambda,\Lambda_0}_{n,l}\bigg( \bigotimes_{i=1}^r \O_{A_i}(x_i);
p_1,
\dots, p_n \bigg)\, .
\een
Note also that only moments of CAG's with an even number $n$ are non-vanishing, again by our $\mathbb{Z}_{2}$-symmetry requirement.
The flow equation for the CAG's with insertions reads:
\ben\label{FEN}
\begin{split}
\partial_{\Lambda}L^{\Lambda,\Lambda_{0}}(\bigotimes_{i=1}^{N}\O_{A_{i}})=&\frac{\hbar}{2}\bra \varp \, ,\, \dot{C}^{\Lambda}\star \varp  \ket\, L^{\Lambda,\Lambda_{0}}(\bigotimes_{i=1}^{N}\O_{A_{i}})\\
-&\frac{1}{2}
\sum_{\atop{I_1 \cup I_2 = \{1,...,N \} } {I_{1}\cap I_{2}=\emptyset }}\bra \varp  L^{\Lambda,\Lambda_{0}}(\bigotimes_{i\in I_{1}}\O_{A_{i}}) \, ,\, \dot{C}^{\Lambda}\star \varp L^{\Lambda,\Lambda_{0}}(\bigotimes_{j\in I_{2}}\O_{A_{j}}) \ket\, ,
\end{split}
\een
In the second line it is understood that in the case $I=\emptyset$ we obtain the CAG's without insertions, i.e. $L^{\Lambda,\Lambda_{0}}(\bigotimes_{i\in I=\emptyset}\O_{A_{i}}):=L^{\Lambda,\Lambda_{0}}$. We also suppressed the coordinate space variables $(x_{1},\dots, x_{N})$ by writing $\O_{A_{i}}$ instead of $\O_{A_{i}}(x_{i})$. This convention will also be used regularly in the following for the sake of brevity. 

%
Note that the flow equation for the CAG's with $N\geq 2$ insertions involves inhomogeneities (called \emph{source terms} in the following) in the last line, which are quadratic in the CAG's with less than $N$ insertions. Therefore, we have to ascend in the number
of insertions if we want to integrate the flow equations \eqref{FEN}.
To complete the definition of the CAG's with insertions, we again have to specify boundary conditions on the corresponding flow equation. The simplest choice in the case of $N\geq 2$ insertions is
\ben\label{BCunsub}
\partial^{w}_{\vec{p}}\LscIr_{n,l}(\bigotimes_{i=1}^{N}\O_{A_{i}}(x_{i}); \vec{p})=0\quad \text{for all } w,n,l.
\quad
\een
For CAG's with one insertion we choose again BPHZ renormalization conditions\footnote{ See~\cite{Keller:1991bz,Keller:1992by} for a more detailed motivation of these boundary conditions. It should be mentioned that our definition of the functionals $L^{\Lambda,\Lambda_{0}}(\O_{A})$ differs from the one given in those papers by a minus sign. }
\ben\label{BCL1}
\partial^{w}_{\vec{p}}\LscO_{n,l}(\O_{A}(0); \vec{0})= i^{|w|}w! \delta_{w,w'}\delta_{n,n'}\delta_{l,0} \quad \text{ for }n+|w|\leq [A]
\een
\ben\label{BCL2}
\partial^{w}_{\vec{p}}\LscIr_{n,l}(\O_{A}(0); \vec{p})=0\quad \text{ for }n+|w|>[A] \quad .
\een
Although the connected amputated Green's functions (CAG's) with insertions can be used as the basic building blocks of the correlation functions, it will turn out to be useful to also consider certain non-connected versions of these, called ''AG's with insertions'' in the following. They are defined as
\ben\label{GD}
G^{\Lambda,\Lambda_{0}}(\bigotimes_{i=1}^{N}\O_{A_{i}})
:=\sum_{\alpha=1}^{N} \sum_{\atop{ I_{1}\cup \ldots \cup I_{\alpha}=\{1,\ldots, N\} }{I_{i}\neq \emptyset, I_{i}\cap I_{j}=\emptyset} }\prod_{i=1}^{\alpha}\,(-\hbar)^{N-\alpha} L^{\Lambda, \Lambda_{0}}(\bigotimes_{j\in I_i}\O_{A_{j}})\, .
\een
Note that the case $N=1$ just reduces to the CAG's with one insertion, i.e. $G^{\Lambda,\Lambda_{0}}(\O_{A})=L^{\Lambda,\Lambda_{0}}(\O_{A})$. Again, we also consider the expanded quantities in $\hbar$ and $\hat\varphi$; these are denoted in
the present case as ${\mathcal G}^{\La,\Lao}_{n,l}(\otimes_{i=1}^N \O_{A_i}, \vec p)$, where
as usual, $l$ indicates the power of $\hbar$, and $n$ the power of $\hat\varphi$. As the name suggests, these are the
amputated versions of the Schwinger (=Green's) functions\footnote{Strictly speaking, the functionals $G^{\Lambda,\Lambda_{0}}(\bigotimes_{i=1}^{N}\O_{A_{i}})$ do not generate \emph{all} the amputated Feynman diagrams with operator insertions, since connected pieces without any operator insertion are excluded, see also eq.\eqref{AGGreens}. For lack of a better name, we will however continue to refer to these functionals as amputated Green's functions with insertions by a slight abuse of language. },
\ben\label{AGGreens}
\begin{split}
&\Big\bra\prod_{i=1}^N \O_{A_i}(x_i) \  \prod_{j=1}^n \hat \varphi(p_j) \Big\ket
\ \prod_{k=1}^n (C^{\Lambda,\Lambda_{0}}(p_{k}))^{-1}
\\
&= \sum_{j=1}^{n}\sum_{\substack{I_{1}\cup\ldots\cup I_{j}=\{1,\ldots, n\}\\ I_{i}\cap I_{j}=\emptyset \\ l_{1}+\ldots+l_{j}=l\geq 0 }}\, \hbar^{n+l+1-j}\,  {\mathcal G}^{\La,\Lao}_{|I_{1}|,l_{1}}(\otimes_{i=1}^N \O_{A_i}(x_i), \vec p_{I_{1}}) \, \bar{\L}_{|I_{2}|,l_{2}}^{\Lambda,\Lambda_{0}}(\vec{p}_{I_{2}}) \cdots  \bar{\L}_{|I_{j}|,l_{j}}^{\Lambda,\Lambda_{0}}(\vec{p}_{I_{j}})
\end{split}
\een
where $\bar\L^{\La,\Lambda_{0}}_{n,l}$ are the expansion coefficients of the generating functional $\bar{L}^{\La,\Lambda_{0}}(\varphi)=-L^{\La,\Lambda_{0}}(\varphi)+\frac{1}{2}\bra \varphi,\, (C^{\La,\Lambda_{0}})^{-1}\star\varphi \ket$ without the momentum conservation delta functions taken out.

By contrast to the CAG's, the AG's satisfy linear {\em homogeneous} flow equations,
\ben\label{GFE}
\partial_{\Lambda}G^{\Lambda,\Lambda_{0}}(\bigotimes_{i=1}^{N}\O_{A_{i}})=\frac{\hbar}{2}\bra \varp , \dot{C}^{\Lambda}\star\varp \ket\, G^{\Lambda,\Lambda_{0}}(\bigotimes_{i=1}^{N}\O_{A_{i}})- \bra \varp G^{\Lambda,\Lambda_{0}}(\bigotimes_{i=1}^{N}\O_{A_{i}}), \dot{C}^{\Lambda}\star\varp L^{\Lambda,\Lambda_{0}} \ket \ .
\een
This property is a welcome simplification, which is unfortunately
counterbalanced by the fact that the boundary conditions for the AG's are more complicated.
Therefore, as a compromise between simple flow equation and simple boundary conditions,  we will not work with the full AG's in the following, but instead define the slightly modified objects
 \ben\label{GTild}
\hbar F^{\Lambda,\Lambda_{0}}(\bigotimes_{i=1}^{N}\O_{A_{i}}):=G^{\Lambda,\Lambda_{0}}(\bigotimes_{i=1}^{N}\O_{A_{i}})- \prod_{i=1}^{N} L^{\Lambda,\Lambda_{0}}(\O_{A_{i}})\, .
\een
Using the definitions of the CAG's given above, these functionals are seen to obey the flow equation 
\ben\label{GTildFE}
\begin{split}
\partial_{\Lambda}F^{\Lambda,\Lambda_{0}}(\bigotimes_{i=1}^{N}\O_{A_{i}})=&\frac{\hbar}{2}\bra \varp , \dot{C}^{\Lambda}\star\varp \ket\, F^{\Lambda,\Lambda_{0}}(\bigotimes_{i=1}^{N}\O_{A_{i}})- \bra \varp F^{\Lambda,\Lambda_{0}}(\bigotimes_{i=1}^{N}\O_{A_{i}}), \dot{C}^{\Lambda}\star\varp L^{\Lambda,\Lambda_{0}} \ket\\
+&\sum_{1\leq i<j\leq N} \bra \varp L^{\Lambda,\Lambda_{0}}(\O_{A_{i}}), \dot{C}^{\Lambda}\star\varp L^{\Lambda,\Lambda_{0}}(\O_{A_{j}})   \ket \prod_{r\in\{1,\ldots,N\}\setminus\{i,j\}} L^{\Lambda,\Lambda_{0}}(\O_{A_{r}})
\end{split}
\een
and the trivial boundary conditions
\ben\label{BCFbare}
\partial^{w}_{\vec{p}} \F^{\Lambda_{0},\Lambda_{0}}_{n,l}(\bigotimes_{i=1}^{N}\O_{A_{i}}; \vec{p})=0\quad \text{for all }n,l,w,
\een
with a calligraphic letter $\F^{\La,\Lao}_{n,l}$ denoting as usual the objects appearing in the expansion
of $F^{\La,\Lao}$ in powers of $\hbar, \varphi$. In terms of Feynman diagrams we may interpret these functionals as follows: As mentioned above, the $G$-functionals correspond to the (not necessarily connected) amputated Feynman graphs with $N$ extra vertices corresponding to the operator insertions. On the other hand, the $F$-functionals correspond to the subset of these diagrams where \emph{at least two} of the operator insertions belong to the same connected component of the graph. 
Like the CAG's with multiple insertions, the $F$-functionals are divergent on the partial diagonals (when $\Lambda_0 \to \infty$), i.e. whenever two or more spacetime arguments coincide. Since the CAG's with one insertion are smooth in the spacetime argument [see equation \eqref{CAGtrans}], the decomposition \eqref{GTild} separates the contributions to $G$ which are regular in the spacetime arguments from those which are singular at short distances. We also note that translation invariance again implies
\ben\label{Ftrans}
\F^{\Lambda,\Lambda_{0}}_{n,l}(\bigotimes_{i=1}^{N}\O_{A_{i}}(x_{i}); p_{1},\ldots,p_{n})=\e^{i y (p_{1}+\ldots+p_{n})}\F^{\Lambda,\Lambda_{0}}_{n,l}(\bigotimes_{i=1}^{N}\O_{A_{i}}(x_{i}-y); p_{1},\ldots,p_{n})\, .
\een

\subsection{Regularization of Schwinger functions with insertions}\label{sec:regCAG}

The purpose of introducing a UV-cutoff is that as long as we keep $\Lao$ finite, the CAG's with insertions depend smoothly on the points
$x_1, \dots, x_N$, as well as on the momenta $p_1, \dots, p_n$. In the limit $\Lao \to \infty$, smoothness in
the $x_i$'s however is lost, and the CAG's develop singularities for configurations such that some of the points $x_i$
coincide.  This is of course not a problem, nor unexpected--the Green's functions in quantum field theory are usually singular for coinciding points--reflecting the singular nature of the operators themselves. 
 In the following we will discuss certain \emph{regularized} (sometimes also called \emph{oversubtracted}) versions of the Green's functions with insertions defined in the previous section, which possess a higher degree of regularity in the spacetime arguments. As we will see later, these regularized Green's functions with insertions play a crucial role in the definition and application of the operator product expansion. Similar objects were defined by Zimmermann under the name \emph{normal products} of quantum fields in the diagrammatic approach to perturbation theory~\cite{zimbrand}.
 
A method for improving regularity of Green's functions with operator insertions was developed, in the context of the present framework, in \cite{Keller:1992by} (up to two insertions), \cite{Holland:2012vw} (3 insertions) and~\cite{Holland:2013we} ($N$ insertions). For the purposes of the present paper, the following definition will be useful:
\begin{defn}[Regularized AG's]\label{def:GDfct}
The \emph{amputated Green's functions (AG's) with insertions and regularization} are defined for any $D\geq -1$ as
 \ben\label{GDdef}
{G}_{D}^{\Lambda,\Lambda_{0}}(\bigotimes_{i=1}^{N}\O_{A_{i}}):=
\hbar F_{D}^{\Lambda,\Lambda_{0}}(\bigotimes_{i=1}^{N}\O_{A_{i}})+\prod_{i=1}^{N} L^{\Lambda,\Lambda_{0}}(\O_{A_{i}})\ ,
\een
where the functionals $F^{\Lambda,\Lambda_{0}}_{D}$ are required to satisfy the flow equation \eqref{GTildFE} and the boundary conditions
\ben\label{Gbound1}
\partial^{w}_{\vec{p}}\F^{0,\Lambda_{0}}_{D,n,l}(\bigotimes_{i=1}^{N}\O_{A_{i}}(x_{i}); \vec{0})\Big|_{x_{N}=0}=0\quad \text{ for }n+|w|\leq D
\een
\ben\label{Gbound2}
\partial^{w}_{\vec{p}}\F^{\Lambda_{0},\Lambda_{0}}_{D,n,l}(\bigotimes_{i=1}^{N}\O_{A_{i}}(x_{i}); \vec{p})\Big|_{x_{N}=0}=0\quad \text{ for }n+|w|>D\, .
\een
\end{defn}
Evidently, $F_{D=-1}^{\Lambda,\Lambda_{0}}(\bigotimes_{i=1}^{N}\O_{A_{i}})=F^{\Lambda,\Lambda_{0}}(\bigotimes_{i=1}^{N}\O_{A_{i}})$ are the functionals without regularization. Note that in the $N=2$ case, $F_{D}$ reduces to the CAG with two insertions, i.e.
\ben\label{CAGFeq}
F^{\Lambda,\Lambda_{0}}_{D}(\O_{A}(x)\otimes\O_{B}(0))=- L^{\Lambda,\Lambda_{0}}_{D}(\O_{A}(x)\otimes\O_{B}(0))
\een
since both sides of the equation share the same flow equation and boundary conditions. For $N\geq 3$, however, such a simple relation does not seem to exist.

\paragraph{Properties of regularized AG's with insertions:} 
It follows from the bounds derived in corollary \ref{corbound} (see page \pageref{corbound}) that, up to any order in the loop expansion (i.e. expansion in $\hbar$), the scaling degree (see
eq.~\eqref{sddef}) of the regularized AG's with insertions, is bounded by\footnote{The same result is also 
true for the functionals $F^{\Lambda,\Lambda_{0}}_{D}(\otimes_{i=1}^{N}\O_{A_{i}})$.}
\ben\label{SD1}
\operatorname{sd}(G_{D}^{\Lambda,\Lambda_{0}}(\bigotimes_{i=1}^{N}\O_{A_{i}}))\leq [A_{1}]+\ldots+[A_{N}]-D-1\, ,
\een 
for all $\Lambda,\Lambda_0$, including the case of main interest, $\Lambda_0=\infty, \Lambda=0$, i.e. upon removal of the cutoffs. 
This confirms the role of $D$ as a regularization parameter for the singularity on the total diagonal $x_{1}=\ldots=x_{N}$. The result is consistent with the common opinion that Schwinger functions with insertions $\O_{A_{1}}(\delta x_{1}),\ldots, \O_{A_{N}}(\delta x_{N})$ and without additional regularization (i.e. $D=-1$) should scale as $\delta^{-([A_{1}]+\ldots+[A_{N}])}\cdot (\log\delta)^{n}$ as $\delta\to 0$ (uniformly as $\Lambda_0 \to \infty$). To our knowledge, this property has not been rigorously established in the present framework before\footnote{In~\cite{Keller:1992by} the scaling behavior \eqref{SD1} has been anticipated, but only informal arguments for its validity were given. }, so the result may be of interest in its own right. 

Concerning the infrared (i.e. large distance) behavior of the functionals $F_{D}^{\Lambda,\Lambda_{0}}(\bigotimes_{i=1}^{N}\O_{A_{i}})$, we will show below in theorem~\ref{thmbound} (see page \pageref{thmbound}) that for
 any $R\in\mathbb{R_{+}}$ 
and any $s\in\mathbb{N}_{0}$
\ben\label{FIR}
\begin{split}
|\F_{D,2n,l}^{\Lambda,\Lambda_{0}}(\bigotimes_{i=1}^{N}\O_{A_{i}}(R x_{i});\vec{p})| &\leq  R^{D-D'-s}\cdot   (m+\La)^{D-s-2n}\,   \Pol_{1}\left(\log\frac{\Lambda+m}{m}\right) \Pol_{2}\left(\frac{|\vec{p}|}{\Lambda+m}\right) \\
&\times \frac{  \max\limits_{1\leq i\leq N}|x_{i}|^{D+1} \cdot (m \min\limits_{1\leq i<j\leq N} |x_{i}-x_{j}|)^{-s} }{\min\limits_{1\leq i<j\leq N}|x_{i}-x_{j}|^{D'+1}}\, ,
\end{split}
\een
where $\Pol_{i}(p)$ are polynomials in $p$ with positive coefficients. In other words, if we scale the spacetime arguments by a large factor $R$, then the $F^{\La,\Lambda_{0}}$-functionals decay more rapidly than any inverse power of $R$. This property is of course a consequence of the fact that we are dealing with a massive theory.

We also note for later that the amputated Green's functions with operator insertions satisfy versions of the so called \emph{Lowenstein rules}:
\ben\label{Low1}
\partial_{x}^{v}\, L^{\Lambda,\Lambda_{0}}(\O_{A}(x)) = L^{\Lambda,\Lambda_{0}}(\partial_{x}^{v}\O_{A}(x))
\een
\ben
\partial_{x_{a}}^{v}\, G^{\Lambda,\Lambda_{0}}_{D}(\otimes_{i=1}^{N}\O_{A_{i}}(x_{i}))\Big|_{x_{N}=0} = G^{\Lambda,\Lambda_{0}}_{D}\left(\partial_{x_{a}}^{v}\otimes_{i=1}^{N}\O_{A_{i}}(x_{i})\right)\Big|_{x_{N}=0} , \quad 1\leq a <  N
\een
\ben\label{Low2}
\left(\partial_{x_{1}}+\ldots+\partial_{x_{N}} \right)^{v} G^{\Lambda,\Lambda_{0}}_{D}(\otimes_{i=1}^{N}\O_{A_{i}}(x_{i})) = G^{\Lambda,\Lambda_{0}}_{D+|v|}\left(\left(\partial_{x_{1}}+\ldots+\partial_{x_{N}} \right)^{v}\otimes_{i=1}^{N}\O_{A_{i}}(x_{i})\right) ,
\een
where $v\in\mathbb{N}^{4}$. These relations can be verified by comparing the flow equation and boundary conditions satisfied by either side of the respective equation (see~\cite{Keller:1991bz,Keller:1992by,Hollands:2011gf,Holland:2013we} for more details).

\subsection{Regularization of subdivergences}\label{subsec:subdiv}

In the previous section we have outlined a procedure that allows us to improve the total scaling degree of the amputated Green's functions ${G}^{\Lambda,\Lambda_{0}}(\bigotimes_{i=1}^{N}\O_{A_{i}})$. In other words, we are able to control the singular behavior of these functionals with respect to the total diagonal $x_{1}=\ldots=x_{N}$.  Below, however, we want to remove only divergences associated to the partial diagonals of a subset of the spacetime arguments $x_{1},\ldots,x_{N}$. In the present section we will define this regularization of subdivergences. 

It is a priori far from clear how to generalize the strategy of the previous section to subdivergences. The following lemma provides a decomposition of the AG's that will be helpful for this purpose:

\begin{lemma}\label{lemdecomp}
For any $N\geq2$ and $M<N$ the following decomposition holds:
\ben\label{lemdecompeq}
\begin{split}
{G}^{\Lambda,\Lambda_{0}}(\bigotimes_{i=1}^{N}\O_{A_{i}})&={G}^{\Lambda,\Lambda_{0}}(\bigotimes_{i=1}^{M}\O_{A_{i}})\ {G}^{\Lambda,\Lambda_{0}}(\bigotimes_{i=M+1}^{N}\O_{A_{i}})\\
&+\hbar  H^{\Lambda,\Lambda_{0}}(\bigotimes_{i=1}^{M}\O_{A_{i}} ; \bigotimes_{i=M+1}^{N}\O_{A_{i}})
\end{split}
\een
Here the functionals $H^{\Lambda,\Lambda_{0}}$ are defined through the flow equation
\ben\label{HFE}
\begin{split}
\partial_{\Lambda}H^{\Lambda,\Lambda_{0}}(\bigotimes_{i=1}^{M}\O_{A_{i}} ; \bigotimes_{i=M+1}^{N}\O_{A_{i}})=\frac{\hbar}{2}\bra \varp , \dot{C}^{\Lambda}\star\varp \ket\, H^{\Lambda,\Lambda_{0}}(\bigotimes_{i=1}^{M}\O_{A_{i}} ; \bigotimes_{i=M+1}^{N}\O_{A_{i}})&\\
- \bra \varp H^{\Lambda,\Lambda_{0}}(\bigotimes_{i=1}^{M}\O_{A_{i}} ; \bigotimes_{i=M+1}^{N}\O_{A_{i}}), \dot{C}^{\Lambda}\star\varp L^{\Lambda,\Lambda_{0}} \ket&\\
+\bra \varp {G}^{\Lambda,\Lambda_{0}}(\bigotimes_{i=1}^{M}\O_{A_{i}}), \dot{C}^{\Lambda}\star\varp {G}^{\Lambda,\Lambda_{0}}(\bigotimes_{i=M+1}^{N}\O_{A_{i}})   \ket &
\end{split}
\een
and the boundary conditions
\ben\label{BCH}
\partial^{w}_{\vec{p}} \H^{\Lambda_{0},\Lambda_{0}}_{n,l}(\bigotimes_{i=1}^{M}\O_{A_{i}} ; \bigotimes_{i=M+1}^{N}\O_{A_{i}};\vec{p})=0\quad \text{for all }n,l,w.
\een
\end{lemma}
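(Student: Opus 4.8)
The plan is to show that the functional
\ben
K^{\La,\Lao}:=G^{\La,\Lao}(\bigotimes_{i=1}^{N}\O_{A_{i}})-G^{\La,\Lao}(\bigotimes_{i=1}^{M}\O_{A_{i}})\,G^{\La,\Lao}(\bigotimes_{i=M+1}^{N}\O_{A_{i}})
\een
satisfies the flow equation \eqref{HFE} with $H$ replaced by $\hbar^{-1}K$, together with the boundary condition \eqref{BCH}. Since the flow hierarchy \eqref{HFE}--\eqref{BCH} admits, order by order in $\hbar$ and in the number of fields, a unique solution obtained by integrating downward from $\La=\Lao$ --- the inhomogeneity $\bra\varp G^{\La,\Lao}(\bigotimes_{i=1}^{M}\O_{A_{i}}),\dot{C}^{\La}\star\varp G^{\La,\Lao}(\bigotimes_{i=M+1}^{N}\O_{A_{i}})\ket$ being already determined by quantities with fewer insertions, and the term linear in $H$ feeding only on lower orders, exactly as in the construction of the CAG's reviewed above --- this will identify $\hbar H^{\La,\Lao}$ with $K^{\La,\Lao}$ and hence prove \eqref{lemdecompeq}.

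First I would check the boundary condition. Evaluating at $\La=\Lao$ and using that, by \eqref{BCunsub}, the CAG's with two or more insertions vanish identically there, the partition sum in \eqref{GD} collapses to its all-singletons term, so that $G^{\Lao,\Lao}(\bigotimes_{i=1}^{N}\O_{A_{i}})=\prod_{i=1}^{N}L^{\Lao,\Lao}(\O_{A_{i}})$, and likewise for the two factors on the right of \eqref{lemdecompeq}. Hence both summands in $K^{\Lao,\Lao}$ reduce to $\prod_{i=1}^{N}L^{\Lao,\Lao}(\O_{A_{i}})$, so $K^{\Lao,\Lao}=0$, which is \eqref{BCH}.

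Next I would differentiate $K^{\La,\Lao}$ in $\La$ and insert the linear homogeneous flow equation \eqref{GFE} for each of the three $G$-functionals. The one point requiring care is that the functional Laplacian is a second order operator, so on a product $F_{1}F_{2}$ of functionals it acts as
\ben
\bra\varp,\dot{C}^{\La}\star\varp\ket(F_{1}F_{2})=\big(\bra\varp,\dot{C}^{\La}\star\varp\ket F_{1}\big)F_{2}+F_{1}\big(\bra\varp,\dot{C}^{\La}\star\varp\ket F_{2}\big)+2\,\bra\varp F_{1},\dot{C}^{\La}\star\varp F_{2}\ket\, ,
\een
the last, ``mixed'' term arising because $\dot{C}^{\La}$ is an even function; the first order piece $\bra\varp\,\cdot\,,\dot{C}^{\La}\star\varp L^{\La,\Lao}\ket$ of \eqref{GFE}, by contrast, distributes over $F_{1}F_{2}$ by the ordinary Leibniz rule. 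Collecting everything, the ``mixed'' term coming from $\partial_{\La}\bigl(G^{\La,\Lao}(\bigotimes_{i=1}^{M}\O_{A_{i}})\,G^{\La,\Lao}(\bigotimes_{i=M+1}^{N}\O_{A_{i}})\bigr)$ reproduces, after the subtraction in $K^{\La,\Lao}$, exactly the source term $\hbar\,\bra\varp G^{\La,\Lao}(\bigotimes_{i=1}^{M}\O_{A_{i}}),\dot{C}^{\La}\star\varp G^{\La,\Lao}(\bigotimes_{i=M+1}^{N}\O_{A_{i}})\ket$ in \eqref{HFE}, while the remaining contributions reassemble into $\frac{\hbar}{2}\bra\varp,\dot{C}^{\La}\star\varp\ket K^{\La,\Lao}-\bra\varp K^{\La,\Lao},\dot{C}^{\La}\star\varp L^{\La,\Lao}\ket$. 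This is exactly $\hbar$ times the right-hand side of \eqref{HFE} with $H\mapsto\hbar^{-1}K$.

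I do not expect a genuine obstacle: this is the standard ``same flow equation, same boundary condition'' uniqueness argument of the flow-equation formalism, and the rest is bookkeeping. The only delicate steps are (i) keeping track of the ``mixed'' cross term produced by the second order functional Laplacian acting on the product --- which is precisely what the extra source term in \eqref{HFE} is designed to absorb --- and (ii) the evaluation of \eqref{GD} at $\La=\Lao$, where one must use the boundary condition \eqref{BCunsub} to discard all multi-insertion CAG's.
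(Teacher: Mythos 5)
Your proposal is correct and follows essentially the same route as the paper, which likewise proves the lemma by checking that both sides of \eqref{lemdecompeq} obey the same flow equation and boundary conditions (deferring the explicit computation to \cite{Holland:2013we}). Your bookkeeping of the two delicate points --- the cross term $2\bra\varp F_{1},\dot{C}^{\La}\star\varp F_{2}\ket$ produced by the second-order functional Laplacian on the product, which supplies the source term of \eqref{HFE}, and the collapse of the partition sum \eqref{GD} at $\La=\Lao$ via \eqref{BCunsub} --- is exactly what the omitted computation requires.
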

%
\begin{rem}
The lemma can also be understood diagrammatically. On the l.h.s. of equation \eqref{lemdecompeq} we have all (amputated) diagrams with insertion of extra vertices corresponding to the composite operators $\O_{A_{1}},\ldots,\O_{A_{N}}$. The first term on the r.h.s. stands for the factorized contributions, where the diagrams containing the $\O_{A_{1}},\ldots,\O_{A_{M}}$ vertices are disconnected from the diagrams containing the $\O_{A_{M+1}},\ldots,\O_{A_{N}}$ vertices. The second term on the r.h.s. then contains all contributions where at least one pair of vertices $\O_{A_{i}},\O_{A_{j}}$ with $1\leq i\leq M<j\leq N$ belong to the same connected diagram.
\end{rem}
To prove lemma \ref{lemdecomp} one simply checks that both sides of eq.\eqref{lemdecompeq} satisfy the same flow equation and boundary conditions. The explicit proof can be found in~\cite{Holland:2013we}.

The decomposition provided by lemma \ref{lemdecomp} suggests the following definition:
\begin{defn}[Partially regularized AG's]\label{def:AGpart} Let $M<N\geq 2$.
We denote the amputated Green's functions with operator insertions $\O_{A_{1}}(x_{1}),\ldots,\O_{A_{N}}(x_{N})$, regularized to degree $D\leq [A_{1}]+\ldots+[A_{M}]$ in the coordinates $x_{1},\ldots,x_{M}$, by $G^{\Lambda,\Lambda_{0}}([\bigotimes_{i=1}^{M}\O_{A_{i}}]_{D};\bigotimes_{M+1}^{N}\O_{A_{i}})$. These functionals are defined as
\ben\label{AGpartdef}
\begin{split}
G^{\Lambda,\Lambda_{0}}\left([\bigotimes_{i=1}^{M}\O_{A_{i}}]_{D}; \bigotimes_{M+1}^{N}\O_{A_{i}}\right)&:={G}_{D}^{\Lambda,\Lambda_{0}}(\bigotimes_{i=1}^{M}\O_{A_{i}})\ {G}^{\Lambda,\Lambda_{0}}(\bigotimes_{i=M+1}^{N}\O_{A_{i}})\\
&+\hbar H^{\Lambda,\Lambda_{0}}\left([\bigotimes_{i=1}^{M}\O_{A_{i}}]_{D} ; \bigotimes_{i=M+1}^{N}\O_{A_{i}}\right)
\end{split}
\een
where $H^{\Lambda,\Lambda_{0}}\left([\bigotimes_{i=1}^{M}\O_{A_{i}}]_{D} ; \bigotimes_{i=M+1}^{N}\O_{A_{i}}\right)$ is defined through the flow equation \eqref{HFE} with $G^{\Lambda,\Lambda_{0}}(\bigotimes_{i=1}^{M}\O_{A_{i}})$ replaced by $G_{D}^{\Lambda,\Lambda_{0}}(\bigotimes_{i=1}^{M}\O_{A_{i}})$, subject to the boundary conditions \eqref{BCH}. 
\end{defn}
\paragraph{Properties of partially regularized AG's:}
Using an inductive scheme based on the flow equations, we show in appendix \ref{appartB} that the parameter $D$ does indeed allow us to improve regularity on the partial diagonal $x_{1}=\ldots=x_{M}$, while the behavior on the other diagonals remains unaffected. More precisely, it follows from corollary \ref{corboundH} (on page \pageref{corboundH}) that the 
scaling degree (see eq.~\eqref{sddef}) of these functionals is bounded by
\ben\label{SDpartbad}
\operatorname{sd}\left(G^{\Lambda,\Lambda_{0}}\left([\bigotimes_{i=1}^{M}\O_{A_{i}}]_{D};\bigotimes_{M+1}^{N}\O_{A_{i}}\right)\right)\leq [A_{1}]+\ldots+[A_{N}]\, ,
\een 
including at $\Lambda_0=\infty, \Lambda=0$, i.e. upon removal of the cutoffs.
This is the same estimate as for the AG's without regularization [see \eqref{SD1}]. On the other hand, the scaling degree with respect to the partial diagonal $x_{1}=\ldots=x_{M}$
satisfies the bound (see again corollary \ref{corboundH})
\ben\label{SDpart}
\operatorname{sd}_{\{1,\ldots, M\}}\left(G^{\Lambda,\Lambda_{0}}\left([\bigotimes_{i=1}^{M}\O_{A_{i}}]_{D};\bigotimes_{M+1}^{N}\O_{A_{i}}\right)\right)\leq [A_{1}]+\ldots+[A_{M}]-D-1\, ,
\een
which confirms the role of $D$ as a regularization parameter with respect to the partial diagonal $x_{1}=\ldots=x_{M}$. We finally also note that the $H^{\Lambda,\Lambda_{0}}$-functionals decay rapidly for large separation of the set of points $x_{1},\ldots,x_{M}$ from the set of points $x_{M+1},\ldots, x_{N}$, i.e. we have for $|y|\gg \max_{i}|x_{i}|$
\ben\label{Hinfrared}
\begin{split}
 &\left| \H_{2n,l}^{\Lambda,\Lambda_{0}}\left([\bigotimes_{i=1}^{M}\O_{A_{i}}(x_{i}+y)]_{D} ; \bigotimes_{i=M+1}^{N}\O_{A_{i}}(x_{i});\vec{p}\right) \right|\\
 & \leq \min_{1\leq i\leq M<j\leq N}|x_{i}+y-x_{j}|^{-D-s}\cdot (\La+m)^{-2n-|w|-s-1}\,  \bar{\m}(\vec{x})\, \Pol_{1}\left(\log\frac{\Lambda+m}{m}\right) \Pol_{2}\left(\frac{|\vec{p}|}{\Lambda+m}\right)
\end{split}
\een
for any $s\in\mathbb{N}_{0}$, which follows from theorem \ref{thmboundH} (see page \pageref{thmboundH}). Here $\bar{\m}(\vec{x})$ is again a rational function in the distances $|x_{i}-x_{j}|$, where $1\leq i <j\leq N$, (but independent of $y$) and $\Pol_{i}(p)$ are polynomials in $p$ with positive coefficients (see theorem \ref{thmboundH} for more details). Finally, we also note that due to translation invariance
\ben\label{Htrans}
\begin{split}
&\H_{2n,l}^{\Lambda,\Lambda_{0}}\left([\bigotimes_{i=1}^{M}\O_{A_{i}}(x_{i})]_{D} ; \bigotimes_{i=M+1}^{N}\O_{A_{i}}(x_{i});\vec{p}\right)\\
&= \e^{i y (p_{1}+\ldots+p_{2n})}\H_{2n,l}^{\Lambda,\Lambda_{0}}\left([\bigotimes_{i=1}^{M}\O_{A_{i}}(x_{i}-y)]_{D} ; \bigotimes_{i=M+1}^{N}\O_{A_{i}}(x_{i}-y);\vec{p}\right)
\end{split}
\een
holds.

\subsection{OPE coefficients}\label{sec:OPEcoefs}

We are now ready to give the definition of the  OPE coefficients in the present framework. To have a more compact notation, let us define the operator $\D^{A}$ acting on differentiable functionals $F(\varphi)$ of Schwartz space functions $\varphi\in\S(\mathbb{R}^{4})$ by
\ben\label{defD}
\D^{A} F(\varphi) = \left. \frac{(-i)^{|w|}}{n!\,w!}\, \partial_{\vec{p}}^{w}\frac{\delta^{n}}{\delta\hat\varphi(p_{1})\cdots\delta\hat\varphi(p_{n})}\, F(\varphi)\, \right|_{\hat\varphi=0, \vec{p}=0}\quad ,
\een
where $A=\{n,w\}$. Further, let us also define the multivariate Taylor expansion operator 
through
\ben\label{defT}
\T^{j}_{\vec{x}\to\vec{y}}\, f(\vec{x})=\T^{j}_{(x_{1},\ldots,x_{N})\to(y_{1},\dots, y_{N})} \, f(x_{1},\ldots, x_{N})=\sum_{|w|=j}\, \frac{(\vec{x}-\vec{y})^{w}}{w!}\, \partial^{w}f(\vec{y})
\een
where $\vec{x}=(x_{1},\ldots,x_{N})$ and
where $f$ is a sufficiently smooth function on $\mathbb{R}^{4N}$. For expansions around zero we will use the shorthand $\T^{j}_{\vec{x}\to\vec{0}}=:\T^{j}_{\vec{x}}$\ .
Then the OPE coefficients are defined as follows~\cite{Keller:1992by,Hollands:2011gf,Holland:2012vw}:
\begin{defn}[OPE coefficients]\label{defOPE}
Let $\Delta:=[B]-([A_{1}]+\ldots+[A_{N}])$. The OPE coefficients are defined in terms of the regularized AG's with insertions as
\ben\label{OPEhigh}
\C_{A_{1}\ldots A_{N}}^{B}(x_{1},\ldots,x_{N-1},0)\,
:=\, \D^{B}\left\{ G^{0,\Lambda_{0}}_{[B]-1}\left((1-\sum_{j < \Delta}\T^{j}_{\vec{x}}\,) \bigotimes_{i=1}^{N}\O_{A_{i}}(x_i)\right) \right\}\, ,
\een
 where it is understood that $x_N=0$.
\end{defn}
\begin{rem}
Note that the OPE coefficients are translation invariant, so we may e.g. put the last point
to zero by a translation, as we have done above to get a simpler formula.
\end{rem}
To provide some motivation for this definition, we note that the remainder of the operator product expansion can be conveniently expressed in the form
\ben\label{OPER}
\begin{split}
& \Big| \Big\bra \O_{A_{1}}(x_{1})\cdots\O_{A_{N}}(x_{N})\, \varphi(f_{p_{1}})\cdots\varphi(f_{p_{n}}) \Big\ket- \hspace{-.3cm}\sum_{[C]-D'\leq \Delta}\hspace{-.3cm} \C_{A_{1}\ldots A_{N}}^{C}(x_{1},\ldots,x_{N})\ \Big\bra \O_{C}(x_{N})\, \hat{\varphi}(p_{1})\cdots\hat{\varphi}(p_{n}) \Big\ket  \Big|\\
 &=\hspace{-.3cm}\sum_{\substack{I_{1}\cup\ldots\cup I_{j}=\{1,\ldots, n\}\\ I_{i}\cap I_{j}=\emptyset \\ l_{1}+\ldots+l_{j}=l  }} \hspace{-.5cm}\hbar^{n+l+1-j}  
 \R^{\La,\Lao}_{D,|I_{1}|,l_{1}}(\bigotimes_{i=1}^{N}\O_{A_{i}}; \vec{p}_{I_{1}}) \bar\L^{\La,\Lambda_{0}}_{|I_{2}|, l_{2}}(\vec{p}_{I_{2}})\cdots \bar\L^{\La,\Lambda_{0}}_{|I_{j}|, l_{j}}(\vec{p}_{I_{j}}) \prod_{i=1}^{n} C^{\La,\Lambda_{0}}(p_{i})
\end{split}
\een
where the generating functional of the remainder functions $ \R^{\La,\Lao}_{D,n,l}(\bigotimes_{i=1}^{N}\O_{A_{i}}; \vec{p}) $ satisfies~\cite{Hollands:2011gf,Holland:2012vw,Holland:2013we}
\ben\label{RGD}
\begin{split}
R_{D}^{\Lambda,\Lambda_{0}}(\bigotimes_{i=1}^{N}\O_{A_{i}}(x_{i})):=&G^{\La,\Lambda_{0}}(\bigotimes_{i=1}^{N}\O_{A_{i}})-\sum_{[C]\leq D} \C_{A_{1}\ldots A_{N}}^{C} L^{\La,\Lambda_{0}}(\O_{C})\\
=& (1-\sum_{j\leq\Delta}\T^{j}_{\vec{x}\to(x_{N},\dots, x_{N})})\,  {G}^{\Lambda,\Lambda_{0}}_{D} (\bigotimes_{i=1}^{N}\O_{A_{i}}(x_{i}))
\end{split}
\een
with $\Delta=D-\sum_{i=1}^{N}[A_{i}]$. We conclude from these equations that our OPE acts as generally expected: It first tempers short distance singularities (on the total diagonal $x_{1}=\ldots=x_{N}$) by increasing the value of the regularization parameter $D$ in the remainder term \eqref{RGD}. Once the resulting functions are regular enough, it then takes a Taylor expansion around $\vec{x}=(x_{N},\ldots,x_{N})$. 

One can further use the identity \eqref{RGD} to estimate the remainder of the OPE. This way, it was shown in~\cite{Hollands:2011gf,Holland:2012vw, Holland:2013we} that, in the present model, the OPE is not only an asymptotic expansion, as was generally believed, but that it actually converges in the limit $D\to\infty$ for arbitrary configurations $(x_{1},\ldots,x_{N})\in M_{N}$ of the spacetime arguments. 

Finally, we will later also need the relation~\cite{Holland:2012vw, Holland:2013we}
\ben\label{eq:partrem}
\begin{split}
&G^{\Lambda,\Lambda_{0}}(\bigotimes_{i=1}^{N}\O_{A_{i}}(x_{i}))-\!\! \sum_{[C]\leq D}\!\! \C_{A_{1}\ldots A_{M}}^{C}(x_{1},\ldots,x_{M})\, G^{\Lambda,\Lambda_{0}}(\O_{C}(x_{M})\bigotimes_{j=M+1}^{N} \O_{A_{j}}(x_{j}))\\
&\quad=(1-\sum_{j=0}^{\Delta} \T^{j}_{(x_{1},\ldots,x_{M})\to (x_{M},\ldots,x_{M})})\ G^{\Lambda,\Lambda_{0}}\left([\bigotimes_{i=1}^{M}\O_{A_{i}}(x_{i})]_{D};\bigotimes_{j=M+1}^{N}\O_{A_{j}}(x_{j})\right) \ ,
\end{split}
\een
where $G^{\Lambda,\Lambda_{0}}([\bigotimes_{i=1}^{M}\O_{A_{i}}]_{D};\bigotimes_{j=M+1}^{N}\O_{A_{j}})$, defined in section \ref{subsec:subdiv}, are the AG's with regularization on the partial diagonal $x_{1}=\ldots=x_{M}$.

\section{Derivation of the deformation formula}\label{sec:proof}

While the definition of the perturbative OPE coefficients given in the previous section is very clear from a conceptual standpoint, it is somewhat dissatisfying that we have to rely on secondary objects (i.e. regularized AG's with insertions) in order to determine the OPE coefficients in perturbation theory. It would be desirable to be able to construct the perturbed OPE coefficients just in terms of the zeroth perturbation order ones, without reference to any other quantities.  This would yield support to the viewpoint that no data other than the OPE coefficients and one point functions are needed to define a quantum field theory. 

In the following we are going to show that such a construction is indeed possible. Starting from our definition of the OPE coefficients in terms of amputated Green's functions with insertions, see def.\ref{defOPE}, we will derive theorem \ref{thmint}, which is a formula that allows us to express the coefficients at a given order $r$ in terms of (an integral over) lower order ones. Our derivation of this formula is from first principles, i.e. we do not have to make any additional assumptions. 

To obtain the mentioned perturbation formula, we will first study the effect of taking a derivative with respect to the coupling constant $g$ of Green's functions with and without insertions, see section \ref{sec:vargAGs}. In section \ref{sec:OPEint} we will put these results to use and come to the actual derivation of the perturbation formula for the OPE coefficients, see theorem~\ref{thmint}.

\subsection[Variation of Green's functions w.r.t. the coupling constant]{Variation of Green's functions with respect to the coupling constant}\label{sec:vargAGs}

In the familiar diagrammatic framework of quantum field theory, increasing the perturbation order is represented by additional insertions of interaction vertices, corresponding in our case to $\varphi^{4}$ insertions, into the Feynman diagrams. This relation between insertions of the interaction operator  and the order of perturbation theory takes on a very simple form in our framework. Namely, one can show\footnote{The propositions derived in this section are understood to hold in the sense of formal power series in $\hbar$, i.e. they hold up to arbitrary finite ''loop order''.}:

\begin{prop}
\label{thmAP}
{\normalfont\sffamily \bfseries (M\"uller \cite{Muller:2002he})} The derivative with respect to the coupling constant of the CAG's without insertion, which were defined in section \ref{subsec:CAGs}, can be expressed as
\ben\label{AP1}
\partial_{g} L^{\Lambda,\Lambda_{0}}=\frac{1}{4!}\int\d^{4}y \, L^{\Lambda,\Lambda_{0}}(\varphi^{4}(y))\, ,
\een
where we have the CAG's with insertion of the composite operator $\varphi^{4}$ on the right hand side (see section \ref{subsec:CAGint} for the definition of operator insertions).
%
%
\end{prop}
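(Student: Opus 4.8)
The plan is to verify \eqref{AP1} by the standard flow-equation strategy: show that both sides satisfy the same flow equation in $\Lambda$ and the same boundary conditions at $\Lambda=\Lambda_0$ and (for the relevant renormalization conditions) at $\Lambda=0$, and then invoke uniqueness of the solution to the flow equation. Concretely, I would start from the defining relation \eqref{CAGdef} for $L^{\Lambda,\Lambda_0}$ and differentiate with respect to $g$. Since $g$ enters only through the running coupling $c^{\Lambda_0} = \tfrac{g}{4!} + O(\hbar)$ in $L^{\Lambda_0}$, and more generally through the boundary conditions \eqref{CAGBC1}, the $g$-derivative acts on the initial data; the point is to track how this propagates down the flow.

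First I would establish that $\partial_g L^{\Lambda,\Lambda_0}$ satisfies a \emph{linear} flow equation. Differentiating the flow equation \eqref{fe} with respect to $g$ and using that $\dot C^\Lambda$ is $g$-independent, the nonlinear (quadratic) term produces, by the Leibniz rule, exactly the linearization
\ben
\partial_\Lambda\,\partial_g L^{\Lambda,\Lambda_0} = \tfrac{\hbar}{2}\,\bra \varp,\dot C^\Lambda\star\varp\ket\,\partial_g L^{\Lambda,\Lambda_0} - \bra \varp\,\partial_g L^{\Lambda,\Lambda_0},\ \dot C^\Lambda\star\varp L^{\Lambda,\Lambda_0}\ket + \hbar\,\partial_g\partial_\Lambda\log Z^{\Lambda,\Lambda_0}\, ,
\een
which (up to the $\log Z$ term, which drops out once one passes to the functionals $\mathcal L_{n,l}$ with the momentum-conservation delta removed, or is handled by the standard normalization argument) is precisely the flow equation \eqref{FEN} for a CAG with a single insertion, namely $L^{\Lambda,\Lambda_0}(\O_{\In})$ with $\O_{\In}=\varphi^4$, integrated over the insertion point $y$. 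In other words, the right-hand side $\tfrac{1}{4!}\int\d^4 y\, L^{\Lambda,\Lambda_0}(\varphi^4(y))$ obeys the same linear flow equation as the left-hand side — this is because the one-insertion flow equation \eqref{FEN} for $N=1$ has no source term and coincides structurally with the $g$-differentiated \eqref{fe}. I would spell this out at the level of the expansion coefficients $\mathcal L^{\Lambda,\Lambda_0}_{n,l}(\vec p)$ in momentum space to avoid any ambiguity with the functional $\log Z$ piece.

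Next I would check the boundary conditions. At $\Lambda=\Lambda_0$: differentiating \eqref{CAGBC2} with respect to $g$ gives $\partial^w_{\vec p}\,\partial_g\mathcal L^{\Lambda_0,\Lambda_0}_{n,l}(\vec p)=0$ for $n+|w|>4$, while the one-insertion boundary condition \eqref{BCL2} for $\O_{\In}=\varphi^4$ (which has $[\In]=4$) reads $\partial^w_{\vec p}\mathcal L^{\Lambda_0,\Lambda_0}_{n,l}(\varphi^4(0);\vec p)=0$ for $n+|w|>4$ — the same condition after the $\int\d^4 y$, which on Fourier-transformed quantities sets the total insertion momentum to zero. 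At $\Lambda=0$: differentiating the relevant renormalization conditions \eqref{CAGBC1} with respect to $g$ gives $\partial^w_{\vec p}\,\partial_g\mathcal L^{0,\Lambda_0}_{n,l}(\vec 0)=\delta_{w,0}\delta_{n,4}\delta_{l,0}\,\tfrac{1}{4!}$ for $n+|w|\le 4$, and this must be matched against $\tfrac{1}{4!}\int\d^4 y\,\mathcal L^{0,\Lambda_0}_{n,l}(\varphi^4(y);\vec 0)$ evaluated via \eqref{BCL1}: for $\O_{\In}=\varphi^4$ one has $A=\{4,0\}$, so \eqref{BCL1} gives $\partial^w_{\vec p}\mathcal L^{0,\Lambda_0}_{n,l}(\varphi^4(0);\vec 0)=i^{|w|}w!\,\delta_{w,0}\delta_{n,4}\delta_{l,0}=\delta_{w,0}\delta_{n,4}\delta_{l,0}$, and the $\tfrac{1}{4!}$ prefactor then makes the two sides agree. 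So the boundary data match.

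Having matched flow equation and boundary conditions, uniqueness of solutions to the flow equation (the standard Gr\"onwall/induction-in-$(n,l)$ argument, as in \cite{Keller:1990ej,Muller:2002he}) forces the two sides of \eqref{AP1} to coincide for all $0<\Lambda<\Lambda_0$, and in particular in the limits $\Lambda\to 0$, $\Lambda_0\to\infty$. I expect the only genuinely delicate point to be the careful treatment of the normalization factor $\log Z^{\Lambda,\Lambda_0}$ and the $\delta$-function on total momentum: one has to make sure that ``differentiate \eqref{fe} in $g$'' really reproduces \eqref{FEN} for $N=1$ on the nose, rather than up to some $\varphi$-independent term, and that the $\int\d^4 y$ on the insertion correctly implements the restriction to vanishing total momentum flowing through the inserted vertex. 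This is exactly the bookkeeping that makes the combinatorial factor come out to $1/4!$ and not something else, matching the fact that $c^{\Lambda_0}=\tfrac{g}{4!}+\dots$; since the statement is attributed to M\"uller \cite{Muller:2002he}, I would simply follow that reference for these bookkeeping details.
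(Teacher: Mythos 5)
Your proposal is correct and follows essentially the same route as the paper: both sides of \eqref{AP1} are shown to satisfy the same linear homogeneous flow equation (obtained by differentiating \eqref{fe} in $g$ and comparing with \eqref{FEN} for one insertion), the same boundary conditions (the $g$-derivative of \eqref{CAGBC1}--\eqref{CAGBC2} reproducing \eqref{BCL1}--\eqref{BCL2} with $A=\{n'=4,w'=0\}$ up to the $1/4!$ prefactor), and translation invariance, whence uniqueness gives the identity. The bookkeeping points you flag (the $\varphi$-independent $\log Z$ term and the fact that $\int\d^4y$ produces the momentum-conservation delta via \eqref{CAGtrans}) are exactly the ones the paper disposes of in its footnote.
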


\begin{proof}
We give a slightly different version of the proof compared to the one presented in~\cite{Muller:2002he}, which is more in the spirit of the present paper. Namely, we note that $L^{\Lambda,\Lambda_{0}}$ is defined through the following conditions:
\begin{enumerate}
\item Flow Equation \eqref{fe}
\item Boundary conditions \eqref{CAGBC1}, \eqref{CAGBC2}
\item Translation invariance
\end{enumerate}
Taking a $\Lambda$-derivative\footnote{Note that we can use the formal power series expansion \eqref{genfunctinsert}  and the translation properties of the CAG's with one insertion, eq.\eqref{CAGtrans}, in order to write the $\Lambda$ derivative of the r.h.s. of eq.\eqref{AP1} as $\partial_{\Lambda}\L_{2n,l}^{\Lambda,\Lambda_{0}}(\varphi^{4}(0);\vec{p})\frac{1}{4!}\int\d^{4}y \,\exp[i y (p_{1}+\ldots+p_{2n})] $. We can thereby apply the $\La$-derivative to $\L_{2n,l}^{\Lambda,\Lambda_{0}}(\varphi^{4}(0);\vec{p})$ without having to exchange the order of integration and differentiation.\label{footnote8}} on both sides of equation \eqref{AP1} and substituting the flow equations \eqref{fe} and \eqref{FEN}, we find that both expressions indeed obey the same linear homogeneous flow equation. Concerning the boundary conditions, we apply the $g$-derivative to eqs.\eqref{CAGBC1} and \eqref{CAGBC2}, which yields the conditions \eqref{BCL1} and \eqref{BCL2} with $A=\{w'=0, n'=4\}$. Fina
 lly, both sides of equation \eqref{AP1} are evidently translation invariant.
\end{proof}
The proposition can be generalized to the $g$-derivative of CAG's with insertions.
\begin{prop}
\label{thmAP2}
The CAG's with one insertion satisfy the identity
%
\ben\label{propLOgder}
\partial_{g} L^{\Lambda,\Lambda_{0}}(\O_{A}(x))=\frac{1}{4!}\int\d^{4}y \, L^{\Lambda,\Lambda_{0}}_{D=[A]}(\O_{A}(x)\otimes\varphi^{4}(y))\, .
\een
The $y$-integral converges absolutely, uniformely in the cutoffs, including $\Lambda=0, \Lambda_0=\infty$.
\end{prop}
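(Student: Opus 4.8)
The plan is to mimic the proof of Proposition~\ref{thmAP} (M\"uller's argument) verbatim at the level of strategy: show that both sides of \eqref{propLOgder} satisfy the same linear homogeneous flow equation, the same boundary conditions, and are both translation invariant, and then invoke the uniqueness of solutions to the flow equation with prescribed boundary data. The additional feature here is the convergence and interchange-of-limits claim for the $y$-integral, which was trivial in Proposition~\ref{thmAP} only because there the relevant CAG with one $\varphi^4$-insertion is smooth and rapidly decaying in $y$; here the object $L^{\Lambda,\Lambda_{0}}_{D=[A]}(\O_{A}(x)\otimes\varphi^{4}(y))$ is a CAG with \emph{two} insertions, which is singular as $y\to x$ when $\Lambda_0\to\infty$, so the convergence must be argued separately and is the real content of the proposition.

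First I would fix $\Lambda_0<\infty$, where everything in sight is smooth, and establish \eqref{propLOgder} as an identity of smooth functionals. Applying $\partial_\Lambda$ to the left side and using the flow equation \eqref{fe}, and $\partial_\Lambda$ to the right side using \eqref{FEN} for $N=2$ (with the source term quadratic in one-insertion and zero-insertion CAG's), one checks that the inhomogeneous pieces organize, via $\partial_g L^{\Lambda,\Lambda_0}=\frac1{4!}\int\d^4 y\, L^{\Lambda,\Lambda_0}(\varphi^4(y))$ from Proposition~\ref{thmAP} and the Leibniz rule in $\partial_g$ applied to the source term of the one-insertion flow equation, into exactly the source term appearing on the right. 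As in footnote~\ref{footnote8}, the $\Lambda$-derivative can be pulled through the $y$-integral because translation invariance \eqref{CAGtrans} turns the $y$-dependence into a pure phase $\e^{iy\sum p_i}$ times a $\Lambda$-independent-in-$y$ kernel, so no exchange-of-limits subtlety arises at this stage. For the boundary conditions: differentiating the BPHZ conditions \eqref{BCL1}, \eqref{BCL2} for $L^{\Lambda,\Lambda_0}(\O_A)$ with respect to $g$ kills them (the right sides are $g$-independent constants or zero), so the left side of \eqref{propLOgder} has vanishing renormalization conditions of the appropriate BPHZ type; one must check that the right side, built from $G_{D}$ resp.\ $L_D$ with regularization degree $D=[A]$, satisfies the matching conditions --- this is where the choice $D=[A]$ is forced, since the two-insertion object regularized to degree $[A]$ has exactly the boundary conditions \eqref{Gbound1}, \eqref{Gbound2} that, after the $y$-integration, reproduce \eqref{BCL1}--\eqref{BCL2} with vanishing data. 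Translation invariance of both sides is immediate from \eqref{CAGtrans} and \eqref{Ftrans}. Uniqueness of the solution then gives the identity for $\Lambda_0<\infty$.

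The main obstacle, and the step requiring real work, is the absolute convergence of $\int\d^4 y\, L^{\Lambda,\Lambda_0}_{D=[A]}(\O_A(x)\otimes\varphi^4(y))$ uniformly in $\Lambda_0$, including the limit $\Lambda_0\to\infty$, $\Lambda\to 0$. Here I would invoke the bounds of the appendix: near $y=x$ the relevant estimate is the short-distance scaling degree bound \eqref{SD1} (with $N=2$), which for $D=[A]$ and $[\varphi^4]=4$ gives $\operatorname{sd}\big(G^{\Lambda,\Lambda_0}_{[A]}(\O_A\otimes\varphi^4)\big)\le [A]+4-[A]-1=3$ at the diagonal $y=x$ --- strictly less than the dimension $4$ of the integration variable, hence integrable at $y=x$ (the logarithmic corrections $(\log)^n$ do not spoil this since $3<4$); crucially this holds uniformly as $\Lambda_0\to\infty$. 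For large $|y|$ I would use the infrared bound \eqref{FIR} (via \eqref{CAGFeq}, which identifies $F_D$ with $-L_D$ for $N=2$), which shows that the relevant kernel decays faster than any inverse power of $|y|$ uniformly in the cutoffs, in particular integrably at infinity. Combining these two regimes with a partition of unity in $y$ yields absolute convergence of the $y$-integral, uniform in $\Lambda_0$ and down to $\Lambda=0$. Finally, this uniform convergence, together with the already-established identity for $\Lambda_0<\infty$ and the known convergence of the CAG's themselves as the cutoffs are removed (standard in the flow-equation framework), justifies taking $\Lambda_0\to\infty$, $\Lambda\to0$ on both sides of \eqref{propLOgder} and interchanging this limit with the $y$-integral, completing the proof.
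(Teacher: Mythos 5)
Your overall strategy coincides with the paper's: establish absolute convergence of the $y$-integral from the appendix bounds (integrable UV singularity $|x-y|^{-3-1/r}$ at the diagonal since $D'=[A]+4$, $D=[A]$; rapid IR decay from massiveness), then match flow equations and the $g$-differentiated BPHZ boundary conditions, with $D=[A]$ forced by the latter. That is exactly the proof given.

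There is, however, one step whose stated justification is wrong. You claim that the interchange of $\partial_\Lambda$ with $\int\d^4y$ works ``as in footnote~\ref{footnote8}'' because translation invariance turns the $y$-dependence into a pure phase times a $\Lambda$-independent-in-$y$ kernel. That argument is specific to the \emph{one}-insertion CAG, where \eqref{CAGtrans} makes the entire $y$-dependence a phase $\e^{iy\sum p_i}$. For the two-insertion object $L^{\Lambda,\Lambda_0}_{D=[A]}(\O_A(x)\otimes\varphi^4(y))$ the translation property \eqref{Ftrans} only controls a simultaneous shift of both points; the dependence on the relative position $x-y$ is nontrivial and singular as $\Lambda_0\to\infty$, so the footnote-\ref{footnote8} trick does not apply and an exchange-of-limits subtlety genuinely does arise at this stage. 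The paper handles it by proving a separate bound on the $\Lambda$-derivative of the integrand (the $t=1$ case of \eqref{eqboundN=2}, i.e.\ \eqref{integrandbound2}), which dominates $\partial_\Lambda\L^{\Lambda,\Lambda_0}_{D=[A],n,l}$ by a $y$-integrable function and thereby licenses differentiation under the integral sign. The repair is immediate --- the appendix bounds you are already invoking are stated for $t\in\{0,1\}$ --- but as written your justification for this step would not stand.
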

%
\begin{proof}
The general strategy of the proof is similar to the one used in the proof of proposition \ref{thmAP} above: We want to establish the equality by showing that both sides of the equation satisfy the same flow equations and boundary conditions. 

First, however, we have to show that the integral on the right hand side of eq.\eqref{propLOgder} even exists. In the case of proposition \ref{thmAP} it was easy to show that the integral simply gives a momentum space delta function once we expanded the corresponding CAG-functionals in terms of $\hbar$ and $\hat{\varphi}$ [see footnote \ref{footnote8}]. In the case at hand, however, the situation is more complicated: The CAGs with two insertions are not smooth in the spacetime arguments (upon removal of the UV cutoff $\Lambda_0 \to \infty$), in contrast to the case of one insertion. Therefore, we have to take  more care in the present proof and, as a first step, establish convergence of the $y$-integral, uniformly as $\Lambda_0 \to \infty$.

\begin{description}
\item[Convergence of the $y$-integral:] We use the formal power series expansion \eqref{genfunctinsert} in order to write the integrand in terms of the moments $\L^{\Lambda,\Lambda_{0}}_{D=[A],n,l}(\O_{A}(x)\otimes\varphi^{4}(y); \vec{p})$, which can be estimated with the help of the bound \eqref{eqboundN=2} [see page \pageref{eqboundN=2}]. It follows from this bound, choosing the parameters $w=0,t=0,r=1$ and any $s\in\mathbb{N}$ in \eqref{eqboundN=2}, that the modulus of our integrand, $|\L^{\Lambda,\Lambda_{0}}_{D=[A],n,l}(\O_{A}(x)\otimes\varphi^{4}(y); \vec{p})|$, is smaller than a function which depends on the spacetime arguments as $|x-y|^{-4-s}$. We conclude that the integrand falls off rapidly for large $|y|$, i.e. in the \emph{infrared region}, and that the integral over that region converges absolutely. On the other hand, choosing $w=0,t=0,s=0$ and $r\in\mathbb{N}$ arbitrary, the bound \eqref{eqboundN=2} also implies that the modulus of our integrand is smaller than
  a function which behaves as $|x-y|^{-3-\frac{1}{r}}$. Therefore, also the $y$-integral over the ultra-violet region, $y\approx x$, is absolutely convergent. In summary, we conclude from the bound \eqref{eqboundN=2} that the integrand (expanded in powers of $\hbar,\hat{\varphi}$) on the r.h.s. of eq.\eqref{propLOgder} satisfies the bound
\ben\label{integrandbound1}
\begin{split}
|\L^{\Lambda,\Lambda_{0}}_{D=[A],n,l}(\O_{A}(x)\otimes\varphi^{4}(y); \vec{p})|&\leq \min\left( \frac{(\La+m)^{-r}}{   |x-y|^{3+r}}, \frac{(\La+m)^{-\frac{1}{r}}}{|x-y|^{3+\frac{1}{r}}} \right)\\
&\times \left[(\Lambda+m)^{[A]-n+1}\, \Pol_{1}\left(\log\frac{\Lambda+m}{m}\right) \Pol_{2}\left(\frac{|\vec{p}|}{\Lambda+m}\right)\right]
\end{split}
\een
for some some $r>1$ and suitable polynomials $\Pol_{i}$ with positive coefficients. Since the function on the r.h.s. of this inequality is integrable over $y\in\mathbb{R}^{4}$, we conclude that the integral on the r.h.s. of \eqref{propLOgder} is absolutely convergent, even as we remove the cutoffs.

\item[Flow equations:] Next we would like compare the flow equations for both sides of eq.\eqref{propLOgder}. Taking the $g$-derivative of the flow equation for $L^{\Lambda,\Lambda_{0}}(\O_{A})$, eq.\eqref{FEN}, we obtain for the left hand side
\ben\label{FELgder}
\begin{split}
\partial_{\Lambda}\partial_{g}L^{\Lambda,\Lambda_{0}}(\O_{A})&=\frac{\hbar}{2}\bra \varp \, ,\, \dot{C}^{\Lambda}\star \varp  \ket\, \partial_{g}L^{\Lambda,\Lambda_{0}}(\O_{A})-\bra \varp  \partial_{g} L^{\Lambda,\Lambda_{0}}(\O_{A}) \, ,\, \dot{C}^{\Lambda}\star \varp L^{\Lambda,\Lambda_{0}}(\varphi) \ket\\
&-\bra \varp  L^{\Lambda,\Lambda_{0}}(\O_{A}) \, ,\, \dot{C}^{\Lambda}\star \varp \frac{1}{4!}\int\d^{4}y\, L^{\Lambda,\Lambda_{0}}(\varphi^{4}(y)) \ket\, ,
\end{split}
\een
where we used proposition \ref{thmAP} in the last line. In order to determine a flow equation for the right side of eq.\eqref{propLOgder}, we take a $\La$-derivative of $\frac{1}{4!}\int\d^{4}y \, L^{\Lambda,\Lambda_{0}}_{D=[A]}(\O_{A}(x)\otimes\varphi^{4}(y))$.  We would now like to exchange the order of the $y$-integral with the $\La$-derivative, which would allow us to use the flow equation \eqref{FEN}. This \emph{differentiation under the integral sign} has to be justified of course: We have to show that also $\frac{1}{4!}\int\d^{4}y \,\partial_{\La} L^{\Lambda,\Lambda_{0}}_{D=[A]}(\O_{A}(x)\otimes\varphi^{4}(y))$ converges absolutely. This is again achieved with the help of the bound \eqref{eqboundN=2}, which implies
\ben\label{integrandbound2}
\begin{split}
|\partial_{\La}\L^{\Lambda,\Lambda_{0}}_{D=[A],n,l}(\O_{A}(x)\otimes\varphi^{4}(y); \vec{p})|&\leq \min\left( \frac{(\La+m)^{-r}}{   |x-y|^{3+r}}, \frac{(\La+m)^{-\frac{1}{r}}}{|x-y|^{3+\frac{1}{r}}} \right)\\
&\times \left[(\Lambda+m)^{[A]-n}\, \Pol_{1}\left(\log\frac{\Lambda+m}{m}\right) \Pol_{2}\left(\frac{|\vec{p}|}{\Lambda+m}\right)\right]
\end{split}
\een
if we choose $t=1$ in that bound. Thus, we are allowed to exchange the order of the $y$-integral and the $\La$-derivative. Using eq.\eqref{FEN}, it is then easy to check that the right hand side of eq.\eqref{propLOgder} obeys the same flow equation, eq.\eqref{FELgder}, as the left hand side.

\item[Boundary conditions:]  Taking a $g$-derivative of eqs.\eqref{BCL1} and $\eqref{BCL2}$, we find that $\partial_{g}L^{\Lambda,\Lambda_{0}}(\O_{A})$ is subject to the following boundary conditions:
\begin{align}\label{BCgder1}
\partial^{w}_{\vec{p}}\partial_{g}\LscO_{n,l}(\O_{A}(0); \vec{0})&= 0 \quad \text{ for }n+|w|\leq [A] \\
\partial^{w}_{\vec{p}}\partial_{g}\LscIr_{n,l}(\O_{A}(0); \vec{p})&=0\quad \text{ for }n+|w|>[A] \quad . \label{BCgder2}
\end{align}
By definition, the right hand side of equation \eqref{propLOgder} satisfies the same boundary conditions, which can be seen from eqs.\eqref{Gbound1} and \eqref{Gbound2} and by recalling also that $F^{\Lambda,\Lambda_{0}}_{D}(\O_{A}(x)\otimes\O_{B}(0))=- L^{\Lambda,\Lambda_{0}}_{D}(\O_{A}(x)\otimes\O_{B}(0))$ from eq.\eqref{CAGFeq}.

\end{description}
\noindent
To summarize, we have established that a) the integral over $y$ in eq.\eqref{propLOgder} converges, that b) both sides of the equation satisfy the same flow equations and that c) both sides are subject to the same boundary conditions. This establishes the equality and finishes the proof.
\end{proof}

Below we will also be interested in the $g$-derivative of the amputated Green's functions with insertions, $G^{\Lambda,\Lambda_{0}}(\bigotimes_{i=1}^{N}\O_{A_{i}})$.
\begin{prop}\label{thmAGAP}
The $g$-derivative of the amputated Green's functions with insertions can be expressed as
\ben\label{AGAP}
\begin{split}
&\hbar\, \partial_{g}G^{\Lambda,\Lambda_{0}}(\bigotimes_{i=1}^{N}\O_{A_{i}}(x_{i}))=\\
&\frac{-1}{4!}\int\d^{4} y\Big[ G^{\Lambda,\Lambda_{0}}(\bigotimes_{i=1}^{N}\O_{A_{i}}(x_{i})\otimes \varphi^{4}(y))-G^{\Lambda,\Lambda_{0}}(\bigotimes_{i=1}^{N}\O_{A_{i}}(x_{i})) L^{\Lambda,\Lambda_{0}}(\varphi^{4}(y))\\
&- \sum_{j=1}^{N}\sum_{[C]\leq [A_{j}]} \C_{\In A_{j}}^{C}(y,x_{j})\, G^{\Lambda,\Lambda_{0}}(\bigotimes_{\atop{i=1}{i\neq j}}^{N}\O_{A_{i}}(x_{i})\otimes\O_{C}(x_{j})) \Big]
\end{split}
\een
where $\In:=\{n=4,w=0\}$, i.e. $\O_{\In}=\varphi^{4}$ is the interaction operator. The $y$-integral converges absolutely, uniformely in the cutoffs, including $\Lambda=0, \Lambda_0=\infty$.
\end{prop}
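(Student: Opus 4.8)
The plan is to follow the same pattern as in the proofs of Propositions \ref{thmAP} and \ref{thmAP2}: establish convergence of the $y$-integral, then show that both sides of \eqref{AGAP} satisfy the same flow equation and the same boundary conditions, so that they must coincide by the uniqueness of solutions to the flow equation with prescribed boundary data. The main new ingredient compared to Proposition \ref{thmAP2} is that we now deal with amputated Green's functions $G^{\Lambda,\Lambda_0}$ which are non-connected, so it is natural to reduce everything to the connected building blocks. Recall from \eqref{GTild} that $G^{\Lambda,\Lambda_0}(\bigotimes_{i}\O_{A_i}) = \hbar F^{\Lambda,\Lambda_0}(\bigotimes_{i}\O_{A_i}) + \prod_{i} L^{\Lambda,\Lambda_0}(\O_{A_i})$. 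I would first handle the ``easy'' piece $\prod_i L^{\Lambda,\Lambda_0}(\O_{A_i})$ by the Leibniz rule: $\partial_g$ hitting this product produces a sum $\sum_j \big(\partial_g L^{\Lambda,\Lambda_0}(\O_{A_j})\big)\prod_{i\neq j} L^{\Lambda,\Lambda_0}(\O_{A_i})$, and each $\partial_g L^{\Lambda,\Lambda_0}(\O_{A_j})$ can be replaced using Proposition \ref{thmAP2} by $\frac{1}{4!}\int \d^4 y\, L^{\Lambda,\Lambda_0}_{D=[A_j]}(\O_{A_j}(x_j)\otimes\varphi^4(y))$. The remaining task is then to relate these terms, together with $\hbar\,\partial_g F^{\Lambda,\Lambda_0}$, to the three terms on the right side of \eqref{AGAP}.

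\textbf{Convergence of the $y$-integral.} Expanding all functionals as formal power series in $\hbar$ and $\hat\varphi$, I would bound the integrand of the combination on the right side of \eqref{AGAP} using the short-distance and large-distance estimates collected in Section \ref{sec:FE}. Concretely, the first two terms $G^{\Lambda,\Lambda_0}(\bigotimes_i \O_{A_i}\otimes\varphi^4(y)) - G^{\Lambda,\Lambda_0}(\bigotimes_i\O_{A_i})L^{\Lambda,\Lambda_0}(\varphi^4(y))$ are, up to an overall factor $\hbar$, exactly $F^{\Lambda,\Lambda_0}(\bigotimes_i\O_{A_i}\otimes\varphi^4(y))$ by definition \eqref{GTild}; by the infrared decay \eqref{FIR} applied to the enlarged collection of insertions, these fall off faster than any power of $|y|$ as $y\to\infty$, so the infrared part of the integral converges. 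For the ultraviolet region $y\to x_j$, the subtraction of the terms $\sum_{[C]\le[A_j]}\C^{C}_{\In A_j}(y,x_j)\,G^{\Lambda,\Lambda_0}(\bigotimes_{i\neq j}\O_{A_i}\otimes\O_C(x_j))$ is precisely the OPE-remainder combination \eqref{eq:partrem} (with $M=2$, the two merging operators being $\O_{\In}(y)$ and $\O_{A_j}(x_j)$, and $D=[A_j]$). Hence the potentially divergent pieces near $y=x_j$ are replaced by the partially regularized AG $(1-\sum_{m}\T^m)\,G^{\Lambda,\Lambda_0}([\O_{\In}(y)\otimes\O_{A_j}(x_j)]_{[A_j]};\ \cdots)$, whose scaling degree at the partial diagonal $y=x_j$ is improved by \eqref{SDpart} to be strictly less than $4$ (counting the dimension $[\In]=4$ plus $[A_j]$ minus $D+1=[A_j]+1$), yielding a locally integrable singularity $\sim|y-x_j|^{-3-\epsilon}$ in four dimensions. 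One then also needs the uniformity in $\Lambda_0$ (and validity at $\Lambda_0=\infty,\Lambda=0$), which follows from the fact that all the cited bounds are uniform in the cutoffs; this is the content of the appendices referenced in the excerpt.

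\textbf{Flow equation and boundary conditions.} For the flow equation, I would apply $\partial_g$ to the flow equation \eqref{GFE} for $G^{\Lambda,\Lambda_0}(\bigotimes_i\O_{A_i})$; the source term on the right of \eqref{GFE} involves $\partial_g L^{\Lambda,\Lambda_0}$, which Proposition \ref{thmAP} rewrites as $\frac{1}{4!}\int\d^4 y\, L^{\Lambda,\Lambda_0}(\varphi^4(y))$, generating a $\varphi^4$-insertion. Differentiating the right side of \eqref{AGAP} under the integral sign — legitimate by a bound analogous to \eqref{integrandbound2} obtained from the same estimates with one more power of $(\Lambda+m)$ removed — and using the flow equations \eqref{GFE}, \eqref{FEN}, \eqref{HFE} for the constituent objects, together with the identities \eqref{eq:partrem} and \eqref{CAGFeq} to reorganize terms, one checks that both sides obey the same linear homogeneous flow equation. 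For the boundary conditions, at $\Lambda=\Lambda_0$ the functionals $F^{\Lambda,\Lambda_0}$ and $H^{\Lambda,\Lambda_0}$ (and their regularized versions) vanish by \eqref{BCFbare}, \eqref{BCH}, while the $\partial_g$ of the one-insertion boundary conditions vanish as in \eqref{BCgder1}–\eqref{BCgder2}; matching these at $\Lambda=\Lambda_0$ (with the distinguished point $x_N$ handled via translation invariance \eqref{Ftrans}, \eqref{Htrans}) shows the two sides share boundary data. The main obstacle I anticipate is the bookkeeping in the flow-equation step: one must verify that the specific combination of terms in \eqref{AGAP} — in particular the single sum $\sum_{j=1}^N\sum_{[C]\le[A_j]}$ over OPE-subtractions — is exactly what is needed for the $\partial_g$ of the quadratic source term in \eqref{GFE} to close up, using associativity-type relations among OPE coefficients and the decomposition lemma \ref{lemdecomp}. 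Getting the signs and combinatorial factors ($1/4!$, the $-\hbar$ weights in \eqref{GD}, the omission $\widehat{A_j}$) to line up is where the real work lies, but it is a finite, mechanical check once the structural identities \eqref{eq:partrem} and Proposition \ref{thmAP2} are in hand.
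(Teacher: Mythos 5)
Your overall architecture (convergence of the $y$-integral, then matching flow equations and boundary conditions, with the product part $\prod_i L^{\Lambda,\Lambda_0}(\O_{A_i})$ handled by Leibniz plus Proposition \ref{thmAP2}) is the same as the paper's, and your ultraviolet argument via \eqref{eq:partrem} and the partial-diagonal scaling degree \eqref{SDpart} is exactly right. There is, however, a genuine gap in your infrared step. You identify
$G^{\Lambda,\Lambda_{0}}(\bigotimes_i\O_{A_i}\otimes\varphi^4(y))-G^{\Lambda,\Lambda_{0}}(\bigotimes_i\O_{A_i})\,L^{\Lambda,\Lambda_{0}}(\varphi^4(y))$
with $\hbar F^{\Lambda,\Lambda_{0}}(\bigotimes_i\O_{A_i}\otimes\varphi^4(y))$, but by \eqref{GTild} the $F$-functional subtracts the \emph{fully factorized} product $\prod_i L^{\Lambda,\Lambda_{0}}(\O_{A_i})\cdot L^{\Lambda,\Lambda_{0}}(\varphi^4(y))$, not $G^{\Lambda,\Lambda_{0}}(\bigotimes_i\O_{A_i})\,L^{\Lambda,\Lambda_{0}}(\varphi^4(y))$; the two differ by $\hbar F^{\Lambda,\Lambda_{0}}(\bigotimes_i\O_{A_i})\,L^{\Lambda,\Lambda_{0}}(\varphi^4(y))$ whenever $N\geq 2$. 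The correct identification, via lemma \ref{lemdecomp}, is $\hbar H^{\Lambda,\Lambda_{0}}(\bigotimes_i\O_{A_i};\varphi^4(y))$. This is not a cosmetic slip: the bound \eqref{FIR} that you invoke controls the behavior under a \emph{simultaneous} scaling of all insertion points, and the underlying estimate \eqref{eqbound} contains $\min_{i<j}|x_i-x_j|$ over \emph{all} pairs in the denominator; sending only $y\to\infty$ with the $x_i$ fixed does not make that minimum large, so \eqref{FIR} does not yield any decay in $|y|$ alone for $N\geq 2$. The decay you need is precisely the group-separation estimate for the $H$-functionals, theorem \ref{thmboundH} / \eqref{Hinfrared}, which is why the paper introduces the $H$-decomposition in the first place.

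A second, smaller omission: your infrared discussion never addresses the counterterm sum $\sum_{[C]\leq[A_j]}\C_{\In A_j}^{C}(y,x_j)\,G^{\Lambda,\Lambda_{0}}(\cdots)$ for large $|y|$ (you absorb it only into the ultraviolet grouping via \eqref{eq:partrem}). One must check separately that $\C_{\In A_j}^{C}(y,x_j)$ decays as $|y|\to\infty$; this follows because the factorized contribution $\D^{C}\{L^{0,\Lambda_0}(\varphi^4)L^{0,\Lambda_0}(\O_{A_j})\}$ vanishes for $[C]<4+[A_j]$ by the boundary conditions \eqref{BCL1}, and the connected contribution decays rapidly by \eqref{eqboundN=2}. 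With these two corrections your argument closes; the flow-equation and boundary-condition bookkeeping you defer to is indeed the content of appendix \ref{ap:propgder}.
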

\begin{rem}
In the case $N=1$ this reduces to proposition \ref{thmAP2}.
\end{rem}
\begin{proof}
We follow the same basic strategy as in the proof of proposition \ref{thmAP2}, i.e. we first establish convergence of the integral on the right hand side of the equation before we compare flow equations and boundary conditions. 
\begin{description}
\item[Convergence of the $y$-integral:] Again we argue that the integral on the right hand side - as well as its $\Lambda$-derivative - converges absolutely. Let us first discuss the infrared behavior of the integral, i.e. the case of $|y|$ being large. Note that the first two terms on the right hand side of eq.\eqref{AGAP} combine to
\ben
G^{\Lambda,\Lambda_{0}}(\bigotimes_{i=1}^{N}\O_{A_{i}}\otimes \varphi^{4})-G^{\Lambda,\Lambda_{0}}(\bigotimes_{i=1}^{N}\O_{A_{i}}) L^{\Lambda,\Lambda_{0}}(\varphi^{4})= \hbar  H^{\Lambda,\Lambda_{0}}(\bigotimes_{i=1}^{N}\O_{A_{i}} ; \varphi^{4})
\een
by definition [see eq.\eqref{lemdecompeq}]. In view of the bound derived in theorem \ref{thmboundH} (see page \pageref{thmboundH}), we know that the moments $\H^{\Lambda,\Lambda_{0}}_{n,l}$ of these functionals decay more rapidly than\\ $\min_{1\leq i\leq  N}|x_{i}-y|^{-s}$ for large $|y|$ and for any $s\in\mathbb{N}$. A similar bound also holds for the functionals with a $\La$-derivative (the case $t=1$ in theorem \ref{thmboundH}).
Thus, the integral over these terms converges absolutely in the infrared. To estimate the IR-behavior of the OPE coefficients $\C_{\In A_{j}}^{C}(y,x_{j})$ with $[C]\leq [A_{j}]$, we recall their definition in terms of AG's with insertions from def.~\ref{defOPE}. 
The factorized contribution to these coefficients, $\D^{C}\{ L^{0,\Lambda_{0}}(\varphi^{4}) L^{0,\Lambda_{0}}(\O_{A_{j}})  \}$,  vanishes in the case $[C]<4+[A_{j}]$ due to the boundary conditions of the CAG's with one insertion, eq.\eqref{BCL1}. The remaining contribution,  $\D^{C}L^{0,\Lambda_{0}}_{[C]-1}(\varphi^{4}(y)\otimes\O_{A_{j}}(x_{j}))$, is found to decay rapidly for large $|y|$ by the same arguments as in proposition \ref{thmAP2} [i.e. using the bound \eqref{eqboundN=2}].

Let us now discuss the UV behavior of the integral, i.e. the regions where $y$ is close to one of the $x_{j}$. The two contributions which are potentially singular in this region are (other contributions are smooth)
\ben\label{gGderUV}
\begin{split}
G^{\Lambda,\Lambda_{0}}(\bigotimes_{i=1}^{N}\O_{A_{i}}(x_{i})\otimes \varphi^{4}(y))&- \sum_{[C]\leq [A_{j}]} \C_{\In A_{j}}^{C}(y,x_{j}) G^{\Lambda,\Lambda_{0}}(\bigotimes_{\atop{i=1}{i\neq j}}^{N}\O_{A_{i}}(x_{i})\otimes\O_{C}(x_{j}))\\
&= G^{\Lambda,\Lambda_{0}}\big([\O_{A_{j}}(x_{j})\otimes\varphi^{4}(y)]_{[A_{j}]}\, ; \bigotimes_{\atop{i=1}{i\neq j}}^{N}\O_{A_{i}}(x_{i})  \big)
\end{split}
\een
where we used equation \eqref{eq:partrem} in the second line. To see that we can safely integrate over the region $y\approx x_{j}$, we note that corollary \ref{corboundH} [see page \pageref{corboundH}] implies that the right hand side (as well as its $\La$-derivative) is bounded by a function which scales as
 $|x_{j}-y|^{-3-\frac{1}{r}}$ for any $r\in\mathbb{N}$ as $y$ approaches $x_{j}$. We conclude that the $y$-integral in eq.\eqref{AGAP} also converges absolutely in the regions where $y$ is close to any of the spacetime arguments $x_{j}$, and the same is true if one takes the $\La$-derivative of the integrand.

In summary, combining all these bounds for different spacetime regions, we conclude that the integral on the right hand side of equation \eqref{AGAP}, as well as its $\La$-derivative, is in fact absolutely convergent.

\item[Flow equations and boundary conditions:] In order to establish the equality \eqref{AGAP}, it remains to show that both sides of the equation satisfy the same flow equation and boundary conditions, making use also of the fact that we may exchange the order of $\La$-differentiation and $y$-integral on the right hand side. This straightforward, but somewhat lengthy, proof can be found in appendix \ref{ap:propgder}. 
\end{description}

\end{proof}

\subsection[Variation of OPE coefficients w.r.t. the coupling constant]{Variation of OPE coefficients with respect to the coupling constant}\label{sec:OPEint}

The OPE coefficients have been defined in def. \ref{defOPE} in terms of amputated Green's functions with insertions. The results of the previous section can be used to derive our main formula for the deformation of the OPE algebra:
%
\setcounter{thms}{0}

\begin{thm}[OPE deformation]
\label{thmint}
 Let $\In:=\{n=4, v=0\}$, i.e. $\O_{\In}=\varphi^{4}$. The derivative of the OPE coefficients w.r.t. the coupling constant $g$ can be expressed as

\ben\label{eq:thmint}
\begin{split}
&\hbar \partial_{g}\, \C_{A_{1}\ldots A_{N}}^{B}(x_{1},\ldots, x_{N})= \frac{-1}{4!}\int\d^{4}y \bigg[ \C_{\In A_{1}\ldots A_{N} }^{B}(y,x_{1},\ldots,x_{N}) \\
& \qquad -\sum_{i=1}^{N}\sum_{[C]\leq [A_{i}]} \!\!\!\C_{\In A_{i}}^{C}(y,x_{i})\, \C_{A_{1}\ldots \widehat{A_{i}}\,C\ldots A_{N}}^{B}(x_{1},\ldots, x_{N})- \sum_{[C]< [B]} \C_{A_{1}\ldots A_{N}}^{C}(x_{1},\ldots,x_{N})\, \C_{\In C }^{B}(y,x_{N}) \bigg]\, .
\end{split}
\een
Here $\widehat{A_{i}}$ denotes omission of the corresponding index. The relation holds to arbitrary finite perturbation order in massive Euclidean $g\varphi^{4}$-theory with BPHZ renormalization conditions. 
\end{thm}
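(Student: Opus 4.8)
The plan is to start from the definition of the OPE coefficients, eq.~\eqref{OPEhigh}, namely
\ben
\C_{A_{1}\ldots A_{N}}^{B}(\vec x)=\D^{B}\Big\{G^{0,\Lao}_{[B]-1}\big((1-\textstyle\sum_{j<\Delta}\T^{j}_{\vec x})\bigotimes_{i}\O_{A_{i}}(x_i)\big)\Big\},
\een
apply $\hbar\partial_g$ to both sides, and then push the $g$-derivative through $\D^{B}$ and the Taylor operators (which commute with $\partial_g$ since they only act on the spacetime arguments) onto the regularized AG's with insertions. The key input is proposition~\ref{thmAGAP}: it tells us precisely what $\hbar\partial_g G^{\La,\Lao}(\bigotimes_i\O_{A_i})$ is, as a $y$-integral of three types of terms. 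One should first check that the same formula holds with the regularization subscript $D$ attached, i.e. for $G^{\La,\Lao}_{D}(\bigotimes_i\O_{A_i})$; this follows because both sides satisfy the same flow equation and the same regularized boundary conditions \eqref{Gbound1}--\eqref{Gbound2} (the $g$-derivative of the vanishing boundary data is again vanishing boundary data, and the extra OPE-coefficient terms in proposition~\ref{thmAGAP} carry no $\varphi$-dependence that would spoil the boundary conditions at the regularized orders). Uniform convergence of the $y$-integral, established in proposition~\ref{thmAGAP} via the bounds of the appendix, lets us interchange $\partial_g$ with the $\D^{B}$-projection and with the cutoff removal $\Lao\to\infty$, $\La\to 0$.

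Next I would insert the three-term expression from (the $D$-version of) proposition~\ref{thmAGAP} into $\D^{B}\{G^{0,\Lao}_{[B]-1}((1-\sum_{j<\Delta}\T^{j}_{\vec x})\,\cdot\,)\}$ and recognize each resulting piece as an OPE coefficient. The first term, $G^{0,\Lao}_{[B]-1}(\bigotimes_i\O_{A_i}(x_i)\otimes\varphi^4(y))$ with the appropriate Taylor subtraction, should reproduce $\C^{B}_{\In A_1\ldots A_N}(y,\vec x)$ after one accounts for the fact that the interaction operator sits at $y$ while the Taylor expansion in def.~\ref{defOPE} is around the last point $x_N$; here the identity \eqref{eq:partrem}, relating the fully regularized AG to the partially regularized one, and the translation invariance \eqref{CAGtrans}, \eqref{Htrans} will be needed to match conventions (which point is ``the'' expansion point, and which subset carries the regularization). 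The second term, built from $\C^{C}_{\In A_i}(y,x_i)\,G^{0,\Lao}(\bigotimes_{k\ne i}\O_{A_k}\otimes\O_C(x_i))$, becomes after applying $\D^{B}$ and the Taylor operator exactly $\C^{C}_{\In A_i}(y,x_i)\,\C^{B}_{A_1\ldots\widehat{A_i}\,C\ldots A_N}(\vec x)$, summed over $[C]\le[A_i]$ and $i$. The third, ``one-point-function'' term $G^{0,\Lao}(\bigotimes_i\O_{A_i})\,L^{0,\Lao}(\varphi^4(y))$ does \emph{not} appear literally in proposition~\ref{thmAGAP} (there the first two terms were combined into $\hbar H$), so I would instead keep the $H$-combination, use lemma~\ref{lemdecomp}/def.~\ref{def:AGpart} to re-split it, and the genuinely new piece $-\sum_{[C]<[B]}\C^{C}_{A_1\ldots A_N}(\vec x)\,\C^{B}_{\In C}(y,x_N)$ should emerge from re-expanding the product $G^{0,\Lao}(\bigotimes_i\O_{A_i})\otimes\varphi^4(y)$ about the point $x_N$ using the OPE once more on the pair $(\O_C(x_N),\varphi^4(y))$ — i.e.\ from the $B$-channel resummation that is hidden in how $\D^{B}$ acts on a product of a composite-field insertion at $x_N$ with $\varphi^4(y)$.

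Concretely the bookkeeping I expect to do is: (i) write $(1-\sum_{j<\Delta'}\T^{j})$ with the shifted codimension $\Delta'=[B]-([A_1]+\dots+[A_N]+4)$ appropriate to the $N{+}1$-fold product including $\varphi^4$, and show the difference between this Taylor subtraction and the one in def.~\ref{defOPE} produces precisely the $[C]<[B]$ counterterm sum via \eqref{RGD}/\eqref{eq:partrem}; (ii) track the sign and the $\hbar$-power (note $\hbar\partial_g G = \ldots$ carries an explicit $\hbar$ in proposition~\ref{thmAGAP}, matching the $\hbar\partial_g\C$ on the left of \eqref{eq:thmint}); (iii) confirm that the $\D^{B}$-projection commutes with the $y$-integral, which is legitimate because $\D^{B}$ only reads off finitely many $\hat\varphi$-derivatives at zero momentum and the integrand is dominated, uniformly in $\vec p$ near $0$, by the integrable majorant from the bounds \eqref{integrandbound1}, \eqref{eqboundN=2}, \eqref{Hinfrared}. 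The main obstacle, I expect, is precisely step (iii) together with the combinatorial matching in (i): disentangling which Taylor subtraction / which regularization subscript $D$ belongs to which subset of points, and verifying that the ``extra'' subtractions demanded by def.~\ref{defOPE} for the $(N{+}1)$-fold coefficient $\C^B_{\In A_1\ldots A_N}$ are exactly compensated by the explicit OPE-coefficient terms in proposition~\ref{thmAGAP} — i.e.\ that no spurious finite remainder is left over. This is the point where associativity-type identities for the OPE coefficients (the consistency conditions alluded to in the introduction, here in the guise of \eqref{eq:partrem}) do the real work, and where one must be careful that all manipulations are performed at finite $\Lao$ before the cutoffs are removed, using the uniform bounds to justify each limit.
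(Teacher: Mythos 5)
Your overall strategy --- differentiate the defining formula \eqref{OPEhigh} in $g$ and feed in proposition \ref{thmAGAP} --- is not the route the paper takes, and the place where you wave your hands is exactly where your route gets hard. The load-bearing step in your outline is the claim that proposition \ref{thmAGAP} ``holds with the regularization subscript $D$ attached'', i.e.\ that $\hbar\partial_g G^{\La,\Lao}_{[B]-1}(\bigotimes_i\O_{A_i})$ is given by the same three-term $y$-integral. This is established nowhere in the paper and is not a routine flow-equation/boundary-condition check: for $D>-1$ the conditions \eqref{Gbound1} are imposed at $\Lambda=0$, $\vec p=\vec 0$ for all $n+|w|\le D$, and a candidate right-hand side (whose counterterm structure you would first have to guess --- with which regularization degree on the $(N{+}1)$-fold insertion, and which ranges for the sums over $C$?) must be shown to satisfy them; the verification of proposition \ref{thmAGAP} for $D=-1$ alone fills appendix \ref{ap:propgder}. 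Likewise, your third counterterm $-\sum_{[C]<[B]}\C^{C}_{A_1\ldots A_N}\,\C^{B}_{\In C}$ is only conjectured to ``emerge from the $B$-channel resummation hidden in how $\D^B$ acts''; that is precisely the computation that has to be done, not a remark one can defer. (Incidentally, the term $G^{0,\Lao}(\bigotimes_i\O_{A_i})L^{0,\Lao}(\varphi^4(y))$ \emph{does} appear literally in \eqref{AGAP}; it is only in the convergence argument that it is absorbed into $\hbar H$.)

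The paper avoids both problems by first rewriting the definition as \eqref{OPEalt}, $\C^B_{A_1\ldots A_N}=\D^B\{G^{0,\Lao}(\bigotimes_i\O_{A_i})-\sum_{[C]<[B]}\C^C_{A_1\ldots A_N}L^{0,\Lao}(\O_C)\}$, trading the Taylor subtraction and the oversubtraction degree $[B]-1$ for an explicit subtraction of lower-dimensional coefficients times one-insertion CAG's. Then only the \emph{unregularized} $G$ and the one-insertion $L(\O_C)$ need to be differentiated in $g$, so propositions \ref{thmAP2} and \ref{thmAGAP} apply exactly as stated, and the identity \eqref{OPEderconv} converts every multi-insertion AG back into OPE coefficients times $L^{0,\Lao}(\O_C)$. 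The price is a leftover term $\D^B\sum_{[C]<[B]}(\partial_g\C^C_{A_1\ldots A_N})L^{0,\Lao}(\O_C)$: the resulting identity \eqref{preinduct} determines $\partial_g\C^B$ only modulo $\partial_g\C^C$ for $[C]<[B]$, and the proof closes with an induction in $[B]$ --- a structural element entirely absent from your outline. Your route might be made to work, but as written it replaces the paper's elementary algebra plus induction by an unproven $D$-regularized generalization of proposition \ref{thmAGAP} together with an unperformed resummation; that is a genuine gap rather than a bookkeeping issue.
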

%
 \begin{rem}\label{remdefrom} The loop parameter $\hbar$ is only of auxiliary nature and can be set to $1$, if one is not interested in a \emph{loop expansion}. A few further observations are in order:
 \begin{enumerate}
  
 \item In the proof below, convergence of the integral on the right hand side of eq.\eqref{eq:thmint} follows from convergence of the integrals in propositions \ref{thmAP2} and \ref{thmAGAP}. It is instructive, however, to try and understand why the integral in eq.\eqref{eq:thmint} converges just in terms of properties of the OPE coefficients. From this perspective, the expressions which appear in the second line of equation \eqref{eq:thmint} may be interpreted as ''counter terms'', i.e. they cancel possible UV- and IR-divergent contributions from the first term on the right hand side in the theory without cutoffs. More precisely, if the integration variable $y$ is close to one of the arguments $x_{j}$, then we can factorize the first coefficient on the right hand side (see~\cite{Holland:2012vw, Holland:2013we} for a proof of this identity)
 \ben
 \C_{\In A_{1}\ldots A_{N} }^{B}(y,x_{1},\ldots,x_{N})=\sum_{C} \C_{\In A_{j}}^{C}(y,x_{j})  \C_{A_{1}\ldots \widehat{A_{j}}\,C\ldots A_{N}}^{B}(x_{1},\ldots, x_{N})\, .
 \een
The corresponding counter term subtracts all contributions from the sum over $C$ with $[C]\leq [A_{j}]$. Recall that the OPE coefficients were given in def.~\ref{defOPE} in terms of the AG's with operator insertions, whose short distance scaling behavior is estimated in corollary \ref{corbound} in the appendix. This corollary implies that the remaining terms in the sum over $C$, which contain coefficients $\C_{\In A_{j}}^{C}(y,x_{j})$ with $[C]>[A_{j}]$, diverge at most like $|x_{j}-y|^{-3-\frac{1}{r}}$ (for any $r\in\mathbb{N}$) and are thus indeed integrable on the domain with $y$ close to $x_{j}$. Similarly, if $|y|$ is large compared to the $|x_{i}|$, then we can factorize the first term on the right hand side of \eqref{eq:thmint}:
  \ben
 \C_{\In A_{1}\ldots A_{N} }^{B}(y,x_{1},\ldots,x_{N})=\sum_{C}  \C_{A_{1}\ldots A_{N}}^{C}(x_{1},\ldots,x_{N})\, \C_{\In C }^{B}(y,x_{N})
 \een
Here the corresponding counter term in \eqref{eq:thmint} subtracts all summands with $[C]<[B]$. One can check, using  arguments from the derivation of proposition  \ref{thmAGAP}, that the remaining terms decay faster than any power $|y-x_{N}|^{-s}$ for arbitrary $s\in\mathbb{N}$. 
 
We will observe this cancellation of divergences in a concrete example in section \ref{sec:iteration}. 
 
 \item Composite operators in perturbative quantum field theory are generally subject to renormalization ambiguities, which are in one-to-one correspondence, in the present framework, to the freedom to modify the boundary conditions \eqref{BCL1}, see e.g.~\cite{Keller:1991bz} for further discussion of this point. It can be shown that a change in the boundary conditions, i.e. renormalization 
prescription, in turn can be absorbed by a field redefintion of the form $\O_{A}' = \sum_{A'} Z_{A}^{A'}\, \O_{A'}$, where $Z_{A}^{A'}$ is an invertible `mixing matrix' of complex numbers, with vanishing
entries for $[A'] < [A]$. The OPE-coefficients clearly change accordingly under such a field redefinition; in fact, if the coefficients associated with the new prescription are called 
$\C_{AB}^{\prime C}$ (e.g. for $N=2$), then
 \ben\label{eq:thmintchange}
\C^{\prime C}_{AB} = \sum_{A',B',C'} Z_A^{A'} Z_B^{B'} (Z^{-1})^C_{C'} \ \C_{A'B'}^{C'} \, .
\een
By substituting this formula (and its analogs for $N>2$) into eq.~\eqref{eq:thmint}, one will obtain a corresponding equation for the OPE-coefficients associated with the new 
prescription. If the mixing matrix $Z_A^{A'}$ is independent of the coupling constant $g$, then the resulting equation will have the same form, since the mixing matrix will then 
simply drop out. On the other hand, if $Z_A^{A'}$ depends on $g$ (as is normally the case), then there will be another term on the left side of eq.~\eqref{eq:thmint} involving
$\sum_B (Z^{-1})_{A'}^B \partial_g Z_B^{C'}$. 

\item In this paper, we are dealing with the massive theory, $m>0$. It is interesting to ask whether eq.\eqref{eq:thmint} would also hold for $m=0$. 
In the massless case, the BPHZ-type renormalization conditions that we employ for our composite fields [see eq. \eqref{BCL1}] are not appropriate, and one has to use some other, reasonable, 
set of conditions. In the light of the previous remark, one would therefore expect that eq.~\eqref{eq:thmint} is somewhat modified in the massless case. 
A deeper study of the massless case may be an interesting topic for future research.
 \end{enumerate}
\end{rem}
\begin{proof}[Proof of theorem \ref{thmint}:]
Combining our definition of the OPE coefficients, eq.\eqref{OPEhigh}, with the remainder formula \eqref{RGD}, it follows that (suppressing the dependence on the spacetime arguments for the moment)
\ben\label{OPEalt}
\C_{A_1 \ldots A_N}^B=\D^{B}\left\{ G^{0,\Lambda_{0}}(\bigotimes_{i=1}^{N}\O_{A_{i}}) - \sum_{[{C}]<[B]} \C_{A_1 \ldots A_N}^{{C}} L^{0,\Lambda_{0}}(\O_{{C}})  \right\}\, .
\een
Applying the $g$-derivative to both sides of the equation and using propositions \ref{thmAP2} and \ref{thmAGAP} for the $g$-derivatives of the functionals $L^{0,\Lambda_{0}}(\O_{{C}})$ and $G^{0,\Lambda_{0}}(\bigotimes_{i=1}^{N}\O_{A_{i}})$ respectively, we obtain the formula
\ben\label{inteideq1}
\begin{split}
\hbar\partial_{g}&\C_{A_1 \ldots A_N}^B=\ \D^{B}\frac{-1}{4!}\int\d^{4}y \Big[ G^{0,\Lambda_{0}}(\bigotimes_{i=1}^{N}\O_{A_{i}}\otimes\varphi^{4})- G^{0,\Lambda_{0}}(\bigotimes_{i=1}^{N}\O_{A_{i}})L^{0,\Lambda_{0}}(\varphi^{4}) \\
-& \sum_{j=1}^{N}\sum_{[C]\leq [A_{j}]}\C_{\In A_{j}}^{C}G^{0,\Lambda_{0}}(\bigotimes_{\atop{i=1}{i\neq j}}^{N}\O_{A_{i}}\otimes \O_{C})+\hbar \sum_{[C]< [B]} \C_{A_{1}\ldots A_{N}}^{C} L^{0,\Lambda_{0}}_{[C]}(\varphi^{4}\otimes\O_{C})\Big]\\
-\hbar &\D^{B}\sum_{[{C}]<[B]} \left(\partial_{g}\C_{A_1 \ldots A_N}^{{C}}\right) L^{0,\Lambda_{0}}(\O_{{C}})\, . 
\end{split}
\een
These propositions also give absolute convergence of the $y$-integral, uniformly in $\Lambda,\Lambda_0$. 
Our aim is now to express the right hand side of this equation in terms of OPE coefficients and CAG's with one insertion only. For that purpose the following relation, which follows from the ''remainder formula''  \eqref{RGD}, will be useful:
\ben\label{OPEderconv}
\begin{split}
\D^{B}\left[ G^{0,\Lambda_{0}}(\bigotimes_{i=1}^{N}\O_{A_{i}})-\sum_{[C]\leq [B]} \C_{A_{1}\ldots A_{N}}^{C} L^{0,\Lambda_{0}}(\O_{C}) \right]&=\D^{B} R_{[B]}^{0,\Lambda_{0}}(\bigotimes_{i=1}^{N}\O_{A_{i}})\\
&=\D^{B} (1-\sum_{j\leq\Delta}\T^{j})\, G_{[B]}^{0,\Lambda_{0}}(\bigotimes_{i=1}^{N}\O_{A_{i}}) =0
\end{split}
\een
Here $\Delta=[B]-([A_{1}]+\ldots+[A_{N}])$.
To show that the expression in the last line indeed vanishes, we made use of the decomposition ${G}_{D}^{\Lambda,\Lambda_{0}}(\bigotimes_{i=1}^{N}\O_{A_{i}})=
\hbar F_{D}^{\Lambda,\Lambda_{0}}(\bigotimes_{i=1}^{N}\O_{A_{i}})+\prod_{i=1}^{N} L^{\Lambda,\Lambda_{0}}(\O_{A_{i}})$, see \eqref{GDdef}. The contribution from the $F$-functionals then vanishes due to the boundary conditions \eqref{Gbound1}. To see that the factorized contribution vanishes as well, we apply the integral formula for the remainder of the Taylor expansion:
\ben
(1-\sum_{j\leq\Delta}\T^{j}_{\vec{x}}) \prod_{i=1}^{N} L^{0,\La_{0}}(\O_{A_{i}}(x_{i})) =\sum_{|v|=\Delta+1} \int_{0}^{1}\d\tau\, \frac{\vec{x}^{v}|v|}{v!} (1-\tau)^{\Delta} \partial_{\tau\vec{x}}^{v} \prod_{i=1}^{N} L^{0,\La_{0}}(\O_{A_{i}}(\tau x_{i})) 
\een
We can pull the $\vec{x}$-derivatives into the CAG's, using the Lowenstein rule \eqref{Low1}. It is then not hard to verify, using the boundary conditions for the CAG's with one insertion, eq.\eqref{BCL1}, that the expression vanishes once we apply the differential operator $\D^{B}$.

Equation \eqref{OPEderconv} allows us to replace the AG's with multiple insertions in eq.\eqref{inteideq1} (i.e. the first three terms on the right side) by finite sums over OPE coefficients multiplied by CAG's with one insertion. Thus,
\ben\label{inteideq1b}
\begin{split}
\hbar\partial_{g}&\C_{A_1 \ldots A_N}^B=\ \D^{B}\frac{-1}{4!}\int\d^{4}y \Big[ \sum_{[C]\leq [B]} \C_{\In A_{1}\ldots A_{N}}^{C} L^{0,\La_{0}}(\O_{C}) - \sum_{[C]\leq [B]} \C_{ A_{1}\ldots A_{N}}^{C} L^{0,\La_{0}}(\O_{C})\, L^{0,\Lambda_{0}}(\varphi^{4}) \\
&- \sum_{j=1}^{N}\sum_{[C']\leq [A_{j}]}\C_{\In A_{j}}^{C'}\sum_{[C]\leq [B]} \C_{\In A_{1}\ldots \widehat{A_{j}}C'\ldots A_{N}}^{C} L^{0,\La_{0}}(\O_{C})+\hbar \sum_{[C]< [B]} \C_{A_{1}\ldots A_{N}}^{C} L^{0,\Lambda_{0}}_{[C]}(\varphi^{4}\otimes\O_{C})\Big]\\
&-\hbar \D^{B}\sum_{[{C}]<[B]} \left(\partial_{g}\C_{A_1 \ldots A_N}^{{C}}\right) L^{0,\Lambda_{0}}(\O_{{C}})\, ,
\end{split}
\een
where, as in the statement of the theorem,  $\widehat{A_{i}}$ denotes omission of the corresponding index. Next we note that the expression $\D^{B}\{L^{0,\La_{0}}(\O_{C})\, L^{0,\Lambda_{0}}(\varphi^{4})\}$ in fact vanishes for $[C]=[B]$, which follows once more in view of the boundary conditions \eqref{BCL1} for the CAG's with one insertion. Hence, we can restrict the sum over $C$ in the second term on the r.h.s. of eq.\eqref{inteideq1b} to values satisfying $[C]<[B]$. This allows us to combine the second and the fourth term on the r.h.s. of eq.\eqref{inteideq1b}, using the relation
\ben
L^{0,\Lambda_{0}}(\varphi^{4}) L^{0,\Lambda_{0}}(\O_{C})- \hbar L^{0,\Lambda_{0}}_{[C]}(\varphi^{4}\otimes\O_{C})=G^{0,\Lambda_{0}}_{[C]}(\varphi^{4}\otimes\O_{C})\,, 
\een
which follows from a combination of eqs.\eqref{CAGFeq} and \eqref{GDdef}. Hence, we arrive at the form
\ben\label{inteideq1c}
\begin{split}
\hbar\partial_{g}&\C_{A_1 \ldots A_N}^B=\ \D^{B}\frac{-1}{4!}\int\d^{4}y \Big[ \sum_{[C]\leq [B]} \C_{\In A_{1}\ldots A_{N}}^{C} L^{0,\La_{0}}(\O_{C}) - \sum_{[C]< [B]} \C_{ A_{1}\ldots A_{N}}^{C} G^{0,\Lambda_{0}}_{[C]}(\varphi^{4}\otimes\O_{C})  \\
&- \sum_{j=1}^{N}\sum_{[C']\leq [A_{j}]}\C_{\In A_{j}}^{C'}\sum_{[C]\leq [B]} \C_{\In A_{1}\ldots \widehat{A_{j}}C'\ldots A_{N}}^{C} L^{0,\La_{0}}(\O_{C})\Big]-\hbar \D^{B}\sum_{[{C}]<[B]} \left(\partial_{g}\C_{A_1 \ldots A_N}^{{C}}\right) L^{0,\Lambda_{0}}(\O_{{C}})\, ,
\end{split}
\een
Now we can use the remainder formula \eqref{RGD} as well as the relation \eqref{OPEderconv} once more to replace the second term on the r.h.s of eq.\eqref{inteideq1c} by (using also the fact that $[C]\leq [B]$)
\ben
\begin{split}
&\D^{B} G^{0,\Lambda_{0}}_{[C]}(\varphi^{4}\otimes\O_{C})=\D^{B}\{ G^{0,\Lambda_{0}}(\varphi^{4}\otimes\O_{C})-\sum_{[C']\leq [C]} \C_{\In C}^{C'}\, L^{0,\Lambda_{0}}(\O_{C'}) \}\\
& = \D^{B}\{ \sum_{[C']\leq [B]} \C_{\In C}^{C'}\, L^{0,\Lambda_{0}}(\O_{C'})-\sum_{[C']\leq [C]} \C_{\In C}^{C'}\, L^{0,\Lambda_{0}}(\O_{C'}) \}  = \D^{B}\{ \sum_{[C]<[C']\leq [B]} \C_{\In C}^{C'}\, L^{0,\Lambda_{0}}(\O_{C'}) \} \, .
\end{split}
\een
Substituting the r.h.s of this equation into \eqref{inteideq1c} and bringing the last term on the r.h.s. of eq.\eqref{inteideq1c} to the left, we finally obtain the formula 
\ben\label{preinduct}
\begin{split}
&\D^{B}\sum_{[C]\leq [B]}\left(\hbar\partial_{g}\C_{A_{1}\ldots A_{N}}^{C}\right)  L^{0,\Lambda_{0}}(\O_{C})= 
\\
& \D^{B}\sum_{[C]\leq [B]}L^{0,\Lambda_{0}}(\O_{C})\frac{-1}{4!}\int\d^{4}y \left[ \C_{\In A_{1}\ldots A_{N}}^{C}- \sum_{j=1}^{N}\sum_{[C']\leq [A_{j}]}\C_{\In A_{j}}^{C'}\C_{A_{1}\ldots \widehat{A_{j}}C' \ldots A_{N}}^{C}- \sum_{[C'] < [C]} \C_{A_{1}\ldots A_{N}}^{C'}\C_{\In C'}^{C}\right]  \, .
\end{split}
\een
In the sum on the left we also made use of the fact that $\D^{B}L^{0,\Lambda_{0}}(\O_{C})=\delta_{B,C}$ for $[B]=[C]$, which follows from the boundary conditions \eqref{BCL1}. Since equation \eqref{preinduct} holds for any choice of index $B$, we can ascend inductively in [B]:

\begin{itemize}

\item Let $[B]=0$, i.e $B=\mathds{1}$. The boundary conditions for the CAG's with one insertion then imply $\D^{B}L^{0,\Lambda_{0}}(\O_{B})=1$, which immediately yields
\ben
\hbar\partial_{g}\C_{A_{1}\ldots A_{N}}^{B}=-\int\frac{\d^{4}y}{4!} \left[ \C_{\In A_{1}\ldots A_{N}}^{B}- \sum_{j=1}^{N}\sum_{[C']\leq [A_{j}]}\C_{\In A_{j}}^{C'}\C_{A_{1}\ldots \widehat{A_{j}}C\ldots A_{N}}^{B}- \sum_{[C'] < [B]} \C_{A_{1}\ldots A_{N}}^{C'}\C_{\In C'}^{B}\right] 
\een
in accordance with the theorem (here the sum over $[C']<[B]=0$ actually vanishes).

\item Assume the theorem holds for all $B$ with $[B]< D$. Pick a $B'$ with $[B']=D$. Then we obtain

\ben\label{deformindeq}
\begin{split}
&\D^{B'}\sum_{[C]\leq [B']}\left(\hbar\partial_{g}\C_{A_{1}\ldots A_{N}}^{C}\right)  L^{0,\Lambda_{0}}(\O_{C}) = \D^{B'}\sum_{[C]< [B']}L^{0,\Lambda_{0}}(\O_{C})\frac{-1}{4!}\int\d^{4}y \\
& \times\, \left[ \C_{\In A_{1}\ldots A_{N}}^{C}- \sum_{j=1}^{N}\sum_{[C']\leq [A_{j}]}\C_{\In A_{j}}^{C'}\C_{A_{1}\ldots \widehat{A_{j}}C\ldots A_{N}}^{C}- \sum_{[C'] < [C]} \C_{A_{1}\ldots A_{N}}^{C'}\C_{\In  C'}^{C}\right]  \\
&-\int \frac{\d^{4}y}{4!}  \left[ \C_{\In A_{1}\ldots A_{N}}^{B'}- \sum_{j=1}^{N}\sum_{[C']\leq [A_{j}]} \hspace{-.2cm} \C_{\In A_{j}}^{C'}\C_{A_{1}\ldots \widehat{A_{j}}C\ldots A_{N}}^{B'}- \hspace{-.2cm}\sum_{[C'] < [B']}\hspace{-.2cm} \C_{A_{1}\ldots A_{N}}^{C'}\C_{\In C'}^{B'}\right] 
\end{split}
\een
where we made use of the boundary conditions for the CAG's with one insertion in the last line, which imply for $[B]=[C]$ that $\D^{B}L^{0,\Lambda_{0}}(\O_{C})=\delta_{B,C}$, as mentioned above. By assumption, we can apply theorem \ref{thmint} on the left hand side for the terms with $[C]<[B']$, which yields exactly the same expressions as the sum over $[C]<[B']$ on the right hand side. Subtracting these contributions from both sides of equation \eqref{deformindeq}, we are again left with the claim of the theorem, which closes the induction.

\end{itemize}
Since the propositions derived in section \ref{sec:vargAGs} hold up to any finite loop order (and therefore also to finite perturbation order in $g$), the same holds true for theorem \ref{thmint}.
\end{proof}
Expanding the OPE coefficients as formal power series in the coupling constant, which we write as
\ben
\C_{A_{1}\ldots A_{N}}^{B}(x_{1},\ldots, x_{N}) =: \sum_{i=0}^{\infty} (\C_{i})_{A_{1}\ldots A_{N}}^{B}(x_{1},\ldots, x_{N})\ g^{i}\, ,
\een
and fixing our auxiliary loop parameter $\hbar=1$, the theorem implies:
\begin{corollary}\label{cor}
For any $r\in\mathbb{N}_{0}$, the OPE coefficients at perturbation order $r+1$ are given by
\ben\label{eq:thmintexp}
\begin{split}
  (\C_{r+1})_{A_{1}\ldots A_{N}}^{B}(x_{1},\ldots, x_{N})= \frac{-1}{4!\ (r+1)}\int\d^{4}y\ \bigg[ (\C_{r})_{\In A_{1}\ldots A_{N} }^{B}(y,x_{1},\ldots,x_{N})& \\
  -\sum_{i=1}^{N}\sum_{[C]\leq [A_{i}]} \sum_{s=0}^{r} (\C_{s})_{\In A_{i}}^{C}(y,x_{i})\, (\C_{r-s})_{A_{1}\ldots \widehat{A_{i}}\,C\ldots A_{N}}^{B}(x_{1},\ldots, x_{N})&\\
-\sum_{[C]< [B]} \sum_{s=0}^{r} (\C_{s})_{A_{1}\ldots A_{N}}^{C}(x_{1},\ldots,x_{N})\, (\C_{r-s})_{\In C }^{B}(y,x_{N})& \bigg]\, . 
\end{split}
\een

\end{corollary}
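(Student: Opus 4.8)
The plan is to read off \eqref{eq:thmintexp} from Theorem \ref{thmint} by substituting the formal power-series ansatz in $g$ on both sides of \eqref{eq:thmint}, fixing $\hbar=1$, and comparing coefficients of equal powers of the coupling constant. Note first that the label $\In=\{n=4,w=0\}$ is a fixed composite-field index, so every coefficient occurring on the right of \eqref{eq:thmint}, including those with an $\O_{\In}=\varphi^{4}$ insertion, is an ordinary OPE coefficient and hence carries its own expansion $\C_{\cdots}=\sum_{i\ge0}(\C_i)_{\cdots}\,g^i$.

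First I would treat the left-hand side: differentiating the series gives $\partial_g\C_{A_1\ldots A_N}^B=\sum_{i\ge1}i\,(\C_i)_{A_1\ldots A_N}^B\,g^{i-1}=\sum_{r\ge0}(r+1)\,(\C_{r+1})_{A_1\ldots A_N}^B\,g^r$, so the coefficient of $g^r$ is $(r+1)\,(\C_{r+1})_{A_1\ldots A_N}^B$. On the right-hand side the term $\C_{\In A_1\ldots A_N}^B$ is linear in the OPE coefficients and contributes $(\C_r)_{\In A_1\ldots A_N}^B$ at order $g^r$, while each of the two \emph{bilinear} terms produces a Cauchy product upon multiplying the series of its two factors, e.g.
\ben
\C_{\In A_i}^C\,\C_{A_1\ldots\widehat{A_i}\,C\ldots A_N}^B=\sum_{r\ge0}\Big(\sum_{s=0}^{r}(\C_s)_{\In A_i}^C\,(\C_{r-s})_{A_1\ldots\widehat{A_i}\,C\ldots A_N}^B\Big)\,g^r\, ,
\een
and analogously for the product $\C_{A_1\ldots A_N}^C\,\C_{\In C}^B$. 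Equating the coefficients of $g^r$ on the two sides of \eqref{eq:thmint} and dividing by $(r+1)$ produces exactly \eqref{eq:thmintexp}, with the constraints $[C]\le[A_i]$ and $[C]<[B]$ on the $C$-sums carried over unchanged.

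The only step that needs a word of justification is the commutation of the $y$-integration with the expansion in $g$. This is legitimate because the $y$-integral on the right of \eqref{eq:thmint} is absolutely convergent at each fixed order in $g$: the bounds invoked in the proofs of Propositions \ref{thmAP2} and \ref{thmAGAP} (ultimately corollaries \ref{corbound} and \ref{corboundH}) hold order by order in perturbation theory, so each coefficient function inherits the short-distance and infrared estimates used there and is dominated by the integrable function exhibited in \eqref{integrandbound1}; hence dominated convergence applies term by term. Equivalently, one may simply regard \eqref{eq:thmint} as an identity of formal power series in $g$ with distributional coefficients, each coefficient already carrying a convergent $y$-integral, in which case the passage to \eqref{eq:thmintexp} is pure bookkeeping. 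Since Theorem \ref{thmint} holds to arbitrary finite perturbation order, so does the corollary. I do not anticipate a serious obstacle beyond this interchange point; the recursive content is then immediate, since every coefficient on the right of \eqref{eq:thmintexp} is of perturbation order $\le r$ whereas the left-hand side is of order $r+1$, so the formula determines all order-$(r+1)$ coefficients (for arbitrary tuples of composite fields, including tuples with $\In$-insertions) from the order-$\le r$ ones by a single $y$-integration, the recursion being anchored at $r=0$ by the elementary free-field OPE coefficients.
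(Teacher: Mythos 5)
Your proposal is correct and matches the paper's (essentially unstated) derivation: the corollary is obtained precisely by inserting the power-series ansatz into Theorem \ref{thmint} with $\hbar=1$, reading off the coefficient of $g^{r}$ (a Cauchy product for each bilinear term), and dividing by $r+1$. Your extra remark on interchanging the $y$-integration with the expansion in $g$ is a reasonable piece of bookkeeping that the paper leaves implicit, and it does not change the argument.
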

This equation allows us to determine the coefficients at order $(r+1)$ from those of lower perturbation order. In particular, given the OPE coefficients of the free theory, we can iterate this equation to construct the coefficients to arbitrary order in $g$. An explicit  application of this recursive algorithm can be found in the next section.

\section{Iterative construction of OPE coefficients: examples}\label{sec:iteration}

Here we illustrate our recursion scheme in order to construct OPE coefficients at low perturbation orders in some simple examples. The coefficients of the free theory, which are the initial data of this recursion process, are obtained quite easily using Wick's theorem (see for example~\cite{Olbermann:2012uf}  for a comprehensive analysis of the massless case). 
We can proceed to first perturbation order with the help of corollary~\ref{cor}. Note that, in this section we often use the alternative notation $\C_{\O_{A_{1}}\ldots \O_{A_{N}}}^{\O_{B}}$ instead of $\C_{A_{1}\ldots A_{N}}^{B}$ for OPE coefficients.

\paragraph{The coefficient $(\C_{1})_{\varphi\, \varphi}^{\varphi^{4}}$ :}
 Equation \eqref{eq:thmintexp} allows us to write this coefficient as
\ben\label{OPE1stex1}
\begin{split}
&  (\C_{1})_{\varphi\, \varphi}^{\varphi^{4}}(x_{1}, x_{2})= \frac{-1}{4!}\!\!\int\!\!\d^{4}y\ \bigg[ (\C_{0})_{\varphi^{4}\, \varphi\, \varphi }^{\varphi^{4}}(y,x_{1},x_{2})-\!\!\sum_{[C]\leq 1}  (\C_{0})_{\varphi^{4}\, \varphi}^{\O_{C}}(y,x_{1})\, (\C_{0})_{\O_{C}\,\varphi}^{\varphi^{4}}(x_{1},x_{2}) \\
&\quad-\sum_{[C]\leq 1}  (\C_{0})_{\varphi^{4}\, \varphi}^{\O_{C}}(y,x_{2})\, (\C_{0})_{\varphi\,\O_{C}}^{\varphi^{4}}(x_{1},x_{2})-\sum_{[C]< 4}  (\C_{0})_{\varphi\, \varphi}^{\O_{C}}(x_{1},x_{2})\, (\C_{0})_{\varphi^{4}\, \O_{C}}^{\varphi^{4}}(y,x_{2}) \bigg]\, .
\end{split}
\een
On the right hand side we can now substitute explicit expression for the zeroth order coefficients. With the help of Wick's theorem, one can show quite easily that many coefficients in these summations over $C$ vanish. Let ${A}=\{n_{A},w_{A}\}, {B}=\{n_{B},w_{B}\}$ and ${C}=\{n_{C},w_{C}\}$. If the condition $n_{A}+n_{B}-2k=n_{C}$ can not be fulfilled for some $k\leq \min(n_{A},n_{B})$, then
\ben\label{OPE0vanish}
(\C_{0})_{{A}\, {B} }^{{C}}(x_{1},x_{2})=0\, .
\een
It follows that $(\C_{0})_{\varphi^{4}\, \varphi}^{\O_{C}}(x)=0$ for $[C]\leq 1$.  For the sum over $[C]<4$ in the last term on the r.h.s. of \eqref{OPE1stex1} one finds, again in view of eq.\eqref{OPE0vanish}, that only the contributions with $\O_{C}\in\{\mathds{1},\varphi^{2},\varphi\partial_{\mu}\varphi\}$ are non-vanishing. Hence, we arrive at the simpler equation
\ben\label{OPE1stex1b}
\begin{split}
  (\C_{1})_{\varphi\, \varphi}^{\varphi^{4}}&(x_{1}, x_{2})= \frac{-1}{4!}\int\d^{4}y\ \bigg[ (\C_{0})_{\varphi^{4}\, \varphi\, \varphi }^{\varphi^{4}}(y,x_{1},x_{2})- (\C_{0})_{\varphi\, \varphi}^{\mathds{1}}(x_{1},x_{2})\, (\C_{0})_{\varphi^{4}\,\mathds{1}}^{\varphi^{4}}(y,x_{2}) \\
&-\sum_{\mu=1}^{4}  (\C_{0})_{\varphi\, \varphi}^{\varphi\partial_{\mu}\varphi}(x_{1},x_{2})\, (\C_{0})_{\varphi^{4}\,\varphi\partial_{\mu}\varphi}^{\varphi^{4}}(y,x_{2}) -  (\C_{0})_{\varphi\, \varphi}^{\varphi^{2}}(x_{1},x_{2})\, (\C_{0})_{\varphi^{4}\, \varphi^{2}}^{\varphi^{4}}(y,x_{2}) \bigg]\, .
\end{split}
\een
The computation of the zeroth order coefficients on the r.h.s. of this equation proceeds by Wick's theorem and is trivial. Let 
\ben
\hat{C}^{0,\infty}(x):=\int\frac{\d^{4}p}{(2\pi)^{4}} \frac{\e^{i px}}{p^{2}+m^{2}}
\een
be the position space propagator. The zeroth order coefficients are explicitly given by
\begin{align}
&(\C_{0})_{\varphi^{4}\, \varphi\, \varphi }^{\varphi^{4}}(y,x_{1},x_{2})=4\hat{C}^{0,\infty}(x_{1}-y)+4\hat{C}^{0,\infty}(x_{2}-y) +\hat{C}^{0,\infty}(x_{1}-x_{2})\\
& (\C_{0})_{\varphi\, \varphi}^{\mathds{1}}(x_{1},x_{2})\, (\C_{0})_{\varphi^{4}\,\mathds{1}}^{\varphi^{4}}(y,x_{2})=\hat{C}^{0,\infty}(x_{1}-x_{2})\\
& (\C_{0})_{\varphi\, \varphi}^{\varphi^{2}}(x_{1},x_{2})\, (\C_{0})_{\varphi^{4}\, \varphi^{2}}^{\varphi^{4}}(y,x_{2})=8\hat{C}^{0,\infty}(x_{2}-y) \\
&(\C_{0})_{\varphi\, \varphi}^{\varphi\partial_{\mu}\varphi}(x_{1},x_{2})\, (\C_{0})_{\varphi^{4}\,\varphi\partial_{\mu}\varphi}^{\varphi^{4}}(y,x_{2}) =  (x_{1}-x_{2})^{\mu}\cdot 4\,\partial_{\mu}\hat{C}^{0,\infty}(x_{2}-y) 
\end{align}
Our final result for the first order coefficient is therefore
\ben
\begin{split}
  (\C_{1})_{\varphi\, \varphi}^{\varphi^{4}}(x_{1},& x_{2})=- \frac{\int\d^{4}y}{4!}\ \bigg[ 4\hat{C}^{0,\infty}(x_{1}-y)+4\hat{C}^{0,\infty}(x_{2}-y) +\hat{C}^{0,\infty}(x_{1}-x_{2}) \\
&-\hat{C}^{0,\infty}(x_{1}-x_{2}) - 4 (x_{1}-x_{2})^{\mu} \partial_{\mu}\hat{C}^{0,\infty}(x_{2}-y)    -8\hat{C}^{0,\infty}(x_{2}-y) \Big]=0 
\end{split}
\een
where we have used the fact that $\int\d^{4} y\,  \partial_{\mu}\hat{C}^{0,\infty}(x_{2}-y) = 0$.

\paragraph{The coefficient $(\C_{1})_{\varphi\, \varphi}^{\varphi^{2}}$ :} We again use eq.\eqref{eq:thmintexp} to write
\ben\label{OPE1stex2}
\begin{split}
  (\C_{1})_{\varphi\, \varphi}^{\varphi^{2}}(x_{1}, x_{2})=
  & \frac{-1}{4!}\int\d^{4}y\ \bigg[ (\C_{0})_{\varphi^{4}\, \varphi\, \varphi }^{\varphi^{2}}(y,x_{1},x_{2})-\sum_{[C]\leq 1}  (\C_{0})_{\varphi^{4}\, \varphi}^{\O_{C}}(y,x_{1})\, (\C_{0})_{\O_{C}\,\varphi}^{\varphi^{2}}(x_{1},x_{2}) \\
&-\sum_{[C]\leq 1}  (\C_{0})_{\varphi^{4}\, \varphi}^{\O_{C}}(y,x_{2})\, (\C_{0})_{\varphi\,\O_{C}}^{\varphi^{2}}(x_{1},x_{2})-\sum_{[C]< 2}  (\C_{0})_{\varphi\, \varphi}^{\O_{C}}(x_{1},x_{2})\, (\C_{0})_{\varphi^{4}\, \O_{C}}^{\varphi^{2}}(y,x_{2}) \bigg]\, .
\end{split}
\een
Using equation \eqref{OPE0vanish}, one verifies that the sums over $C$ on the right hand side actually vanish. Our result for this coefficient is
\ben\label{Cphi21}
\begin{split}
  &(\C_{1})_{\varphi\, \varphi}^{\varphi^{2}}(x_{1}, x_{2})= -\int\frac{\d^{4}y}{4!} (\C_{0})_{\varphi^{4}\, \varphi\, \varphi }^{\varphi^{2}}(y,x_{1},x_{2}) =-\int\frac{\d^{4}y}{2} \hat{C}^{0,\infty}(x_{1}-y)\hat{C}^{0,\infty}(x_{2}-y)\\
 &=\frac{-1}{16\pi^{2}}K_{0}(\sqrt{(x_{1}-x_{2})^{2}m^{2}})=\frac{1}{16\pi^{2}} \left[ \log(\sqrt{(x_{1}-x_{2})^{2}m^{2}}/2)+ \Gamma_{\rm E} \right] + O((x_{1}-x_{2})^{2}) \, ,
\end{split}
\een
where $K_{0}$ is a modified Bessel function of the second kind and where $\Gamma_{\rm E}$ is the \emph{Euler-Mascheroni constant}. This particular coefficient is the standard example computed in the textbooks~\cite{collins,Weinberg:1996kr}. These authors present their result on Minkowski space, it is trivial to translate their results into the Euclidean context via Wick rotation. 
The explicit form of this coefficient given in the standard reference~\cite[p.~262]{collins} then corresponds to~\eqref{Cphi21}, up to a field redefinition [see remark~4, item 2)], which is 
clearly admissible.

\paragraph{The coefficient $(\C_{1})_{\varphi\, \varphi^{3}}^{\varphi^{2}}$ :} We use the same strategy as above.
\ben\label{OPE1stex3}
\begin{split}
&  (\C_{1})_{\varphi\, \varphi^{3}}^{\varphi^{2}}(x_{1}, x_{2})=-\!\!\int\! \frac{\d^{4}y}{4!} \bigg[ (\C_{0})_{\varphi^{4}\, \varphi\, \varphi^{3} }^{\varphi^{2}}(y,x_{1},x_{2})-\hspace{-.2cm} \sum_{[C]\leq 1}  (\C_{0})_{\varphi^{4}\, \varphi}^{\O_{C}}(y,x_{1})\, (\C_{0})_{\O_{C}\,\varphi^{3}}^{\varphi^{2}}(x_{1},x_{2}) \\
&\quad-\sum_{[C]\leq 3}  (\C_{0})_{\varphi^{4}\, \varphi^{3}}^{\O_{C}}(y,x_{2})\, (\C_{0})_{\varphi\,\O_{C}}^{\varphi^{2}}(x_{1},x_{2}) -\sum_{[C]< 2}  (\C_{0})_{\varphi\, \varphi^{3}}^{\O_{C}}(x_{1},x_{2})\, (\C_{0})_{\varphi^{2}\, \O_{C}}^{\varphi^{2}}(y,x_{2})\bigg]\, .
\end{split}
\een
Dropping vanishing terms in the summations, we obtain the formula
\ben
\begin{split}
  (\C_{1})_{\varphi\, \varphi^{3}}^{\varphi^{2}}(x_{1}, x_{2})= \frac{-1}{4!}\int\d^{4}y\ \bigg[ (\C_{0})_{\varphi^{4}\, \varphi\, \varphi^{3} }^{\varphi^{2}}(y,x_{1},x_{2}) &-  (\C_{0})_{\varphi^{4}\, \varphi^{3}}^{\varphi}(y,x_{2})\, (\C_{0})_{\varphi\,\varphi}^{\varphi^{2}}(x_{1},x_{2}) \\
&-  (\C_{0})_{\varphi^{4}\, \varphi^{3}}^{\varphi^{3}}(y,x_{2})\, (\C_{0})_{\varphi\,\varphi^{3}}^{\varphi^{2}}(x_{1},x_{2})\bigg]\, .
\end{split}
\een
Substituting the explicit form of the zeroth order coefficients on the right hand side, we arrive at the result
\ben\label{Cphi31}
\begin{split}
  (\C_{1})_{\varphi\, \varphi^{3}}^{\varphi^{2}}(x_{1}, x_{2})=& \frac{-1}{4!}\int\d^{4}y\bigg[  \hat{C}^{0,\infty}(y-x_{2})^{2} \Big( 24 \hat{C}^{0,\infty}(y-x_{2}) + 72 \hat{C}^{0,\infty}(y-x_{1})+ 36 \hat{C}^{0,\infty}(x_{1}-x_{2})  \Big) 
  \\
  &\qquad - \hat{C}^{0,\infty}(y-x_{2})^{2} \Big( 24 \hat{C}^{0,\infty}(y-x_{2})+108 \hat{C}^{0,\infty}(x_{1}-x_{2})  \Big) \bigg]\\
 =&- 3\int\d^{4}y\ \left[ (\hat{C}^{0,\infty}(y-x_{2}))^{2}\ \Big( \hat{C}^{0,\infty}(y-x_{1})-\hat{C}^{0,\infty}(x_{1}-x_{2}) \Big) \right]\\
  =&-\frac{3}{(2\pi)^{8}} \int\d^{4}p\,\d^{4}q \, \frac{\e^{i q (x_{1}-x_{2})}}{(p^{2}+m^{2})(q^{2}+m^{2})} \, \left( \frac{1}{(p+q)^{2}+m^{2}}- \frac{1}{p^{2}+m^{2}} \right)
 \end{split}
\een
An interesting point about this particular coefficient is that the integrals over the individual terms in \eqref{Cphi31} diverge in the ultra-violet (i.e. for $y\to x_{2}$ and for $p^{2}\to\infty$). It is not hard to check, however, that the divergent contributions cancel between the two terms, so that the total expression on the right side is indeed finite, as it should be. Here we see in a concrete example how the sums over $C$, which are subtracted on the right hand side of our perturbation formula \eqref{eq:thmintexp}, act as counter-terms, which guarantee (UV-) finiteness. 

Again, we can compare our result to the one obtained via customary Feynman diagram methods. Following the standard prescription for the computation of OPE coefficients \cite{collins}, one has to determine the Feynman diagram displayed in figure \ref{fig:Feynm}.

\begin{figure}[htbp]
\begin{center}
\begin{tikzpicture}[fill=gray!50]
\draw (.5,1) node[left]{$x_{1}$} -- (2,0);
\draw (.5,-1) -- node[below]{ $p_{1}$} (2,0);
\draw (2,0) .. controls (3.5,0.75) and (3.5,.75) .. (5,0);
\draw (2,0) .. controls (3.5,-0.75) and (3.5,-.75) .. (5,0);
\node[draw,circle,fill] (A1) at (5,0) {$x_{2}$};
\draw (A1) --  node[below]{ $p_{2}$} (6.5,-1);
\end{tikzpicture}
\caption{Feynman diagram for the computation of $(\C_{1})_{\varphi\, \varphi^{3}}^{\varphi^{2}}(x_{1}, x_{2})$}
\label{fig:Feynm}
\end{center}
\end{figure}
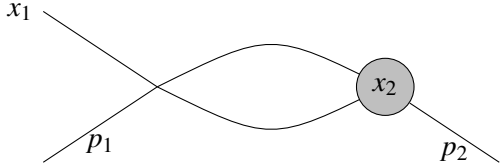

The grey circle in fig.\ref{fig:Feynm} corresponds to the vertex from the composite operator $\varphi^{3}$. The OPE coefficient $(\C_{1})_{\varphi\, \varphi^{3}}^{\varphi^{2}}(x_{1}, x_{2})$ is then obtained after amputating the external legs labelled by $p_{1}$ and $p_{2}$ form the diagram. Using BPHZ renormalization for the composite operator $\varphi^{3}$, one thereby recovers precisely the expression \eqref{Cphi31}.


\section{Conclusions and outlook}

In this paper, we have derived, within the setting of massive Euclidean $\varphi^{4}$-theory, a new formula expressing the coupling constant derivative of any OPE coefficient as a function of other OPE coefficients. Algebraically, the formula describes the deformation of the OPE algebra caused by the $\varphi^{4}$-interaction. The result can be used constructively: OPE coefficients can be computed to any perturbation order from the zeroth order ones by iterating our recursion formula. We have given a few examples of this procedure at first perturbation order. The construction method bears significance both from a conceptual as well as from a technical viewpoint, as it a) shows that OPE coefficients can be determined in a manifestly state independent manner without reference to any other quantities, such as for example correlation functions, and b) in that it provides a clear and concrete algorithm for the computation of any OPE coefficient in perturbation theory.

Concerning possible lines of future research, several generalizations and extensions of our recursion formula quickly come to mind. For example, it would be interesting to generalize the formula to massless theories, to other QFT models, to Minkowski space or possibly even to curved spacetimes. Further, one could investigate the dependence of the recursion formula on the renormalization conditions, i.e. one could study renormalization conditions other than BPHZ ones. A very exciting application of the recursion formula would be the possibility to give a non-perturbative definition of OPE coefficients, as discussed in the introduction.

\appendix

\section{Derivation of bounds on Schwinger functions with insertions}\label{APbounds}

In this appendix we derive bounds on the regularized amputated Green's functions (AG's) with operator insertions $\O_{A_{1}}(x_{1}),\ldots, \O_{A_{N}}(x_{N})$. We distinguish the cases of regularization w.r.t. the total diagonal $x_{1}=\ldots=x_{N}$ (related to the $F^{\Lambda,\Lambda_{0}}$-functionals, see section \ref{sec:regCAG}) and w.r.t. any partial diagonal $x_{1}=\ldots=x_{M<N}$ (related to the $H^{\Lambda,\Lambda_{0}}$-functionals, see section \ref{subsec:subdiv}). For the sake of simplicity, we set $g=1$ in the following.

\subsection{Regularization on the total diagonal}

As discussed in section \ref{sec:regCAG}, regularization on the total diagonal of Schwinger functions with insertions is related to the functionals $F^{\Lambda,\Lambda_{0}}_{D}(\bigotimes_{i=1}^{N}\O_{A_{i}})$. In the following we will derive inductive bounds which imply the estimates \eqref{SD1} and \eqref{FIR} on the short- and large distance behavior of the AG's in the spacetime arguments.

\begin{thm}\label{thmbound}
For any $D\leq D'=[A_{1}]+\ldots+[A_{N}]$, any $s\in\mathbb{N}_{0}$, any $r\in\mathbb{N}$ and $t\in\{0,1\}$, the bound
\ben\label{eqbound}
\begin{split}
\Big|\,\partial_{\La}^{t}\partial_{\vecp}^{w}\F^{\Lambda,\Lambda_{0}}_{D,2n,l}&(\bigotimes_{i=1}^{N}\O_{A_{i}}(x_{i});\vec{p})\, \Big|\leq (\Lambda+m)^{D-2n-|w|+1-t-\frac{1}{r}}\  \Pol_{1}\left(\log\frac{\Lambda+m}{m}\right) \Pol_{2}\left(\frac{|\vec{p}|}{\Lambda+m}\right)  \\
&\times \frac{  \max\limits_{1\leq i\leq N}|x_{i}-x_{N}|^{\max(|w|,D+1)} \cdot (m \min\limits_{1\leq i<j\leq N} |x_{i}-x_{j}|)^{-s} }{\min\limits_{1\leq i<j\leq N}|x_{i}-x_{j}|^{D'-D-1+\frac{1}{r}+\max(|w|,D+1)}}\sup\Big(1,(\Lambda+m)|x_{N}|\Big)^{|w|}
\end{split}
\een
holds, where $\Pol_{i}(x)$ are polynomials in $x$ with positive coefficients.
\end{thm}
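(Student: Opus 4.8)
The plan is to prove \eqref{eqbound} by an induction built on the flow equation \eqref{GTildFE} for the functionals $\F^{\Lambda,\Lambda_0}_{D}(\bigotimes_{i=1}^N\O_{A_i})$, taking as already-established input the standard bounds for the pure CAG's $\L^{\Lambda,\Lambda_0}_{n,l}$ (see \cite{Keller:1990ej,Muller:2002he}) and the analogous bounds for the one-insertion CAG's $\L^{\Lambda,\Lambda_0}_{n,l}(\O_{A};\vec p)$, which are smooth and polynomially bounded in the insertion point. Besides $\F_D$ itself, these are the only objects occurring on the right side of \eqref{GTildFE}: the source term $\sum_{i<j}\bra\varp L(\O_{A_i}),\dot C^\Lambda\star\varp L(\O_{A_j})\ket\prod_{r\neq i,j}L(\O_{A_r})$ is built entirely from them, and the ``mixing'' term $-\bra\varp\F_D,\dot C^\Lambda\star\varp L\ket$ couples $\F_D$ only to $L$. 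I would organize the induction in the way customary in this framework: (i) ascending in the loop order $l$; (ii) at fixed $l$, ascending in the number $2n$ of external legs; (iii) at fixed $(l,n)$, ascending in $|w|$, settling first the ``irrelevant'' moments $2n+|w|>D$ and then the ``relevant'' ones $2n+|w|\le D$. The step at which one checks that the mixing term strictly lowers one of these parameters uses that the $\varphi^2$- and $\partial\varphi^2$-counterterms in \eqref{ac} are $O(\hbar)$, so that the tree-level CAG $\L_{m,0}$ vanishes for $m<4$; removing one leg by the functional derivative in $\varp L$ and contracting with $\dot C^\Lambda$ then forces every contribution to $\F_{D,2n,l}$ to contain an $\F_{D,2n',l'}$ with $l'<l$, or with $l'=l$ and $n'\le n-1$. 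Together with a base case at $2n=0$, where the mixing term is absent, this makes the induction well founded.

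\textbf{Irrelevant moments.} For $2n+|w|>D$ I would use the boundary condition \eqref{Gbound2} to write $\partial_{\vecp}^w\F^{\Lambda,\Lambda_0}_{D,2n,l}=-\int_\Lambda^{\Lambda_0}d\lambda\,\partial_\lambda\partial_{\vecp}^w\F^{\lambda,\Lambda_0}_{D,2n,l}$, insert the right side of \eqref{GTildFE}, apply $\partial_\lambda^t\partial_{\vecp}^w$, and distribute the momentum derivatives with the Leibniz rule over the three terms. Each summand then has every $\F_D$-factor controlled by the induction hypothesis and every $L$- or $L(\O_A)$-factor by the known bounds; using the standard estimates $\int_k|\partial_{\vecp}^v\dot C^\lambda(k)|\le c\,(\lambda+m)^{1-|v|}$ and $\sup_k|\partial_{\vecp}^v\dot C^\lambda(k)|\le c\,(\lambda+m)^{-2-|v|}$ for the contraction, the $\lambda$-integral reduces to $\int_\Lambda^{\Lambda_0}d\lambda\,(\lambda+m)^{D-2n-|w|-t-1/r}\,\Pol(\log\tfrac{\lambda+m}{m})$, whose exponent is strictly below $-1$. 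This is precisely what the $-1/r$ shift buys: it converts the borderline logarithmic subcases into a convergent integral at the price of $r$-dependent coefficients in $\Pol_1,\Pol_2$, producing the prefactor $(\Lambda+m)^{D-2n-|w|+1-t-1/r}$. The spacetime arguments pass untouched through the momentum-space convolutions and enter only through the smooth one-insertion CAG's, so it remains to check by bookkeeping that the distance weights $\max_i|x_i-x_N|^{\max(|w|,D+1)}$, $\min_{i<j}|x_i-x_j|^{-(D'-D-1+1/r+\max(|w|,D+1))}$ and $(m\min_{i<j}|x_i-x_j|)^{-s}$ reproduce themselves; the last one because the mass always accompanies the distances in the combination $m|x_i-x_j|$ through the exponential factors in \eqref{propreg}, so $s$ may be raised freely at each step — the same mechanism behind the rapid decay \eqref{FIR}. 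The trailing factor $\sup(1,(\Lambda+m)|x_N|)^{|w|}$ will keep track of the powers of $|x_N|$ produced when momentum derivatives fall on the translation phase $\e^{iy\sum p_i}$ of \eqref{Ftrans}.

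\textbf{Relevant moments.} For $2n+|w|\le D$ I would first use \eqref{Ftrans} to normalize $x_N=0$, then Taylor-expand in $\vecp$ about $\vecp=0$ to order $D-2n$. The remainder, written in integral form, is a sum of terms of momentum order $>D-2n$ at interpolated momenta $\tau\vecp$, hence reduces to the irrelevant case. The Taylor coefficients at $\vecp=0$ vanish at $\Lambda=0$ by \eqref{Gbound1}, so they are recovered by integrating the flow equation \emph{upward}, $\partial_{\vecp}^w\F^{\Lambda,\Lambda_0}_{D,2n,l}(\,\cdot\,;\vec{0})=\int_0^\Lambda d\lambda\,\partial_\lambda\partial_{\vecp}^w\F^{\lambda,\Lambda_0}_{D,2n,l}(\,\cdot\,;\vec{0})$, where convergence as $\lambda\to0$ is again ensured by the $-1/r$ in the exponent. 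Finally the case $t=1$ would not be integrated at all but read off directly from \eqref{GTildFE} and bounded term by term with the $t=0$ estimates, which accounts for the extra factor $(\Lambda+m)^{-1}$.

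\textbf{Expected main difficulty.} The combinatorial skeleton above is routine in the flow-equation setting; the real work — and the content that goes beyond \cite{Keller:1992by,Holland:2013we} — is making the \emph{sharpened} exponents close under the induction, in particular the single combined exponent $\max(|w|,D+1)$ appearing in both numerator and denominator of the distance fraction, and the uniform $1/r$ of headroom in the $(\Lambda+m)$-power. I expect the delicate point to be verifying that exactly these exponents are regenerated when the Leibniz-distributed momentum derivatives are split among three factors in the mixing and source terms, and that the Taylor-remainder step in the relevant case does not worsen the $\min_{i<j}|x_i-x_j|$-denominator; the large-distance and sub-diagonal improvements encoded in $s$ ride along on the mass factors and should need no separate argument.
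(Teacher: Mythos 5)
Your inductive skeleton (ascend in the loop/leg orders, integrate irrelevant moments down from $\Lambda_0$ using \eqref{Gbound2}, relevant ones up from $0$ using \eqref{Gbound1}, Taylor-expand to reach $\vec p\neq 0$, normalize $x_N=0$ via \eqref{Ftrans}) coincides with the paper's. But there is a genuine gap at exactly the point you dismiss as ``bookkeeping'': the source term. When you insert the known bounds \eqref{CAGoneinsert} for the one-insertion CAG's into
$\int_k \L^{\Lambda,\Lambda_0}_{2n_a,l_a}(\O_{A_a}(x_a);k,\cdot)\,\dot C^\Lambda(k)\,\L^{\Lambda,\Lambda_0}_{2n_b,l_b}(\O_{A_b}(x_b);-k,\cdot)\prod_c\L^{\Lambda,\Lambda_0}_{2n_c,l_c}(\O_{A_c};\cdot)$
and do the loop integral naively, you get a prefactor $(\Lambda+m)^{D'-2n-|w|-1}$ with \emph{no} inverse powers of the distances. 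For irrelevant moments with $D<2n+|w|\le D'$ (which exist whenever $D<D'$) the subsequent $\Lambda'$-integral from $\Lambda$ to $\Lambda_0$ then diverges as $\Lambda_0\to\infty$. The spacetime arguments do \emph{not} ``pass untouched'': the entire content of the estimate is to trade the excess $(\Lambda+m)^{D'-D-1+\frac1r+|w|+s}$ for $\min_{i<j}|x_i-x_j|^{-(D'-D-1+\frac1r+|w|+s)}$ by partially integrating in the loop momentum against the oscillating phase $\e^{ik(x_a-x_b)}$ carried by the translated insertions. This is the paper's lemma \ref{lem}, and it is the heart of the proof, not a routine check.

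Relatedly, your account of where the fractional exponent $\frac1r$ comes from is incorrect for the source term. In the linear terms the $\frac1r$ is simply inherited from the induction hypothesis, as you say; but in the source term the inputs have integer exponents, and the fractional power must be \emph{generated}. The paper does this by the substitution $k_\mu=\tilde k_\mu^r$ followed by one more integration by parts and the van der Corput estimate \eqref{estimate1}, which yields exactly $|x_a-x_b|^{-1/r}$ at the cost of $(\Lambda+m)^{-1/r}$. Likewise, the factor $(m\min_{i<j}|x_i-x_j|)^{-s}$ is not automatic ``because the mass accompanies the distances through the propagator'': it is obtained by taking $s$ \emph{additional} $k$-derivatives in the partial integration (producing $|x_a-x_b|^{-s}$ and $\Lambda^{-s}$) and then converting $\Lambda^{-s}$ into $m^{-s}$ using the Gaussian $\e^{-m^2/\Lambda^2}$ in $\dot C^\Lambda$. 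Without these oscillatory-integral steps your induction does not close, so the proposal as written does not constitute a proof.
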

\begin{rem}
For versions of these bounds with a stronger control over the numerical factors in the polynomials, see~\cite{Holland:2013we} and also~\cite{Hollands:2011gf,Holland:2012vw} (for the case $N=2$). The bound implies the following conclusions on the scaling behavior of Schwinger functions with insertions:
\begin{enumerate}

\item Setting $r=1$ and $w=0$, the bound clearly implies the scaling identity \eqref{FIR}.

\item In the case $N=2$, the bound simplifies somewhat due to the fact that
\ben
\max_{1\leq i\leq 2}|x_{i}-x_{2}|=|x_{1}-x_{2}|=\min_{1\leq i<j\leq 2}|x_{i}-x_{j}|\, .
\een
Recalling from eq.\eqref{CAGFeq} that in this case the $F$-functionals coincide with the CAG's with two insertions up to a sign, we therefore obtain the bound
\end{enumerate}
\ben\label{eqboundN=2}
\begin{split}
\Big|\,\partial_{\La}^{t}\partial_{\vecp}^{w}\L^{\Lambda,\Lambda_{0}}_{D,2n,l}(\O_{A_{1}}(x_{1})\otimes\O_{A_{2}}(x_{2});\vec{p})\, \Big|&\leq (\Lambda+m)^{D-2n-|w|+1-t-\frac{1}{r}}\  \Pol_{1}\left(\log\frac{\Lambda+m}{m}\right) \Pol_{2}\left(\frac{|\vec{p}|}{\Lambda+m}\right)  \\
&\times \frac{  m^{-s}}{|x_{1}-x_{2}|^{D'-D-1+\frac{1}{r}+s}}\, \sup\Big(1,(\Lambda+m)|x_{2}|\Big)^{|w|}
\end{split}
\een
\end{rem}
Before we come to the proof of the theorem, let us also note the following consequence:
\begin{cor}\label{corbound}
For any $D\leq D'=[A_{1}]+\ldots+[A_{N}]$, any $r\in\mathbb{N}$ and $t\in\{0,1\}$, the bound
\ben\label{eqboundcor}
\begin{split}
&\Big|\,\partial_{\La}^{t}\partial_{\vecp}^{w}\G^{\Lambda,\Lambda_{0}}_{D,2n,l}(\bigotimes_{i=1}^{N}\O_{A_{i}}(x_{i});\vec{p})\, \Big|\\
&\leq (\Lambda+m)^{D-2n-|w|+1-t-\frac{1}{r}}\  \Pol_{1}\left(\log\frac{\Lambda+m}{m}\right) \Pol_{2}\left(\frac{|\vec{p}|}{\Lambda+m}\right)\sup\Big(1,(\Lambda+m)|x_{N}|\Big)^{|w|}  \\
&\times \max\left(\frac{1}{\min\limits_{1\leq i<j\leq N}|x_{i}-x_{j}|^{D'-D-1+\frac{1}{r}}}, (\La+m)^{D'-D-1+\frac{1}{r}}  \right) \left(\frac{  \max\limits_{1\leq i\leq N}|x_{i}-x_{N}|  }{\min\limits_{1\leq i<j\leq N}|x_{i}-x_{j}|}\right)^{\max(|w|,D+1)} 
\end{split}
\een
holds, where $\Pol_{i}(x)$ are polynomials in $x$ with positive coefficients.
\end{cor}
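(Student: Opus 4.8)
The plan is to deduce Corollary~\ref{corbound} from Theorem~\ref{thmbound} together with the standard flow-equation bounds on CAG's with a single insertion. The starting point is the decomposition~\eqref{GDdef}, which at the level of the $\hbar$- and $\hat\varphi$-expansion coefficients reads
\[
\G^{\Lambda,\Lambda_{0}}_{D,2n,l}(\bigotimes_{i=1}^{N}\O_{A_{i}}(x_{i});\vec{p})=\hbar\,\F^{\Lambda,\Lambda_{0}}_{D,2n,l}(\bigotimes_{i=1}^{N}\O_{A_{i}}(x_{i});\vec{p})+\Big[\prod_{i=1}^{N}L^{\Lambda,\Lambda_{0}}(\O_{A_{i}}(x_{i}))\Big]_{2n,l}(\vec{p})\, .
\]
Thus it suffices to bound the two contributions on the right separately and to check that each is dominated by the right-hand side of~\eqref{eqboundcor}.

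For the $\F_{D}$-term I would simply invoke Theorem~\ref{thmbound} with the free parameter $s$ set to $0$. The only point requiring attention is that the spatial prefactor appearing in~\eqref{eqbound} can be rewritten in the form demanded by~\eqref{eqboundcor}: one factors
\[
\frac{\max_{i}|x_{i}-x_{N}|^{\,\max(|w|,D+1)}}{\min_{i<j}|x_{i}-x_{j}|^{\,D'-D-1+\frac{1}{r}+\max(|w|,D+1)}}=\frac{1}{\min_{i<j}|x_{i}-x_{j}|^{\,D'-D-1+\frac{1}{r}}}\cdot\Big(\frac{\max_{i}|x_{i}-x_{N}|}{\min_{i<j}|x_{i}-x_{j}|}\Big)^{\max(|w|,D+1)},
\]
bounds the first factor by the maximum appearing in~\eqref{eqboundcor}, and notes that for $N\geq2$ the quantity $\max_{i}|x_{i}-x_{N}|$ is attained at some index $i\neq N$ and is therefore $\geq\min_{i<j}|x_{i}-x_{j}|$, so that the bracketed ratio is $\geq1$ and may be freely raised to the power $\max(|w|,D+1)$. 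This already shows the $\F_{D}$-contribution obeys~\eqref{eqboundcor}.

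For the product term I would use translation invariance to write $\prod_{i}L^{\Lambda,\Lambda_{0}}(\O_{A_{i}}(x_{i}))=e^{i x_{N}(p_{1}+\dots+p_{2n})}\prod_{i}L^{\Lambda,\Lambda_{0}}(\O_{A_{i}}(x_{i}-x_{N}))$, expand in $\hbar$ and $\hat\varphi$, and distribute $\partial_{\Lambda}^{t}\partial_{\vec{p}}^{w}$ over the factors by the Leibniz rule. Derivatives falling on the exponential produce powers $|x_{N}|^{|u|}$, $|u|\leq|w|$, which combine with the corresponding gain in inverse powers of $(\Lambda+m)$ coming from the fewer momentum derivatives on the $\L$-factors to yield an overall factor $\sup(1,(\Lambda+m)|x_{N}|)^{|w|}$; each surviving factor $\partial_{\Lambda}^{t_{i}}\partial^{v_{i}}_{\vec{p}}\L^{\Lambda,\Lambda_{0}}_{2n_{i},l_{i}}(\O_{A_{i}}(x_{i}-x_{N});\vec{p}_{I_{i}})$ is then controlled by the standard one-insertion bound, of the schematic form $(\Lambda+m)^{[A_{i}]-2n_{i}-|v_{i}|-t_{i}}\,\Pol_{1}\left(\log\frac{\Lambda+m}{m}\right)\Pol_{2}\left(\frac{|\vec{p}|}{\Lambda+m}\right)\sup(1,(\Lambda+m)|x_{i}-x_{N}|)^{|v_{i}|}$ (see e.g.~\cite{Keller:1991bz,Hollands:2011gf}). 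Multiplying over $i$ and using $\sum_{i}[A_{i}]=D'$, $\sum_{i}n_{i}=n$, $\sum_{i}t_{i}=t$, $\sum_{i}|v_{i}|\leq|w|$, the bound $\sup(1,(\Lambda+m)|x_{i}-x_{N}|)\leq\sup(1,(\Lambda+m)\max_{j}|x_{j}-x_{N}|)$, and the fact that a product of polynomials with positive coefficients is again such a polynomial, one obtains an estimate of the form $(\Lambda+m)^{D'-2n-|w|-t}\,\Pol_{1}\Pol_{2}\,\sup(1,(\Lambda+m)|x_{N}|)^{|w|}\sup(1,(\Lambda+m)\max_{i}|x_{i}-x_{N}|)^{|w|}$, which carries no singularity on any diagonal, as it must since this term is smooth in the $x_{i}$. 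Finally one rewrites $(\Lambda+m)^{D'-2n-|w|-t}=(\Lambda+m)^{D-2n-|w|+1-t-\frac{1}{r}}\cdot(\Lambda+m)^{D'-D-1+\frac{1}{r}}$ to absorb the excess powers into the second branch of the maximum in~\eqref{eqboundcor}, and bounds the residual spatial factor by the bracketed ratio of~\eqref{eqboundcor} raised to the power $\max(|w|,D+1)$, once more using that this ratio is $\geq1$.

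The step I expect to be the main obstacle is precisely this last matching for the product term: the factor $\sup(1,(\Lambda+m)\max_{i}|x_{i}-x_{N}|)^{|w|}$ genuinely grows for widely separated points and large cutoff $\Lambda$, whereas~\eqref{eqboundcor} controls such growth only through the ratio $\max_{i}|x_{i}-x_{N}|/\min_{i<j}|x_{i}-x_{j}|$. Making the estimate close therefore requires a careful case distinction according to whether $(\Lambda+m)\max_{i}|x_{i}-x_{N}|$ is larger or smaller than one, and whether $|w|$ exceeds $D+1$, keeping track of the available powers of $(\Lambda+m)$ against the polynomial factor $\Pol_{2}(|\vec{p}|/(\Lambda+m))$ and against the two branches of the maximum. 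Everything else — the Leibniz bookkeeping, the dimension counting, and the rearrangement of the $\F_{D}$-bound — is routine.
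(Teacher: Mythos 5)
You follow precisely the route the paper itself indicates --- the paper offers no proof beyond the remark that the corollary ``follows straightforwardly'' from the decomposition \eqref{GDdef}, theorem \ref{thmbound} and the one-insertion bound \eqref{CAGoneinsert} --- and your treatment of the $\hbar\F_{D}$ contribution is complete and correct: with $s=0$ in \eqref{eqbound}, the exact factorization of the spatial prefactor together with the observation that $\max_{i}|x_{i}-x_{N}|\geq\min_{i<j}|x_{i}-x_{j}|$ for $N\geq 2$ reproduces \eqref{eqboundcor}.

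The product term is where the proposal stops, and the obstacle you flag there is genuine rather than a matter of bookkeeping: the case distinction you defer cannot be closed for the inequality exactly as stated. Take $N=2$, $\O_{A_{1}}=\O_{A_{2}}=\varphi^{2}$ (so $D'=4$), $l=0$, $2n=4$, $|w|=1$, $t=0$ and $x_{2}=x_{N}=0$. The boundary conditions \eqref{BCL1}, \eqref{BCL2} force $\L^{\Lambda,\Lambda_{0}}_{2,0}(\varphi^{2}(0);p_{1},p_{2})=1$, so by \eqref{CAGtrans} the disconnected contribution to $\G_{D,4,0}$ is a symmetrization of $\e^{ix_{1}(p_{1}+p_{2})}$, whose first momentum derivative has modulus of order $|x_{1}|$. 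The right-hand side of \eqref{eqboundcor} at $x_{N}=0$ carries its $\vec{x}$-dependence only through the scale-invariant ratio $\max_{i}|x_{i}-x_{N}|/\min_{i<j}|x_{i}-x_{j}|$ (here equal to $1$) and through $\min_{i<j}|x_{i}-x_{j}|^{-(D'-D-1+\frac{1}{r})}$, so for fixed $\La$ it stays bounded in $|x_{1}|$ when $D<D'$ and grows at most like $|x_{1}|^{1-1/r}$ when $D=D'$; the inequality therefore fails once $(\La+m)|x_{1}|$ is large. To repair this one needs an additional factor of the type $\sup(1,(\La+m)\max_{i}|x_{i}-x_{N}|)^{|w|}$ on the right-hand side (harmless for every use the paper makes of the corollary, all of which concern the short-distance regime), and with that amendment your Leibniz/translation estimate for the product term closes immediately. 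As written, however, the proposal does not establish the stated bound --- and the paper's one-line argument does not confront this point either.
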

\begin{rem}
If we scale the spacetime arguments by a small factor $\varepsilon>0$, the bound scales as $\varepsilon^{-D'-D-1+\frac{1}{r}}$. Thus, we conclude that the scaling degree (cf. eq.~\eqref{sddef})
is
\ben
\operatorname{sd}(G_{D}^{\Lambda,\Lambda_{0}}(\bigotimes_{i=1}^{N}\O_{A_{i}}))
=D'-D-1
\een
\end{rem}
In view of the decomposition \eqref{GDdef},  this corollary follows straightforwardly from the bounds stated in theorem \ref{thmbound} combined with the known bounds for the CAG's with one insertion, see \eqref{CAGoneinsert} below or also~\cite{Keller:1991bz,Muller:2002he,Hollands:2011gf} for the derivation of these bounds. We now proceed to the proof of theorem~\ref{thmbound}:

\begin{proof}[Proof of theorem \ref{thmbound}:]
We use an inductive scheme, based on the renormalization group flow equations to verify the bound \eqref{eqbound}, see~\cite{Keller:1991bz,Keller:1992by,Muller:2002he,Hollands:2011gf} for more details on this procedure. To begin with, let us recall this flow equation for the functionals under consideration. In the expanded form, these are
\ben\label{FE2}
\begin{split}
&\hspace{2cm}\partial_{\Lambda}\partial_{\vec{p}}^{w}\F^{\Lambda,\Lambda_{0}}_{D,2n,l}(\bigotimes_{i=1}^{N}\O_{A_{i}}; p_{1},\ldots,p_{2n})=\\
\vspace{0.4cm}\\
&= \left(\atop{2n+2}{2}\right) \, \int_{k} \dot{C}^{\Lambda}(k)\ \partial_{\vec{p}}^{w}\F^{\Lambda,\Lambda_{0}}_{D,2n+2,l-1}(\bigotimes_{i=1}^{N}\O_{A_{i}}; k, -k,  p_{1},\ldots,p_{2n})\\
&- \mathbb{S}\, \Bigg[\sum_{\substack{l_{1}+l_{2}=l \\ n_{1}+n_{2}=n+1}}\!\!\!\!\! 4n_{1}n_{2}\sum_{w_{1}+w_{2}+w_{3}=w}\!\!\!\!\!\!\! c_{\{w_{j}\}}  \partial_{\vec{p}}^{w_{1}}\F^{\Lambda,\Lambda_{0}}_{D,2n_{1},l_{1}}(\bigotimes_{i=1}^{N}\O_{A_{i}}; q,  p_{1},\ldots,p_{2n_{1}-1})\\
&\hspace{5cm}\times  \partial_{\vec{p}}^{w_{2}}\dot{C}^{\Lambda}(q)\,   \partial_{\vec{p}}^{w_{3}}\L^{\Lambda,\Lambda_{0}}_{2n_{2},l_{2}}(  p_{2n_{1}},\ldots,p_{2n}) \\
&-\sum_{\substack{l_{1}+\ldots+l_{N}=l \\ n_{1}+\ldots+n_{N}=n+1}} \sum_{1\leq a<b\leq N}4n_{a}n_{b} \partial_{\vec{p}}^{w} \int_{k}\, \L^{\Lambda,\Lambda_{0}}_{2n_{a},l_{a}}(\O_{A_{a}}; k,  p_{2n_{a-1}},\ldots,p_{2n_{a}-1})\,  \dot{C}^{\Lambda}(k)\\
&\quad \times \L^{\Lambda,\Lambda_{0}}_{2n_{b},l_{b}}(\O_{A_{b}}; -k,   p_{2n_{b-1}},\ldots,p_{2n_{b}-1})\!\!\!\!\! \prod_{c\in\{1,\ldots,N\}\setminus\{a,b\}} \!\!\!\!\!  \L^{\Lambda,\Lambda_{0}}_{2n_{c},l_{c}}(\O_{A_{c}};   p_{2n_{c-1}},\ldots,p_{2n_{c}-1})  \Bigg]
\end{split}
\een
with $q=p_{2n_{1}}+\ldots+p_{2n}$ and where $\mathbb{S}$ is a symmetrization operator acting on functions of the momenta $(p_{1},\ldots, p_{2n})$ by taking the mean value over all permutations $\pi$ of $1,\ldots, 2n$ satisfying $\pi(1)<\pi(2)<\ldots<\pi(2n_{1}-1)$ and $\pi(2n_{1})<\ldots<\pi(2n)$.

The induction procedure is to go up in $2n+l$, and for fixed $2n+l$ to ascend in $l$. In order to be able to make use of the boundary conditions \eqref{Gbound1} and \eqref{Gbound2}, we first derive the bound for $x_{N}=0$. The bound for $x_{N}\neq 0$ can then be obtained, in the very end, with the help of the translation properties \eqref{Ftrans}.
 We start considering (from here on we set $x_{N}=0$).
 
 When integrating eq.\eqref{FE2} over $\Lambda$, we have to distinguish three cases:

\begin{enumerate}[(A)]

\item  Contributions with $2n+|w|> D$ are referred to as \emph{irrelevant}. Here the boundary conditions are given at $\Lambda=\Lambda_{0}$, see eq.\eqref{Gbound2}. Therefore, we integrate over $\Lambda'$ from $\Lambda$ to $\Lambda_{0}$ in this case.

 \item  Contributions with $2n+|w|\leq D$ are referred to as \emph{relevant}.  The boundary conditions for relevant terms, eq.\eqref{Gbound1}, are given at $\Lambda=0$ and at vanishing external momentum, $\vec{p}=\vec{0}$. Thus, we will integrate over $\Lambda'$ from $0$ to $\Lambda$ in this case.

\item Contributions with $2n+|w|\leq D$ and $\vec{p}\neq \vec{0}$ will be obtained from (A),(B) with the help of a Taylor expansion in $\vec{p}$.
 \end{enumerate}

\paragraph{Irrelevant terms ($2n+|w|>D$):} 

\begin{itemize}

\item \underline{\textsf{First term on the r.h.s. of the flow equation:}}

For the sake of brevity, let us define the shorthand
\ben\label{defm}
\m(\vec{x},D',D,w):=\frac{\max\limits_{1\leq i\leq N}|x_{i}|^{\max(|w|,D+1)}\cdot (m \min\limits_{1\leq i<j\leq N} |x_{i}-x_{j}|)^{-s}}{\min_{1\leq i<j\leq N}|x_{i}-x_{j}|^{D'-D-1+\frac{1}{r}+\max(|w|,D+1)}}\, .
\een
Inserting the inductive bound \eqref{eqbound} for the first term on the r.h.s. of the flow equation, we obtain (recall the notation $|\vec{p}|$ from our conventions section on page \pageref{multider})
%
\ben\label{ineq1st}
\begin{split}
&\Big| \left(\atop{2n+2}{2}\right)\int\d^{4}k \ \dot{C}^{\Lambda}(k) \, \partial_{\vec{p}}^{w}\F_{D,2n+2,l-1}^{\Lambda,\Lambda_{0} }(\bigotimes_{i=1}^{N}\O_{A_{i}}; k,-k, p_{1},\ldots, p_{2n})\Big|\\
&\leq  \left(\atop{2n+2}{2}\right)\int\d^{4}k\,  (\Lambda+m)^{D-2n-1 -\frac{1}{r}-|w|} \, \m(\vec{x},D',D,w)\\
&\qquad\times \Pol_{1}\left(\log\frac{\Lambda+m}{m}\right) \Pol_{2}\left(\frac{|\vec{p}|_{2n+2}}{\Lambda+m}\right)\, \frac{2}{\Lambda^{3}} \e^{-\frac{k^{2}+m^{2}}{\Lambda^{2}} }\\
&\leq  \int\d^{4}k\, (\Lambda+m)^{D-2n-1-|w|-\frac{1}{r}} \Lambda^{-3}\ \m(\vec{x},D',D,w) \Pol_{1}\left(\log\frac{\Lambda+m}{m}\right) \Pol_{2}\left(\frac{|\vec{p}|+|k|}{\Lambda+m}\right) \e^{-\frac{k^{2}+m^{2}}{\Lambda^{2}} }\\
&\leq \int\d^{4}(k/\Lambda)\, (\Lambda+m)^{D-2n-|w|-\frac{1}{r}} \,\m(\vec{x},D',D,w) \Pol_{1}\left(\log\frac{\Lambda+m}{m}\right) \Pol_{2}\left(\frac{|\vec{p}|+|k|}{\Lambda+m}\right) \e^{-\frac{k^{2}+m^{2}}{\Lambda^{2}} }\\
&\leq  (\Lambda+m)^{D-2n-|w|-\frac{1}{r}}\, \m(\vec{x},D',D,w)\, \Pol_{1}\left(\log\frac{\Lambda+m}{m}\right) \Pol_{2}\left(\frac{|\vec{p}|}{\Lambda+m}\right)
\end{split}
\een
Here we have also used the equation
\ben
\dot{C}^{\La}(k)=\frac{2}{\La^{3}}\, \e^{-\frac{k^{2}+m^{2}}{\La^{2}}}\, .
\een
In the second inequality of \eqref{ineq1st} we used the fact that $|\vec{p}|_{2n+2}\leq |\vec{p}|+|k|$, and we absorbed some numerical factors into the polynomials $\Pol_{i}$\footnote{We denote these new, slightly larger polynomials again by $\Pol_{i}$, by abuse of notation. This convention will be used regularly in the following.}. To arrive at the last line, we have used the exponential factor $\e^{-k^{2}/\Lambda^{2}}$ to bound the integral over powers of $|k|/\Lambda$ via
\ben\label{gammaloop}
\int_{0}^{\infty}\e^{-x^{2}} \, x^{n} = \Gamma\left(\frac{n+1}{2}\right)/2\, ,
\een
where the numerical factors have been absorbed into the (new) polynomials $\Pol_{i}$. The inequality \eqref{ineq1st} confirms that the contribution from the first term on the right hand side of the flow equation satisfies the claimed bound, \eqref{eqbound}, in the case $t=1$ and $x_{N}=0$. 

To verify the bound with $t=0$ (i.e. no $\La$-derivative), we have to integrate once more over $\La'$ between $\La$ and $\La_{0}$.
This integral can be estimated with the help of the following lemma:
\begin{lemma}\label{lemA}
Let $s\in\mathbb{N}_{0}$ and $r\in\mathbb{R}$ with $r>0$. Then
\ben
\int_{a}^{b}\d x\, x^{-r-1} (\log x)^{s} \leq  \frac{a^{-r}}{r} \bigg( (\log a)^{s} + \frac{s}{r} (\log a)^{s-1} + \frac{s (s-1)}{r^{2}}(\log a)^{s-2}+\ldots +\frac{s!}{r^{s}} \bigg)
\een
where $1\leq a \leq b$.
\end{lemma}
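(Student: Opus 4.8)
The plan is to reduce the integral to a standard incomplete--gamma estimate by the substitution $x=\e^{t}$, and then to establish that estimate by a single integration by parts iterated, i.e. by induction on $s$.

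First I would substitute $x=\e^{t}$, so that $\d x=\e^{t}\,\d t$ and $\log x=t$. Writing $\alpha:=\log a$ and $\beta:=\log b$, and noting that $a\geq 1$ forces $\alpha\geq 0$, the left-hand side becomes $\int_{\alpha}^{\beta}\e^{-(r+1)t}\,t^{s}\,\e^{t}\,\d t=\int_{\alpha}^{\beta}\e^{-rt}\,t^{s}\,\d t$. The integrand is nonnegative on $[\alpha,\beta]$ because $t\geq\alpha\geq 0$, so we may enlarge the domain of integration, and it suffices to prove the stronger identity
\[
\int_{\alpha}^{\infty}\e^{-rt}\,t^{s}\,\d t=\frac{\e^{-r\alpha}}{r}\Big(\alpha^{s}+\frac{s}{r}\,\alpha^{s-1}+\frac{s(s-1)}{r^{2}}\,\alpha^{s-2}+\ldots+\frac{s!}{r^{s}}\Big).
\]

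This identity I would prove by induction on $s\in\mathbb{N}_{0}$. The case $s=0$ is the elementary computation $\int_{\alpha}^{\infty}\e^{-rt}\,\d t=\e^{-r\alpha}/r$. For the inductive step, integrate by parts with $u=t^{s}$ and $\d v=\e^{-rt}\,\d t$ (so $v=-\e^{-rt}/r$): the boundary term at $+\infty$ vanishes since $r>0$, and one obtains $\int_{\alpha}^{\infty}\e^{-rt}t^{s}\,\d t=\tfrac{1}{r}\alpha^{s}\e^{-r\alpha}+\tfrac{s}{r}\int_{\alpha}^{\infty}\e^{-rt}t^{s-1}\,\d t$. Applying the induction hypothesis to the remaining integral and collecting terms reproduces exactly the displayed expression, the general coefficient $s(s-1)\cdots(s-j+1)/r^{j}$ arising as $\tfrac{s}{r}$ times the coefficient $(s-1)(s-2)\cdots(s-j+1)/r^{j-1}$ of the $s-1$ case. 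Undoing the substitution via $\alpha=\log a$ and $\e^{-r\alpha}=a^{-r}$, and using $\int_{\alpha}^{\beta}\leq\int_{\alpha}^{\infty}$, then yields precisely the asserted inequality.

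There is essentially no serious obstacle here: the proof is routine. The only points needing (minor) care are the hypothesis $a\geq 1$, which guarantees $\alpha\geq 0$ and hence that the integrand is nonnegative so that passing from $\int_{\alpha}^{\beta}$ to $\int_{\alpha}^{\infty}$ is a genuine upper bound, the convergence and vanishing of the boundary term at infinity (which uses $r>0$), and the bookkeeping of the falling-factorial coefficients in the induction.
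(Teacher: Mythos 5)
Your proof is correct. The paper does not actually supply an argument here -- it defers to M\"uller's review for the case $r\in\mathbb{N}$ and asserts that the extension to real $r>0$ is straightforward -- and your substitution $x=\e^{t}$ followed by iterated integration by parts is exactly that standard computation, carried out directly for all $r>0$, with the relevant hypotheses ($a\geq 1$ for nonnegativity of the integrand, $r>0$ for the vanishing boundary term) correctly identified.
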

A proof is given in~\cite{Muller:2002he} for the case $r\in\mathbb{N}$, which generalizes straightforwardly to $r\in\mathbb{R}_{+}$. Combinig lemma \ref{lemA} with \eqref{ineq1st}, we find
\ben
\begin{split}
&\Big| \int_{\La}^{\La_{0}}\d\La'\, \left(\atop{2n+2}{2}\right)\int\d^{4}k \ \dot{C}^{\Lambda'}(k) \, \partial_{\vec{p}}^{w}\F_{D,2n+2,l-1}^{\Lambda',\Lambda_{0} }(\bigotimes_{i=1}^{N}\O_{A_{i}}; k,-k, p_{1},\ldots, p_{2n})\Big|\\
& \leq (\Lambda+m)^{D-2n-|w|+1-\frac{1}{r}}\, \m(\vec{x},D',D,w)\, \Pol_{1}\left(\log\frac{\Lambda+m}{m}\right) \Pol_{2}\left(\frac{|\vec{p}|}{\Lambda+m}\right) \, ,
\end{split}
\een
which is consistent with our inductive bound, \eqref{eqbound}, in the case $t=0$ and $x_{N}=0$.

\item \underline{\textsf{Second term on the r.h.s. of the flow equation:}}

Substituting the known bounds for the CAG's without insertion~\cite{Keller:1990ej,Kopper:1997vg,Muller:2002he, Kopper:2009um},
\ben\label{CAGnoinsert}
|\partial_{\vec{p}}^{w}\L^{\Lambda,\Lambda_{0}}_{2n,l}(  p_{1},\ldots,p_{2n-1})| \leq (\Lambda+m)^{4-2n-|w|}\ \Pol_{1}\left(\log\frac{\Lambda+m}{m}\right) \Pol_{2}\left(\frac{|\vec{p}|}{\Lambda+m}\right)
\een
as well as the inductive bound \eqref{eqbound}, we find (recall that we write $q=p_{2n_{1}}+\ldots+p_{2n}$)
\ben\label{ineq2nd}
\begin{split}
&\Big|\sum_{\substack{l_{1}+l_{2}=l \\ n_{1}+n_{2}=n+1\\ w_{1}+w_{2}+w_{3}=w }}\!\!\!\!\! 4n_{1}n_{2} c_{\{w_{j}\}} \ \partial_{\vec{p}}^{w_{1}}\F^{\Lambda,\Lambda_{0}}_{D,2n_{1},l_{1}}(\bigotimes_{i=1}^{N}\O_{A_{i}}; q,  p_{1},\ldots,p_{2n_{1}-1})\\
 &\qquad\times \partial_{\vec{p}}^{w_{2}}\dot{C}^{\Lambda}(q)\,   \partial_{\vec{p}}^{w_{3}}\L^{\Lambda,\Lambda_{0}}_{2n_{2},l_{2}}(  p_{2n_{1}},\ldots,p_{2n}) \Big| \\
&\leq \hspace{-.3cm} \sum_{\substack{l_{1}+l_{2}=l \\ n_{1}+n_{2}=n+1\\ w_{1}+w_{2}+w_{3}=w }}\!\!\!\!\! 4n_{1}n_{2} c_{\{w_{j}\}} (\Lambda+m)^{D-2n_{1}-|w_{1}|+1-\frac{1}{r}}\, \Pol_{1}\left(\log\frac{\Lambda+m}{m}\right) \Pol_{2}\left(\frac{|\vec{p}|}{\Lambda+m}\right) \m(\vec{x},D',D,w_{1}) \\
&\times   (\Lambda+m)^{-3-|w_{2}|}\Pol_{3}(\frac{|q|}{\Lambda+m}) \cdot (\Lambda+m)^{4-2n_{2}-|w_{3}|}\ \Pol_{4}\left(\log\frac{\Lambda+m}{m}\right) \Pol_{5}\left(\frac{|\vec{p}|}{\Lambda+m}\right)\\
&\leq\, (\Lambda+m)^{D-2n-|w|-\frac{1}{r}} \m(\vec{x},D',D,w) \Pol_{1}\left(\log\frac{\Lambda+m}{m}\right) \Pol_{2}\left(\frac{|\vec{p}|}{\Lambda+m}\right)
\end{split}
\een
Here we used the inequality
\ben
|\partial_{\vec{p}}^{w}\dot{C}^{\Lambda}(q)| \leq (\La+m)^{-3-|w|} \Pol\left(\frac{|q|}{\La+m}\right)
\een
as well as $|q|\leq |\vec{p}|$, and also the fact that
\ben\label{xiprop}
\m(\vec{x},D',D,w_{1})\leq \m(\vec{x},D',D,w)
\een 
for $|w_{1}|\leq |w|$. Recall also that, by a sight abuse of notation, we choose new polynomials $\Pol_{i}$ in different steps of these inequalities, which allows us to absorb numerical factors. We conclude that also the second term in the flow equation satisfies a bound consistent with \eqref{eqbound} in the case $t=1$, $x_{N}=0$.

Again, we verify the case $t=0$ by integrating the inequality \eqref{ineq2nd} over $\La'$ and using lemma \ref{lemA}, which yields the estimate
\ben
\begin{split}
&\Big|\int_{\La}^{\La_{0}}\d\La' \sum_{\substack{l_{1}+l_{2}=l \\ n_{1}+n_{2}=n+1\\ w_{1}+w_{2}+w_{3}=w }}\!\!\!\!\! 4n_{1}n_{2} c_{\{w_{j}\}} \ \partial_{\vec{p}}^{w_{1}}\F^{\Lambda',\Lambda_{0}}_{D,2n_{1},l_{1}}(\bigotimes_{i=1}^{N}\O_{A_{i}}; q,  p_{1},\ldots,p_{2n_{1}-1})\\
 &\qquad\times \partial_{\vec{p}}^{w_{2}}\dot{C}^{\Lambda'}(q)\,   \partial_{\vec{p}}^{w_{3}}\L^{\Lambda',\Lambda_{0}}_{2n_{2},l_{2}}(  p_{2n_{1}},\ldots,p_{2n}) \Big| \\
&\leq\, (\Lambda+m)^{D-2n-|w|+1-\frac{1}{r}}\,\m(\vec{x},D',D,w)\,   \Pol_{1}\left(\log\frac{\Lambda+m}{m}\right) \Pol_{2}\left(\frac{|\vec{p}|}{\Lambda+m}\right)
\end{split}
\een

We conclude that this contribution reproduces the inductive bound (with $x_{N}=0$ and $t=0$) as well.

\item \underline{\textsf{Third term on the r.h.s. of the flow equation:}}

In order to bound this \emph{source term}, we first derive the following estimate on the momentum integral. To keep formulas at a reasonable length, we will use the notation
\ben\label{vecpi}
\vec{p}_{i}=(p_{2n_{i-1}},\ldots,p_{2n_{i}-1})
\een
in the following, where $i$ takes values between $1$ and $N$ and where we set $p_{2n_{0}}:=p_{1}$ and $p_{2n_{N}-1}:=p_{2n}$.
\begin{lemma}\label{lem}
Let $n_{1}+\ldots+n_{N}=n+1$ and $l_{1}+\ldots+l_{N}=l$.
\ben\label{lemeq}
\begin{split}
&\bigg|\ \partial^{w}_{\vec{p}}\int_{k}\sum_{1\leq a<b\leq N} \L^{\Lambda,\Lambda_{0}}_{2n_{a},l_{a}}(\O_{A_{a}}(x_{a}); k,  \vec{p}_{a})\,  \dot{C}^{\Lambda}(k)\,  \L^{\Lambda,\Lambda_{0}}_{2n_{b},l_{b}}(\O_{A_{b}}(x_{b}); -k,   \vec{p}_{b})\\
&\hspace{2cm}\times \prod_{c\in\{1,\ldots,N\}\setminus\{a,b\}} \L^{\Lambda,\Lambda_{0}}_{2n_{c},l_{c}}(\O_{A_{c}}(x_{c});  \vec{p}_{c})\ \bigg|\\
&\leq \frac{\max\limits_{1\leq i\leq N}|x_{i}|^{|w|} \cdot (m\min\limits_{1\leq i<j\leq N}|x_{i}-x_{j}|)^{-s} }{\min\limits_{1\leq i<j\leq N}|x_{i}-x_{j}|^{D'-D+|w|-1+\frac{1}{r}}  } \ (\Lambda+m)^{D-2n-|w|-\frac{1}{r}}\Pol_{1}(\frac{\log(\Lambda+m)}{m})\Pol_{2}(\frac{|\vec{p}|}{\Lambda+m})
\end{split}
\een
where $D\leq D'=[A_{1}]+\ldots+[A_{N}]$ and where $s\in\mathbb{N}_{0}$ and $r\in\mathbb{N} $. 
\end{lemma}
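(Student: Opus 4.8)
The plan is to estimate \eqref{lemeq} directly — this lemma is not part of the flow‑equation induction itself, but feeds into it — using only the already‑known bounds for CAG's with one insertion; the one new ingredient is an integration by parts in the loop momentum $k$ that turns the oscillation of the integrand into decay in the separations $|x_i-x_j|$.

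First I would use translation invariance \eqref{CAGtrans} to move every insertion to the origin. The factor carrying $\O_{A_a}(x_a)$ with momenta $(k,\vec p_a)$ then produces a phase $\e^{ix_a\cdot(k+\sum_j p_{a,j})}$, the one carrying $\O_{A_b}(x_b)$ with momenta $(-k,\vec p_b)$ a phase $\e^{ix_b\cdot(-k+\sum_j p_{b,j})}$, and each remaining factor $\O_{A_c}(x_c)$ a phase $\e^{ix_c\cdot\sum_j p_{c,j}}$ with no $k$‑dependence; the two $k$‑dependent phases combine into $\e^{i(x_a-x_b)\cdot k}$. Applying the Leibniz rule to $\partial_{\vec p}^{w}$, each derivative falling on one of the (momentum‑dependent) phases produces a factor $x_c$ — at most $|w|$ of them, bounded by $\max_i|x_i|^{|w|}$ — while the remaining derivatives $\partial^{w'}$, $|w'|\le|w|$, are distributed over the origin‑based moments $\partial^{w'_c}\L^{\La,\Lao}_{2n_c,l_c}(\O_{A_c}(0);\cdot)$. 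These I estimate with the known CAG‑with‑one‑insertion bounds \eqref{CAGoneinsert}, which hold to all orders in momentum derivatives and give $(\La+m)^{[A_c]-2n_c-|w'_c|}$ times polynomials in $\log\tfrac{\La+m}{m}$ and $\tfrac{|\vec p|}{\La+m}$ (all subtuple sums of $\vec p$ being $\le|\vec p|$); and I bound $\dot C^{\La}(k)=\tfrac{2}{\La^{3}}\e^{-(k^{2}+m^{2})/\La^{2}}$ and its $k$‑derivatives by $C(\La+m)^{-3-j}\,\Pol(|k|/\La)\,\e^{-k^{2}/(2\La^{2})}$, using the factor $\e^{-m^{2}/\La^{2}}$ to trade $\La^{-3}$ for $(\La+m)^{-3}$.

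The core step is the $k$‑integral $\int_k\e^{i(x_a-x_b)\cdot k}\,h(k)$, where $h$ collects the two $k$‑dependent moments and $\dot C^{\La}(k)$. Writing $\e^{iy\cdot k}=(i|y|^{2})^{-1}(y\cdot\partial_k)\e^{iy\cdot k}$ with $y=x_a-x_b$ and integrating by parts $M$ times — boundary terms vanish by Gaussian decay — picks up a prefactor $|y|^{-2M}$ and replaces $h$ by $(y\cdot\partial_k)^{M}h$, which is bounded by $|y|^{M}$ times the $M$‑th $k$‑derivative of $h$; each such derivative lowers the $(\La+m)$‑power by one, and the residual $\d^{4}k$ integral of a polynomial times $\e^{-k^{2}/(2\La^{2})}$ is $\le C\La^{4}\le C(\La+m)^{4}$. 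Combining with the trivial ($M=0$) bound yields, for every $M\in\mn_{0}$, a factor $\min\!\big(1,(|x_a-x_b|(\La+m))^{-M}\big)$, hence — since $\min(1,t^{-M})\le t^{-\theta}$ for $0\le\theta\le M$ — a factor $(|x_a-x_b|(\La+m))^{-\theta}$ for any real $\theta\ge0$. Collecting powers, a single term (one pair $a<b$, one Leibniz split $w=w'+w''$) is bounded by $\max_i|x_i|^{|w|}\,(\La+m)^{D'-2n-1-|w'|}\,(|x_a-x_b|(\La+m))^{-\theta}$ times polynomials of the stated type, with $D'=[A_1]+\dots+[A_N]$.

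It then remains to choose $\theta$ and read off \eqref{lemeq}. Splitting $(|x_a-x_b|(\La+m))^{-\theta}=(|x_a-x_b|(\La+m))^{-s}\,(|x_a-x_b|(\La+m))^{-(\theta-s)}$ and using $\La+m\ge m$ together with $|x_a-x_b|\ge\min_{i<j}|x_i-x_j|$ turns the first factor into $(m\min_{i<j}|x_i-x_j|)^{-s}$; choosing $\theta-s=D'-D-1+\tfrac1r+(|w|-|w'|)$ converts $(\La+m)^{D'-2n-1-|w'|}(\La+m)^{-(\theta-s)}$ into $(\La+m)^{D-2n-|w|-1/r}$, while $|x_a-x_b|^{-(\theta-s)}\le\min_{i<j}|x_i-x_j|^{-(\theta-s)}\le\min_{i<j}|x_i-x_j|^{-(D'-D+|w|-1+1/r)}$ gives the stated denominator. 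Summing the finitely many terms over the pairs $a<b$ and over the Leibniz decompositions of $w$, and absorbing numerical constants into new polynomials $\Pol_1,\Pol_2$ with positive coefficients, produces \eqref{lemeq}. I expect the genuine difficulty to lie entirely in this last bookkeeping: one must check that the integration‑by‑parts order $M\ge\theta$ is admissible, i.e. $\theta\ge0$, in all regimes — this is exactly where the strictly positive slack $1/r$ is used — together with a separate, cruder treatment of the borderline cases $D'=D$, $w'=w$ and of the two regimes where $\min_{i<j}|x_i-x_j|(\La+m)$ is small versus large (and correspondingly of $\max_i|x_i|^{|w''|}$ versus $\max_i|x_i|^{|w|}$); by contrast, differentiation under the $k$‑integral, interchange with $\partial_{\vec p}^{w}$, and convergence of all the $k$‑integrals are routine given the uniform Gaussian control supplied by $\dot C^{\La}$.
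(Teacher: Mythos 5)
Your argument is correct in its essentials, but it replaces the paper's key technical device by a different and more elementary one. The paper's proof also begins with translation invariance \eqref{CAGtrans}, the Leibniz rule, the one--insertion bounds \eqref{CAGoneinsert} and the Gaussian control of $\dot C^{\La}$, and it also integrates by parts in the loop momentum (in the single component $k_\alpha$ for which $|(x_a-x_b)_\alpha|$ is maximal, introducing $|w_1|+D'-D+s-1$ derivatives). To manufacture the fractional powers $|x_a-x_b|^{-1/r}(\La+m)^{-1/r}$, however, it then performs the nonlinear substitution $k_\mu=\tilde k_\mu^{\,r}$ for odd $r$, integrates by parts once more in $\tilde k_\alpha$, and invokes the van der Corput lemma to bound $\int_0^{\tilde k_\alpha}\e^{i\tau^r(x_a-x_b)_\alpha}\,\d\tau$ by $3\,|(x_a-x_b)_\alpha|^{-1/r}$ [see \eqref{estimate1}]; even $r$ is then recovered by comparing the bounds for $r\pm 1$ according to whether $(\La+m)\min_{i<j}|x_i-x_j|$ exceeds $1$. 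Your route --- integer--order integration by parts giving $\min(1,t^{-M})$ with $t=|x_a-x_b|(\La+m)$, followed by the elementary interpolation $\min(1,t^{-M})\le t^{-\theta}$ for real $0\le\theta\le M$ --- delivers every exponent $\theta\ge 0$ at once, needs no odd/even case split and no oscillatory--integral lemma, and is in fact the same interpolation the paper already uses implicitly in its even--$r$ step. The remaining bookkeeping (the choice of $\theta$, and trading $\max_i|x_i|^{|w''|}\,|x_a-x_b|^{-|w''|}$ for $(\max_i|x_i|/\min_{i<j}|x_i-x_j|)^{|w|}$ up to a factor $2^{|w|}$ via $\max_i|x_i|\ge\tfrac12\min_{i<j}|x_i-x_j|$) works as you indicate.

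One caution about the borderline case you flag: when $s=0$, $D=D'$ and all of $w$ lands on the moments, so that your $\theta=-1+\tfrac1r$ is negative for $r\ge 2$, the issue is not that a ``cruder treatment'' is needed --- the stated bound is simply false there. For $N=2$, $\O_{A_1}=\O_{A_2}=\varphi^2$, $D=D'=4$, $w=s=0$ and lowest order, the integral equals a nonzero multiple of $\La\,\e^{-m^2/\La^2}\e^{-\La^2|x_1-x_2|^2/4}$, which does not vanish as $x_1\to x_2$, whereas the right--hand side of \eqref{lemeq} does for $r\ge 2$. The paper's own proof tacitly assumes $|w_1|+D'-D+s-1\ge 0$ for the derivative count in \eqref{3rdirr1} to make sense, so this defect is inherited from the lemma rather than introduced by your argument; for $r=1$, or whenever $s\ge 1$ or $D<D'$, both proofs go through.
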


\begin{proof}[Proof of Lemma \ref{lem}:]
Using the translation properties of the CAG's with one insertion, eq.\eqref{CAGtrans}, we obtain:
\ben\label{3rdirr1}
\begin{split}
&\bigg|\ \partial^{w}_{\vec{p}}\int_{k}\sum_{1\leq a<b\leq N} \L^{\Lambda,\Lambda_{0}}_{2n_{a},l_{a}}(\O_{A_{a}}(x_{a}); k,  \vec{p}_{a})\,  \dot{C}^{\Lambda}(k)\,  \L^{\Lambda,\Lambda_{0}}_{2n_{b},l_{b}}(\O_{A_{b}}(x_{b}); -k,   \vec{p}_{b})\\
&\hspace{2cm}\times \prod_{c\in\{1,\ldots,N\}\setminus\{a,b\}} \L^{\Lambda,\Lambda_{0}}_{2n_{c},l_{c}}(\O_{A_{c}}(x_{c});  \vec{p}_{c})\ \bigg|\\
&=\, \bigg| \int_{k }\sum_{\atop{1\leq a<b\leq N}{w_{1}+w_{2}=w}} c_{\{w_{i}\}} \pa^{w_{1}}_{\vecp} \e^{i (p_{1}+\ldots+p_{2n_{1}-1}) x_{1}+ \ldots + i (p_{2n_{N-1}}+\ldots+p_{2n_{N}-1}) x_{N}+ i k (x_{a}-x_{b})}\\
&\times \pa^{w_{2}}_{\vecp} \Big(\L^{\Lambda,\Lambda_{0}}_{2n_{a},l_{a}}(\O_{A_{a}}(0); k,  \vec{p}_{a})\,  \dot{C}^{\Lambda}(k)\,  \L^{\Lambda,\Lambda_{0}}_{2n_{b},l_{b}}(\O_{A_{b}}(0); -k,   \vec{p}_{b}) \prod_{c\neq a,b} \L^{\Lambda,\Lambda_{0}}_{2n_{c},l_{c}}(\O_{A_{c}}(0);   \vec{p}_{c}) \Big) \bigg|\\
&\leq\,2^{|w|} \bigg| \int_{k }\sum_{\atop{1\leq a<b\leq N}{w_{1}+w_{2}=w}} \frac{\max_{1\leq i\leq N}|x_{i}|^{|w_{1}|}}{||x_{a}-x_{b}||^{|w_{1}|+D'-D+s-1}}\, \e^{i (p_{1}+\ldots+p_{2n_{1}-1}) x_{1}+ \ldots + i k (x_{a}-x_{b})}\\
&\times \partial_{k_{\alpha}}^{|w_{1}|+D'-D+s-1}\pa^{w_{2}}_{\vecp} \Big(\L^{\Lambda,\Lambda_{0}}_{2n_{a},l_{a}}(\O_{A_{a}}(0); k,  \vec{p}_{a})\,  \dot{C}^{\Lambda}(k)\,  \L^{\Lambda,\Lambda_{0}}_{2n_{b},l_{b}}(\O_{A_{b}}(0); -k,   \vec{p}_{b})\\
&\hspace{4cm}\times \prod_{c\neq a,b} \L^{\Lambda,\Lambda_{0}}_{2n_{c},l_{c}}(\O_{A_{c}}(0);   \vec{p}_{c}) \Big) \bigg|\\
\end{split}
\een
Here we used the notation $||x||:=\max_{1\leq \mu\leq  4}|x_{\mu}|$. In the last line we used partial integration in $k_{\alpha}$, where the index $\alpha$ is defined via $||x_{a}-x_{b}||=:|(x_{a}-x_{b})_{\alpha}|$. Now transform the integration variables $k_{\mu}=(\tilde{k}_{\mu})^{r}$, where $r\in\mathbb{N}$ is odd.
%
%
\ben\label{3rdirr2}
\begin{split}
&\text{r.h.s. of \eqref{3rdirr1} } \leq 2^{|w|}\bigg| \int_{\tilde{k} }\, \left(\prod_{\mu=1}^{4} r (\tilde{k}_{\mu})^{r-1}\right)\sum_{\atop{1\leq a <b\leq N}{w_{1}+w_{2}=w} } \frac{\max_{1\leq i\leq N}|x_{i}|^{|w_{1}|}}{||x_{a}-x_{b}||^{|w_{1}|+D'-D+s-1}} \\
&\times \e^{i (p_{1}+\ldots+p_{2n_{1}-1}) x_{1}+ \ldots+ i \tilde{k}_{\mu}^{r} (x_{a}-x_{b})^{\mu}} \partial_{k_{\alpha}}^{|w_{1}|+D'-D+s-1}\pa^{w_{2}}_{\vecp} \Big(\L^{\Lambda,\Lambda_{0}}_{2n_{a},l_{a}}(\O_{A_{a}}(0); k,  \vec{p}_{a})   \\
&  \times \dot{C}^{\Lambda}(k) \L^{\Lambda,\Lambda_{0}}_{2n_{b},l_{b}}(\O_{A_{b}}(0); -k,   \vec{p}_{b}) \prod_{c\neq a,b} \L^{\Lambda,\Lambda_{0}}_{2n_{c},l_{c}}(\O_{A_{c}}(0);   \vec{p}_{c}) \Big)_{k_{\mu}=\tilde{k}_{\mu}^{r}} \bigg|
\end{split}
\een
We now integrate by parts in $\tilde{k}_{\alpha}$:
\ben\label{3rdirr3}
\begin{split}
&\text{r.h.s. of \eqref{3rdirr2} } \leq
2^{|w|}r^{4}  \int_{\tilde{k} } \bigg| \sum_{\atop{1\leq a<b\leq N}{w_{1}+w_{2}=w} }  \left(\int_{0}^{\tilde{k}_{\alpha}} \e^{i \tau^{r} (x_{a}-x_{b})_{\alpha}}\d \tau\right) \frac{  \max_{1\leq i\leq N}|x_{i}|^{|w_{1}|} }{||x_{a}-x_{b}||^{D'-D+s-1+|w_{1}|}}
 \\
&\times \partial_{\tilde{k}_{\alpha}}\bigg[  \left(\prod_{\mu=1}^{4}  ( \tilde{k}_{\mu})^{r-1}\right) \partial_{k_{\alpha}}^{D'-D+s-1+|w_{1}|}\pa^{w_{2}}_{\vecp}\bigg( \L^{\Lambda,\Lambda_{0}}_{2n_{a},l_{a}}(\O_{A_{a}}(0); k,  \vec{p}_{a})\,  \dot{C}^{\Lambda}(k)\\
&\hspace{4cm}\times\L^{\Lambda,\Lambda_{0}}_{2n_{b},l_{b}}(\O_{A_{b}}(0); -k,   \vec{p}_{b}) \prod_{c\neq a,b} \L^{\Lambda,\Lambda_{0}}_{2n_{c},l_{c}}(\O_{A_{c}}(0);   \vec{p}_{c})\bigg)_{k_{\mu}=\tilde{k}_{\mu}^{r}}\bigg]\bigg|
\end{split}
\een
We then make use of the following estimate:
\ben\label{estimate1}
\Big|\int_{0}^{\tilde{k}_{\alpha}} \e^{i \tau^{r} (x_{a}-x_{b})_{\alpha}}\d \tau \Big| = \Big|\int_{0}^{\tilde{k}_{\alpha}(x_{a}-x_{b})_{\alpha}^{\frac{1}{r}}} \e^{i \tau^{r} } (x_{a}-x_{b})_{\alpha}^{-\frac{1}{r}}\d \tau \Big|\leq \frac{3}{|(x_{a}-x_{b})_{\alpha}|^{\frac{1}{r}}}
\een
\begin{proof}[Proof of estimate \eqref{estimate1}:]
We decompose the integral as follows:
\ben
\int_{0}^{a} e^{i \tau^{r} }\d \tau = \int_{0}^{1} e^{i \tau^{r}}\d \tau + \int_{1}^{|a|} e^{i \tau^{r}}\d \tau
\een
The first integral on the r.h.s. can be estimated trivially. To estimate the second contribution, we make use of (a special case of) the \emph{van der Corput Lemma}~\cite{stein1993harmonic}:
\ben
\Big|  \int_{1}^{|a|} e^{i \tau^{r}}\d \tau \Big| \leq 3
\een
The inequality \eqref{estimate1} then follows immediately.
\end{proof}
We thus arrive at
\ben\label{3rdirr4}
\begin{split}
&\text{r.h.s. of \eqref{3rdirr3} } \leq
2^{|w|+2}r^{5} \bigg| \int_{\tilde{k} } \sum_{\atop{1\leq a<b\leq N}{w_{1}+w_{2}=w}} \bigg[ \frac{ \max_{1\leq i\leq N}|x_{i}|^{|w_{1}|}\left(\prod_{\mu=1}^{4}  (\tilde{k}_{\mu})^{r-1}\right)}{||x_{a}-x_{b}||^{D'-D+|w_{1}|+s+\frac{1}{r}-1}} \\
&\times \left( \tilde{k}_{\alpha}^{-1} +r \tilde{k}_{\alpha}^{r-1} \partial_{k_{\alpha}}  \right)  \partial_{k_{\alpha}}^{D'-D+|w_{1}|+s-1}\pa^{w_{2}}_{\vecp}\bigg( \L^{\Lambda,\Lambda_{0}}_{2n_{a},l_{a}}(\O_{A_{a}}(0); k,  \vec{p}_{a})\,  \dot{C}^{\Lambda}(k) \\
&\times \L^{\Lambda,\Lambda_{0}}_{2n_{b},l_{b}}(\O_{A_{b}}(0); -k,   \vec{p}_{b}) \prod_{c\neq a,b} \L^{\Lambda,\Lambda_{0}}_{2n_{c},l_{c}}(\O_{A_{c}}(0);   \vec{p}_{c})\bigg)_{k_{\mu}=\tilde{k}_{\mu}^{r}}\bigg]\bigg|
\end{split}
\een
Inserting the known bounds for the CAG's with one insertion~\cite{Keller:1991bz,Muller:2002he,Hollands:2011gf}
\ben\label{CAGoneinsert}
|\partial_{\vec{p}}^{w}\L^{\Lambda,\Lambda_{0}}_{2n,l}( \O_{A}, \vec{p})| \leq (\Lambda+m)^{[A]-2n-|w|}\ \Pol_{1}\left(\log\frac{\Lambda+m}{m}\right) \Pol_{2}\left(\frac{|\vec{p}|}{\Lambda+m}\right)
\een
and using the elementary estimate
\ben
|\partial^{w}\dot{C}^{\La}(k)| \leq \La^{-|w|-3}\, \Pol(\frac{|k|}{\La})\, \e^{-\frac{k^{2}+m^{2}}{\La^{2}}}
\een
  yields:
\ben\label{3rdirr5}
\begin{split}
&\text{r.h.s. of \eqref{3rdirr4} } \leq
 \bigg| \int_{\tilde{k} }  \bigg[ \frac{ \max_{1\leq i\leq N}|x_{i}|^{|w|}\left(\prod_{\mu=1}^{4}  (\tilde{k}_{\mu})^{r-1}\right)}{\min\limits_{1\leq  i < j\leq N}|x_{i}-x_{j}|^{D'-D+|w|+s+\frac{1}{r}-1}} \left( \tilde{k}_{\alpha}^{-1} +r \tilde{k}_{\alpha}^{r-1} /\Lambda  \right) \\
 &\times (\Lambda+m)^{D-2n-2} \Lambda^{-|w|-s-2} \Pol_{1}(\log\frac{\Lambda+m}{m})\Pol_{2}(\frac{|k|+|\vec{p}|}{\Lambda+m})\Pol_{3}(\frac{|k|}{\Lambda}) \e^{-\frac{k^{2}+m^{2}}{\Lambda^{2}}}\bigg]_{k_{\mu}=\tilde{k}_{\mu}^{r}}  \bigg|\\
&\leq  \bigg| \int_{(\tilde{k}/\Lambda^{\frac{1}{r}}) }  \bigg[ \frac{ \max_{1\leq i\leq N}|x_{i}|^{|w|}\left(\prod_{\mu=1}^{4}  (\tilde{k}_{\mu}/\Lambda^{\frac{1}{r}})^{r-1}\right)}{\min\limits_{1\leq  i < j\leq N}|x_{i}-x_{j}|^{D'-D+|w|+\frac{1}{r}-1+s}} \left( (\tilde{k}_{\alpha}/\Lambda^{\frac{1}{r}})^{-1} +r (\tilde{k}_{\alpha}/\Lambda^{\frac{1}{r}})^{r-1}  \right)\\
&\times  (\Lambda+m)^{D-2n-|w|-\frac{1}{r}}\, m^{-s}\, \Pol_{1}(\frac{\log(\Lambda+m)}{m})\Pol_{2}(\frac{|\vec{p}|}{\Lambda+m}) \Pol_{3}(\frac{|k|}{\La})\e^{-\frac{k^{2}}{\Lambda^{2}}}\bigg]_{k_{\mu}=\tilde{k}_{\mu}^{r}}  \bigg|\\
&\leq \frac{\max\limits_{1\leq i\leq N}|x_{i}|^{|w|}\, (\min\limits_{1\leq i<j\leq N}|x_{i}-x_{j}|\cdot m)^{-s} }{\min\limits_{1\leq  i < j\leq N}|x_{i}-x_{j}|^{D'-D+|w|+\frac{1}{r}-1}} \ (\Lambda+m)^{D-\frac{1}{r}-2n-|w|}\Pol_{1}(\frac{\log(\Lambda+m)}{m})\Pol_{2}(\frac{|\vec{p}|}{\Lambda+m})
\end{split}
\een
In the second inequality we used the exponential $\e^{-m^{2}/\La^{2}}$ in order to replace inverse powers of $\La$ by inverse powers of $\La+m$, i.e. for any $a\in\mathbb{N}$
\ben
\left(\frac{\La+m}{\La}\right)^{a}\e^{-m^{2}/\La^{2}} \leq 2^{a}\, \sqrt{a!}\, .
\een
To obtain the last inequality in \eqref{3rdirr5} we have estimated the loop integral via \eqref{gammaloop} and have absorbed this numerical factor into the (new) polynomial coefficients. This finishes the proof of the lemma for the case of $r$ odd. To see that the bound also holds for any even $r\in\mathbb{N}$, we note that by our previous discussions it holds for both $r+1$ and $r-1$. Depending on whether $(\La+m)/\min|x_{i}-x_{j}|$ is greater or smaller than one, one of these cases therefore implies the bound for even $r$.
\end{proof}
With lemma \ref{lem} at hand, we conclude that also the source terms (i.e. the third term on the r.h.s. of the flow equation) satisfies a bound that is consistent with our claim, \eqref{eqbound}, in the case $t=1$ and $x_{N}=0$. 

The case $t=0$ again follows by estimating the $\Lambda$-integral over the source terms in the irrelevant case,
\ben
\begin{split}
&\bigg|\ \int_{\Lambda}^{\Lambda_{0}}\d\Lambda'\,  \partial^{w}_{\vec{p}}\int_{k}\sum_{1\leq a<b\leq N} \L^{\Lambda',\Lambda_{0}}_{2n_{a},l_{a}}(\O_{A_{a}}(x_{a}); k,  \vec{p}_{a})\,  \dot{C}^{\Lambda'}(k)\,  \L^{\Lambda',\Lambda_{0}}_{2n_{b},l_{b}}(\O_{A_{b}}(x_{b}); -k,   \vec{p}_{b})\\
&\qquad\times \prod_{c\in\{1,\ldots,N\}\setminus\{a,b\}} \L^{\Lambda',\Lambda_{0}}_{2n_{c},l_{c}}(\O_{A_{c}}(x_{c});  \vec{p}_{c})\ \bigg|\\
&\leq\int_{\Lambda}^{\Lambda_{0}}\d\Lambda'\ \frac{\max\limits_{1\leq i\leq N}|x_{i}|^{|w|}\, (\min\limits_{1\leq i<j\leq N}|x_{i}-x_{j}|\cdot m)^{-s}  }{\min\limits_{1\leq  i < j\leq N}|x_{i}-x_{j}|^{D'-D+|w|+\frac{1}{r}-1}}  \\
&\qquad\times (\Lambda'+m)^{D-\frac{1}{r}-2n-|w|}\Pol_{1}(\frac{\log(\Lambda'+m)}{m})\Pol_{2}(\frac{|\vec{p}|}{\Lambda'+m})\\
&\leq \frac{\max\limits_{1\leq i\leq N}|x_{i}|^{|w|}\, (\min\limits_{1\leq i<j\leq N}|x_{i}-x_{j}|\cdot m)^{-s}  }{\min\limits_{1\leq  i < j\leq N}|x_{i}-x_{j}|^{D'-D+|w|+\frac{1}{r}-1}}  (\Lambda+m)^{D+1-\frac{1}{r}-2n-|w|}\Pol_{1}(\frac{\log(\Lambda+m)}{m})\Pol_{2}(\frac{|\vec{p}|}{\Lambda+m})\, ,
\end{split}
\een
where we applied lemma \ref{lemA} in the last line. We see that the estimate is consistent with the inductive bound \eqref{eqbound} in the case $t=0$, $x_{N}=0$.
\end{itemize}

\paragraph{Relevant terms ($2n+|w|\leq D$) at vanishing external momentum $\vec{p}=\vec{0}$:}

\begin{itemize}

\item \underline{\textsf{First term on the r.h.s. of the flow equation:}}

The case $t=1$ works just as before, i.e. we proceed as in the inequality \eqref{ineq1st}. To find a bound for the case $t=0$, we again have to perform a $\La'$ integral, but, due to the different boundary conditions in the relevant case, we now integrate between $0$ and $\La$:
\ben
\begin{split}
&\Big|\int_{0}^{\Lambda}\d\Lambda' \left(\atop{2n+2}{2}\right)\int\d^{4}k \ \dot{C}^{\Lambda'}(k) \, \partial_{\vec{p}}^{w}\F_{D,2n+2,l-1}^{\Lambda',\Lambda_{0} }(\bigotimes_{i=1}^{N}\O_{A_{i}}; k,-k, 0,\ldots, 0)\Big|\\
&\leq \int_{0}^{\Lambda}\, (\Lambda'+m)^{D-2n-|w|-\frac{1}{r}}\, \m(\vec{x},D',D,w)\, \Pol\left(\log\frac{\Lambda'+m}{m}\right) \\ &\leq (\Lambda+m)^{D-2n-|w|+1-\frac{1}{r}}\,\m(\vec{x},D',D,w)\, \Pol\left(\log\frac{\Lambda+m}{m}\right)
\end{split}
\een
in agreement with the inductive bound \eqref{eqbound}. Here we have used \eqref{ineq1st} to obtain the first inequality. To arrive at the last line, we made use of the following estimate:
\begin{lemma}\label{lemB}
Let $s\in\mathbb{N}_{0}$ and $r\in\mathbb{R}_{+}$. Then
\ben
\int_{1}^{b}\d x\, x^{r-1} (\log x)^{s} < \frac{s!}{r^{s}} + \frac{b^{r}}{r} \bigg| (\log b)^{s} - \frac{s}{r} (\log b)^{s-1} + \frac{s (s-1)}{r^{2}}(\log b)^{s-2}-\ldots +(-1)^{s}\frac{s!}{r^{s}} \bigg|
\een
where $1 \leq b$.
\end{lemma}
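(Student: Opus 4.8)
The plan is to reduce Lemma \ref{lemB} to a standard integral and then estimate its closed form. First I would substitute $x=\e^{t}$, so that $\log x=t$ and $\d x=\e^{t}\,\d t$; as $x$ ranges over $[1,b]$ the variable $t$ ranges over $[0,T]$ with $T:=\log b\ge 0$, and the integral in question becomes $\int_{0}^{T}\e^{rt}\,t^{s}\,\d t$.

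Next I would compute this integral exactly, by induction on $s\in\mathbb{N}_{0}$ (equivalently, by $s$ successive integrations by parts, each time integrating $\e^{rt}$ and differentiating the remaining power of $t$). The base case $s=0$ gives $(\e^{rT}-1)/r$, and the recursion $\int_{0}^{T}\e^{rt}t^{s}\,\d t=\frac{1}{r}\e^{rT}T^{s}-\frac{s}{r}\int_{0}^{T}\e^{rt}t^{s-1}\,\d t$, combined with the identity $\frac{s}{r}\cdot\frac{(s-1)!}{(s-1-j)!\,r^{j}}=\frac{s!}{(s-1-j)!\,r^{j+1}}$ (which shifts the polynomial index correctly) and the scalar recursion $c_{s}=-\frac{s}{r}c_{s-1}$, $c_{0}=-1/r$, yields
\ben
\int_{0}^{T}\e^{rt}\,t^{s}\,\d t=\frac{\e^{rT}}{r}\sum_{j=0}^{s}(-1)^{j}\,\frac{s!}{(s-j)!\,r^{j}}\,T^{s-j}+(-1)^{s+1}\,\frac{s!}{r^{s+1}}\, .
\een

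Finally I would resubstitute $T=\log b$, $\e^{rT}=b^{r}$, and bound: the first term is at most $\frac{b^{r}}{r}$ times the absolute value of $\sum_{j=0}^{s}(-1)^{j}\frac{s!}{(s-j)!\,r^{j}}(\log b)^{s-j}$, which is precisely the modulus written in the statement of the lemma, while the constant term is at most $\frac{s!}{r^{s+1}}\le\frac{s!}{r^{s}}$. Adding these gives the claimed inequality; it is strict whenever $r>1$, which is the case in every application of the lemma in the proof of Theorem \ref{thmbound} (there the exponent that plays the role of $r$ here exceeds $1$ after the change of variables $u=(\Lambda'+m)/m$ that turns the $\Lambda'$-integral over relevant terms into the form above). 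I expect no genuine obstacle here: the lemma is a bookkeeping computation, the only mild nuisance being the tracking of the alternating coefficients $\frac{s!}{(s-j)!\,r^{j}}$ in the primitive, exactly as in the companion Lemma \ref{lemA}, which is proved in \cite{Muller:2002he} for $r\in\mathbb{N}$ and whose argument carries over verbatim to $r\in\mathbb{R}_{+}$.
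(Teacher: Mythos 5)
Your computation is sound and is essentially the argument that the paper merely delegates to the reference \cite{Muller:2002he}: the substitution $x=\e^{t}$ and $s$ integrations by parts produce exactly the alternating primitive whose modulus appears in the statement, and your closed form
\ben
\int_{0}^{T}\e^{rt}t^{s}\,\d t=\frac{\e^{rT}}{r}\sum_{j=0}^{s}(-1)^{j}\frac{s!}{(s-j)!\,r^{j}}\,T^{s-j}+(-1)^{s+1}\frac{s!}{r^{s+1}}
\een
checks out (the $j$-th coefficient $\tfrac{s!}{(s-j)!\,r^{j}}$ reproduces the displayed string $1,\;s/r,\;s(s-1)/r^{2},\dots,s!/r^{s}$). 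The paper itself offers only a citation for integer $r$ together with the remark that the extension to real $r$ is straightforward, so your writeup is, if anything, more complete than what is in the text.

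One step deserves a caveat. The inequality $\tfrac{s!}{r^{s+1}}\le\tfrac{s!}{r^{s}}$, which you use to absorb the constant term, requires $r\ge 1$, whereas the lemma is stated for all $r\in\mathbb{R}_{+}$. This is not a gap you can close, because for $0<r<1$ the stated bound is actually false: take $s=1$, $r=1/2$, $b=\e^{10}$; the integral equals $16\e^{5}+4$, while the right-hand side equals $2+16\e^{5}$. The universally valid constant term is $\tfrac{s!}{r^{s+1}}$ (which is what your exact formula delivers), and the version with $\tfrac{s!}{r^{s}}$ holds only for $r\ge 1$ --- strictly for $r>1$, as you note, but with possible equality at $r=1$ (e.g.\ $s=1$, $b\ge\e$, where both sides equal $1+b\log b-b$). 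So your sentence ``adding these gives the claimed inequality'' silently assumes $r\ge 1$; you should either state that restriction explicitly or keep the sharper constant $\tfrac{s!}{r^{s+1}}$. In effect you have localized a defect in the lemma's hypothesis $r\in\mathbb{R}_{+}$ rather than committed an error of your own.
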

A proof of this lemma for $r\in\mathbb{N}_{0}$ can be found in~\cite{Muller:2002he}. The generalization to non-integer $r$ is again straightforward.

\item \underline{\textsf{Second term on the r.h.s. of the flow equation:}}

The case $t=1$ again follows from \eqref{ineq2nd}. To obtain a bound for the case $t=0$ we combine \eqref{ineq2nd} with lemma \ref{lemB}:
\ben
\begin{split}
&\Big|\int_{0}^{\Lambda}\d\Lambda'\sum_{\substack{l_{1}+l_{2}=l \\ n_{1}+n_{2}=n+1\\ w_{1}+w_{2}+w_{3}=w }}\!\!\!\!\! 4n_{1}n_{2} c_{\{w_{j}\}} \ \partial_{\vec{p}}^{w_{1}}\F^{\Lambda',\Lambda_{0}}_{D,2n_{1},l_{1}}(\bigotimes_{i=1}^{N}\O_{A_{i}}; \vec{0})\,  \partial_{\vec{p}}^{w_{2}}\dot{C}^{\Lambda'}(0)\,   \partial_{\vec{p}}^{w_{3}}\L^{\Lambda',\Lambda_{0}}_{2n_{2},l_{2}}(  \vec{0}) \Big| \\
&\leq\, \int_{0}^{\Lambda}\d\Lambda' (\Lambda'+m)^{D-2n-|w|-\frac{1}{r}} \, \m(\vec{x},D',D,w)\, \Pol\left(\log\frac{\Lambda'+m}{m}\right) \\
&\leq (\Lambda+m)^{D-2n-|w|+1-\frac{1}{r}}\, \m(\vec{x},D',D,w)\, \Pol\left(\log\frac{\Lambda+m}{m}\right)
\end{split}
\een
This is again consistent with \eqref{eqbound} for $t=0$ and $x_{N}=0$.

\item \underline{\textsf{Third term on the r.h.s. of the flow equation:}}

The case $t=1$ follows from lemma \ref{lem}, and the case $t=0$ is verified with the help lemma \ref{lem} and lemma \ref{lemB}:
\ben
\begin{split}
&\bigg|\ \int_{0}^{\Lambda}\d\Lambda'\,  \partial^{w}_{\vec{p}}\int_{k}\sum_{1\leq a<b\leq N} \L^{\Lambda',\Lambda_{0}}_{2n_{a},l_{a}}(\O_{A_{a}}(x_{a}); k,  \vec{0})\,  \dot{C}^{\Lambda'}(k)\,  \L^{\Lambda',\Lambda_{0}}_{2n_{b},l_{b}}(\O_{A_{b}}(x_{b}); -k,   \vec{0})\\
&\qquad\times \prod_{c\in\{1,\ldots,N\}\setminus\{a,b\}} \L^{\Lambda',\Lambda_{0}}_{2n_{c},l_{c}}(\O_{A_{c}}(x_{c});  \vec{0})\ \bigg|\\
&\leq\int_{0}^{\Lambda}\d\Lambda'\ \frac{\max\limits_{1\leq i\leq N}|x_{i}|^{|w|}\, (m\cdot \min\limits_{1\leq i<j\leq N}|x_{i}-x_{j}|)^{-s}  }{\min\limits_{1\leq  i < j\leq N}|x_{i}-x_{j}|^{D'-D+|w|+\frac{1}{r}-1}}  \ (\Lambda'+m)^{D-\frac{1}{r}-2n-|w|}\Pol(\frac{\log(\Lambda'+m)}{m})\\
&\leq \frac{\max\limits_{1\leq i\leq N}|x_{i}|^{|w|}\, (m\cdot \min_{1\leq i<j\leq N}|x_{i}-x_{j}|)^{-s}  }{\min\limits_{1\leq  i < j\leq N}|x_{i}-x_{j}|^{D'-D+|w|+\frac{1}{r}-1}}  \ (\Lambda+m)^{D+1-\frac{1}{r}-2n-|w|}\Pol(\frac{\log(\Lambda+m)}{m})\, .
\end{split}
\een
\end{itemize}

\paragraph{Relevant case ($2n+|w|\leq D$) at non-vanishing momentum $\vec{p}\neq0$:} 

We proceed to arbitrary external momenta $p_{1},\ldots,p_{2n}$ with the help of the Taylor expansion formula
\ben\label{Taylornonzero}
\begin{split}
&|\partial_{\vec{p}}^{w}\F^{\Lambda,\Lambda_{0}}_{D,2n,l}(\bigotimes_{i=1}^{N}\O_{A_{i}}; \vec{p})|\, =\, \Big|\! \!\! \!\sum_{|\tilde{w}| \le D-2n-|w|}\! \! \frac{{\vec p}^{\,\tilde{w}}}{\tilde{w}!}\,
\,\partial_{\vec p }^{\tilde{w}+w} \F^{\Lambda,\Lambda_{0}}_{D,2n,l}(\bigotimes_{i=1}^{N}\O_{A_{i}};{\vec 0})\\
&+\hspace{-.5cm}\sum_{|\tilde{w}|=D+1-2n-|w|} \hspace{-.5cm}
{\vec p}^{\tilde{w}} \int_0^1\d\tau \frac{|\tilde{w}|}{\tilde{w}!}\,(1-\tau)^{|\tilde{w}|-1}
\,\partial_{\tau\vec p }^{\tilde{w}+w}
\F^{\Lambda,\Lambda_{0}}_{D,2n,l}(\bigotimes_{i=1}^{N}\O_{A_{i}};\tau {\vec p})\ \Big|\, .
\end{split}
\een
In view of the estimates derived above, and also using the property
\ben\label{xiprop2}
\m(\vec{x},D',D,w+\tilde{w})=\m(\vec{x},D',D,w)\quad \text{ for }|w|+|\tilde{w}|\leq D+1\, ,
\een
 it is not hard to check that the r.h.s. of \eqref{Taylornonzero} satisfies a bound consistent with our induction hypothesis \eqref{eqbound}, which finishes the proof of theorem~\ref{thmbound} for the case $x_{N}=0$. 
 
 As mentioned earlier, we can proceed to $x_{N}\neq 0$ with the help of the translation formula \eqref{Ftrans}, i.e.
\ben
\partial^{w}_{\vecp}\F^{\Lambda,\Lambda_{0}}_{D,n,l}(\bigotimes_{i=1}^{N}\O_{A_{i}}(x_{i}); p_{1},\ldots,p_{n})=\partial^{w}_{\vecp}\left[\e^{i x_{N} (p_{1}+\ldots+p_{n})}\F^{\Lambda,\Lambda_{0}}_{D,n,l}(\bigotimes_{i=1}^{N}\O_{A_{i}}(x_{i}-x_{N}); p_{1},\ldots,p_{n})\right]\, .
\een 
Distributing the momentum derivatives over the two factors on the right hand side and using the previously established bound for the case $x_{N}=0$, the inequality \eqref{eqbound} is verified easily.
\end{proof}
%
%

\subsection{Regularization on a partial diagonal}\label{appartB}

Our aim in the following is to prove the scaling properties of the partially regularized AG's claimed in \eqref{SDpartbad}, \eqref{SDpart} and \eqref{Hinfrared}. All these estimates follow from the following bound:
\begin{thm}\label{thmboundH}
Let $M<N$ and $t\in\{0,1\}$. For any $D\leq D'=[A_{1}]+\ldots+[A_{M}]$, any $s\in\mathbb{N}_{0}$ and any $r\in\mathbb{N}$, the bound
\ben\label{eqboundH}
\begin{split}
&\Big|\,\partial_{\La}^{t}\partial_{\vecp}^{w}\H^{\Lambda,\Lambda_{0}}_{2n,l}([\bigotimes_{i=1}^{M}\O_{A_{i}}(x_{i})]_{D};\bigotimes_{j=M+1}^{N}\O_{A_{j}}(x_{j}); \vec{p})\, \Big| \leq  \frac{ (\Lambda+m)^{-t-\frac{1}{r}-2n-|w|} \, m^{-s} } {\min\limits_{
 \substack{ 1\leq i\leq M \\ M+1\leq j \leq N } } |x_{i}-x_{j}|^{D+s} 
}\\
& \times\ \frac{1 }{
\min\limits_{ \substack{ 1\leq i\leq N, i\neq j \\ 1\leq j \leq M }}|x_{i}-x_{j}|^{[A_{1}]+\ldots+[A_{M}]-D-1+\frac{1}{r}}
\min\limits_{ \substack{ 1\leq i\leq N, i\neq j \\ M+1\leq j \leq N } }|x_{i}-x_{j}|^{[A_{M+1}]+\ldots+[A_{N}]+1}  }
\\
&\times  \left(\frac{\max_{1\leq i\leq M}|x_{i}-x_{M}|}{\min_{1\leq i<j\leq M}|x_{i}-x_{j}|}\cdot\frac{\max_{M+1\leq i\leq N}|x_{i}-x_{N}|}{\min_{M+1\leq i<j\leq N}|x_{i}-x_{j}|}\right)^{D'+|w|+s+1}\, \sup\Big(1,(\Lambda+m)|x_{M}|\Big)^{|w|}\\
&\times  \left( \frac{\max\limits_{1\leq i\leq N}|x_{i}-x_{M}|}{\min\limits_{1\leq i\leq M<j\leq N}|x_{i}-x_{j}|} \right)^{|w|} \Pol_{1}\left(\log\frac{\Lambda+m}{m}\right) \Pol_{2}\left(\frac{|\vec{p}|}{\Lambda+m}\right)
\end{split}
\een
holds, where $\Pol_{i}(x)$ are polynomials in $x$ with positive coefficients.
\end{thm}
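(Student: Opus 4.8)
The plan is to prove the bound \eqref{eqboundH} by the same inductive scheme on $2n+l$ (ascending in $l$ for fixed $2n+l$) that was used in the proof of Theorem~\ref{thmbound}, now applied to the flow equation \eqref{HFE} for the functionals $H^{\Lambda,\Lambda_{0}}([\otimes_{i=1}^{M}\O_{A_{i}}]_{D};\otimes_{j=M+1}^{N}\O_{A_{j}})$ (with $G^{\Lambda,\Lambda_{0}}(\otimes_{i=1}^{M}\O_{A_{i}})$ replaced by $G_{D}^{\Lambda,\Lambda_{0}}(\otimes_{i=1}^{M}\O_{A_{i}})$, as in Definition~\ref{def:AGpart}). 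First I would write down the expanded momentum-space version of \eqref{HFE}: the $\hbar/2\,\langle\varp,\dot C^{\Lambda}\star\varp\rangle$ term produces a loop term $\int_k \dot C^{\Lambda}(k)\,\partial_{\vec p}^{w}\H_{2n+2,l-1}(\ldots;k,-k,\vec p)$, the mixed term $-\langle\varp H,\dot C^{\Lambda}\star\varp L^{\Lambda,\Lambda_{0}}\rangle$ produces a tree term quadratic in $\H$ and the CAG's $\L_{n,l}$ without insertion, and the inhomogeneity $\langle\varp G_{D}(\otimes_{i\le M}\O_{A_i}),\dot C^{\Lambda}\star\varp G(\otimes_{j>M}\O_{A_j})\rangle$ is a genuine source term built from a $\G_{D}$-moment (on $x_1,\ldots,x_M$) times a $\G$-moment (on $x_{M+1},\ldots,x_N$) contracted through one $\dot C^{\Lambda}(k)$. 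As before, the induction proceeds separately for the irrelevant sector $2n+|w|>D$ (boundary data at $\Lambda_0$ via \eqref{BCH}, so integrate $\Lambda'$ from $\Lambda$ to $\Lambda_0$ using Lemma~\ref{lemA}) and the relevant sector $2n+|w|\le D$ at $\vec p=0$ (boundary data at $\Lambda=0$, integrate $\Lambda'$ from $0$ to $\Lambda$ using Lemma~\ref{lemB}), then extend to $\vec p\neq 0$ via the Taylor formula \eqref{Taylornonzero}, and finally shift from $x_N=0$ to general $x_N$ via the translation identity \eqref{Htrans}; note that the boundary conditions \eqref{BCH} are only of the $\Lambda_0$-type, so the relevant sector actually requires slightly more care — one uses that the Taylor coefficients in $\vec p$ at $\Lambda=0$ are fixed by \eqref{BCH} together with the flow, exactly as in Definition~\ref{def:AGpart}.

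The loop term and the mixed term are handled essentially verbatim as in \eqref{ineq1st} and \eqref{ineq2nd}: insert the inductive bound \eqref{eqboundH} for the lower $\H$-moment, the known CAG bound \eqref{CAGnoinsert} for the factor $\L_{2n_2,l_2}$, and the elementary estimate $|\partial^w\dot C^{\Lambda}(k)|\le\Lambda^{-|w|-3}\Pol(|k|/\Lambda)\e^{-(k^2+m^2)/\Lambda^2}$, then do the $k$-integral with \eqref{gammaloop} and keep track of the monomial in the $|x_i-x_j|$'s — which is unchanged under these operations because all distance factors are simply carried along (the analogue of \eqref{xiprop}, \eqref{xiprop2}, namely monotonicity of the prefactor in $|w|$ and constancy up to order $D'+|w|+s+1$, still holds). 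The genuinely new work is the source term. Here I would prove an analogue of Lemma~\ref{lem}: the source term is $\int_k \G_{D,*}(\otimes_{i\le M}\O_{A_i}(x_i);k,\vec p_a)\,\dot C^{\Lambda}(k)\,\G_{*}(\otimes_{j>M}\O_{A_j}(x_j);-k,\vec p_b)$ summed over the way the external legs are split, and one bounds it by (i) using the translation identities \eqref{Ftrans}/\eqref{CAGtrans} to pull out the phase $\e^{ik(\bar x-\bar x')}$ where $\bar x,\bar x'$ are reference points in the two clusters, (ii) integrating by parts in $k_\alpha$ enough times (with $\alpha$ the component realizing $\min_{i\le M<j}|x_i-x_j|$, i.e. the max-norm of the cluster separation), (iii) doing the change of variables $k_\mu=\tilde k_\mu^{\,r}$ and a further partial integration together with the van der Corput estimate \eqref{estimate1} to extract the fractional power $(\min_{i\le M<j}|x_i-x_j|)^{-1/r}$, and (iv) plugging in Corollary~\ref{corbound} for the $\G_{D,*}$-factor on the first cluster and the $N=-1$ version of the same corollary (i.e. the ordinary $\G$-bound, which is \eqref{SD1}/\eqref{eqboundcor} with $D=D'$) for the $\G_{*}$-factor on the second cluster.

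The main obstacle I expect is precisely this source-term estimate: one must show that the product of the two cluster bounds, after the $k$-integration and the integrations by parts, reorganizes into exactly the three independent distance-monomials that appear in \eqref{eqboundH} — the ``cluster-separation'' factor $(\min_{i\le M<j}|x_i-x_j|)^{-D-s}$ carrying the regularization degree $D$ of the partial diagonal, the ``intra-first-cluster'' factor with exponent $[A_1]+\ldots+[A_M]-D-1+\frac1r$, and the ``intra-second-cluster'' factor with exponent $[A_{M+1}]+\ldots+[A_N]+1$ — together with the two geometric ratio factors raised to the power $D'+|w|+s+1$. Bookkeeping which $x_i-x_j$ distances are ``available'' in each cluster's bound, and checking that the partial integrations (which move $k$-derivatives onto $\dot C^{\Lambda}$ and onto the cluster moments, picking up inverse powers of $\Lambda$ that must be converted to $(\Lambda+m)$ via the $\e^{-m^2/\Lambda^2}$ trick) do not spoil the exponents, is the delicate combinatorial part; the rest is a routine adaptation of the $F$-functional argument. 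Once the source-term lemma is in hand, assembling the three contributions and invoking Lemmas~\ref{lemA}, \ref{lemB} for the final $\Lambda'$-integration gives \eqref{eqboundH}, and the stated consequences \eqref{SDpartbad}, \eqref{SDpart}, \eqref{Hinfrared} follow by the indicated specializations of the parameters $r,w,s$.
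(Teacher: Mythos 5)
Your overall strategy (induction on $2n+l$ via the flow equation, with the source term handled by translation covariance, partial integration in the loop momentum and the van der Corput estimate) is the right one, but there are two concrete problems. First, the relevant/irrelevant split you import from Theorem~\ref{thmbound} does not apply here: the boundary conditions \eqref{BCH} for the $H$-functionals are given at $\Lambda=\Lambda_{0}$ for \emph{all} $n,l,w$, so there is no relevant sector, no integration from $0$ to $\Lambda$ (for which you would have no boundary data to anchor the integration), and no Taylor extension in $\vec p$; every term is integrated from $\Lambda$ to $\Lambda_{0}$ using Lemma~\ref{lemA}. Your hedge that the Taylor coefficients at $\Lambda=0$ are ``fixed by \eqref{BCH} together with the flow'' is exactly the observation that makes the relevant sector empty, not a way to treat it.

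Second, and more seriously, the source term cannot be estimated by inserting Corollary~\ref{corbound} for the two factors $\G_{D}(\otimes_{i\le M}\O_{A_i};k,\cdot)$ and $\G(\otimes_{j>M}\O_{A_j};-k,\cdot)$ as monolithic objects. The corollary's bound contains the factor $\max\bigl(\min|x_i-x_j|^{-(D'-D-1+\frac1r)},(\Lambda+m)^{D'-D-1+\frac1r}\bigr)$, whose second branch grows with $\Lambda'$ and both obstructs the $\Lambda'$-integration via Lemma~\ref{lemA} and is strictly weaker than the pure inverse-distance form asserted in \eqref{eqboundH}, so the induction cannot close in that form. Moreover, the precise way the inverse powers are distributed over the three independent distance monomials (cluster separation to the power $D+s$ only, versus the intra-cluster exponents $[A_1]+\ldots+[A_M]-D-1+\frac1r$ and $[A_{M+1}]+\ldots+[A_N]+1$) depends on \emph{which} insertion points the loop line attaches to, i.e.\ on which phase $\e^{ik(x_a-x_b)}$ is available for partial integration. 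This forces you to decompose $G_{D}=\hbar F_{D}+\prod L$ and $G=\hbar F+\prod L$ and to treat the four resulting source terms separately --- the $F_{D}\otimes F$ term (where $D+s$ partial integrations against $\e^{-ikx_N}$ produce the cluster-separation factor while the $F$-bounds of Theorem~\ref{thmbound} already supply the intra-cluster factors), the two mixed $\prod\L\times F$ terms, and the all-$\L$ term already covered by Lemma~\ref{lem} --- each with its own count of $k$-derivatives. Identifying that this decomposition, rather than a single cluster-factorized estimate, is needed is the missing idea; without it the ``delicate combinatorial part'' you defer is not merely delicate but does not come out in the claimed form.
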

\begin{rem}
A version of this bound with a stronger control over the numerical factors in the polynomials can be found in~\cite{Holland:2013we}. The bound clearly implies the scaling identity \eqref{Hinfrared} by choosing $s$ large enough and $w=0$.
\end{rem}
Before we come the proof of the theorem, we also note the following corollary:
\begin{cor}\label{corboundH}
Let $M<N$ and $t\in\{0,1\}$. For any $D\leq D'=[A_{1}]+\ldots+[A_{M}]$, and any $r\in\mathbb{N}$, the bound
\ben\label{eqboundHcor}
\begin{split}
&\Big|\,\partial_{\La}^{t}\partial_{\vecp}^{w}\G^{\Lambda,\Lambda_{0}}_{2n,l}([\bigotimes_{i=1}^{M}\O_{A_{i}}(x_{i})]_{D};\bigotimes_{j=M+1}^{N}\O_{A_{j}}(x_{j}); \vec{p})\, \Big| \leq  \frac{ (\Lambda+m)^{-t-\frac{1}{r}-2n-|w|}  } {\min\limits_{
 \substack{ 1\leq i\leq M \\ M+1\leq j \leq N } } \Big[ |x_{i}-x_{j}|, 1/(\La+m)\Big]^{D} 
}\\
& \times\ \frac{\left( \frac{\max\limits_{1\leq i\leq N}|x_{i}-x_{M}|}{\min\limits_{1\leq i\leq M<j\leq N}|x_{i}-x_{j}|} \right)^{|w|} \sup\Big(1,(\Lambda+m)(|x_{M}|+|x_{N}|)\Big)^{|w|} }{
\min\limits_{ \substack{ 1\leq i\leq N, i\neq j \\ 1\leq j \leq M }}\Big[ |x_{i}-x_{j}|, 1/(\La+m)\Big]^{[A_{1}]+\ldots+[A_{M}]-D-1+\frac{1}{r}}
\min\limits_{ \substack{ 1\leq i\leq N, i\neq j \\ M+1\leq j \leq N } }\Big[ |x_{i}-x_{j}|, 1/(\La+m)\Big]^{[A_{M+1}]+\ldots+[A_{N}]+1}  }
\\
&\times  \left(\frac{\max_{1\leq i\leq M}|x_{i}-x_{M}|}{\min_{1\leq i<j\leq M}|x_{i}-x_{j}|}\cdot\frac{\max_{M+1\leq i\leq N}|x_{i}-x_{N}|}{\min_{M+1\leq i<j\leq N}|x_{i}-x_{j}|}\right)^{D'+|w|+1}   \Pol_{1}\left(\log\frac{\Lambda+m}{m}\right) \Pol_{2}\left(\frac{|\vec{p}|}{\Lambda+m}\right)
\end{split}
\een
holds, where $\Pol_{i}(x)$ are polynomials in $x$ with positive coefficients.
\end{cor}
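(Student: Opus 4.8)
The plan is to deduce the bound \eqref{eqboundHcor} for the partially regularized amputated Green's functions directly from estimates that have already been established, using the decomposition \eqref{AGpartdef}. First I would pass to moments: the moment $\G_{2n,l}^{\Lambda,\Lambda_{0}}([\bigotimes_{i=1}^{M}\O_{A_{i}}(x_{i})]_{D};\bigotimes_{j=M+1}^{N}\O_{A_{j}}(x_{j});\vec p)$ is the sum of $\hbar\,\H_{2n,l}^{\Lambda,\Lambda_{0}}([\bigotimes_{i=1}^{M}\O_{A_{i}}(x_{i})]_{D};\bigotimes_{j=M+1}^{N}\O_{A_{j}}(x_{j});\vec p)$ and the ``factorized'' contribution, which by \eqref{AGpartdef} is a finite sum over partitions $I_{1}\cup I_{2}=\{1,\dots,2n\}$ of the momentum labels and $l=l_{1}+l_{2}$ of the loop order of products $\G_{D,|I_{1}|,l_{1}}^{\Lambda,\Lambda_{0}}(\bigotimes_{i=1}^{M}\O_{A_{i}}(x_{i});\vec p_{I_{1}})\,\G_{|I_{2}|,l_{2}}^{\Lambda,\Lambda_{0}}(\bigotimes_{j=M+1}^{N}\O_{A_{j}}(x_{j});\vec p_{I_{2}})$.

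For the $H$-piece I would simply apply Theorem \ref{thmboundH} with the choice $s=0$. Since $\min(|x_{i}-x_{j}|,1/(\Lambda+m))\leq|x_{i}-x_{j}|$ for every pair and $\sup(1,(\Lambda+m)|x_{M}|)\leq\sup(1,(\Lambda+m)(|x_{M}|+|x_{N}|))$, the resulting estimate is term by term no larger than the right side of \eqref{eqboundHcor}, so this contribution is already of the required form. For the factorized piece I would distribute $\partial_{\Lambda}^{t}$ and $\partial_{\vec p}^{w}$ over each product by the Leibniz rule (writing $t=t_{1}+t_{2}$ with $t_{1},t_{2}\in\{0,1\}$ and $w=w_{1}+w_{2}$, the multinomial weights being controlled as in the Notations and Conventions section) and then estimate each factor by Corollary \ref{corbound}, eq.~\eqref{eqboundcor}: the first factor with regularization parameter $D$ and $N$ replaced by $M$ (so that $D'$ there equals $[A_{1}]+\dots+[A_{M}]$ and $x_{N}$ there is $x_{M}$), the second factor with regularization parameter $D=-1$ (so that $G_{-1}=G$, $D'$ there equals $[A_{M+1}]+\dots+[A_{N}]$ and $x_{N}$ there is $x_{N}$). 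Multiplying these and summing over the finitely many partitions yields a bound for the factorized contribution.

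The only slightly delicate point, and the step I expect to take the most care, is matching the product of these two estimates with the single bound \eqref{eqboundHcor}. Here one uses: $n_{1}+n_{2}=n$, $t_{1}+t_{2}=t$ and $[A_{1}]+\dots+[A_{M}]+[A_{M+1}]+\dots+[A_{N}]=D'+([A_{M+1}]+\dots+[A_{N}])$, so that the total power of $(\Lambda+m)$ comes out correctly; the elementary identity $\max(a^{-\alpha},b^{-\alpha})=\min(a,b)^{-\alpha}$ (for $\alpha>0$) to rewrite the ``$\max$'' factors of \eqref{eqboundcor} in the ``$\min(\cdot,1/(\Lambda+m))$'' form of \eqref{eqboundHcor}; the observation that the mixed-distance factor $\min(|x_{i}-x_{j}|,1/(\Lambda+m))^{-D}$ present on the right of \eqref{eqboundHcor} (but not generated by the factorized estimate, where the two groups sit in different connected components and hence have no genuine singularity at $x_{i}=x_{j}$ with $i\leq M<j$) can only weaken the bound in the relevant range of $D$; and $|\vec p_{I_{1}}|,|\vec p_{I_{2}}|\leq|\vec p|$ together with the monotonicity of the $\Pol_{i}$ to absorb the momentum- and logarithm-dependent factors into single polynomials. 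The distance ratios produced by the two factors are ratios of distances among $x_{1},\dots,x_{M}$ and among $x_{M+1},\dots,x_{N}$ respectively, and they reassemble precisely into the product of ratios displayed in \eqref{eqboundHcor}, the $|w|$-dependent ratio combining into the last line. No genuine obstacle arises; the argument is, as for Corollary \ref{corbound}, a straightforward consequence of Theorem \ref{thmboundH} and the known single-insertion and AG bounds.
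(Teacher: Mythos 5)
Your proposal is correct and follows exactly the route the paper takes: the paper's own proof is the single remark that the corollary ``follows simply by combining the bounds of theorem \ref{thmboundH} and corollary \ref{corbound}'', i.e.\ precisely your splitting via the defining decomposition \eqref{AGpartdef} into the $H$-piece (estimated by theorem \ref{thmboundH} with $s=0$) and the factorized piece (each factor estimated by corollary \ref{corbound}, with $D$ for the first group and $D=-1$ for the second). Your bookkeeping of how the two estimates reassemble into \eqref{eqboundHcor} -- in particular the $\max(a^{-\alpha},b^{-\alpha})=\min(a,b)^{-\alpha}$ rewriting and the observation that the mixed-distance factor only enlarges the right side for $D\geq 0$ -- supplies more detail than the paper itself does.
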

\begin{rem}
We make the following observations:
\begin{enumerate} 
\item Scaling the spacetime arguments by $\varepsilon>0$, one can check that the bound scales as $\varepsilon^{-[A_{1}]-\ldots-[A_{N}]-\frac{1}{r}}$ for small $\varepsilon$. Thus, we verify
\ben
\operatorname{sd}\left(G^{\Lambda,\Lambda_{0}}\left([\bigotimes_{i=1}^{M}\O_{A_{i}}]_{D}\bigotimes_{M+1}^{N}\O_{A_{i}}\right)\right)\leq [A_{1}]+\ldots+[A_{N}]\, .
\een

\item  Scaling only the variables $x_{1},\ldots,x_{M}$ by $\varepsilon>0$, we note that the bound scales as\\ $\varepsilon^{-[A_{1}]-\ldots-[A_{M}]+D+1-\frac{1}{r}}$, which implies
\ben
\operatorname{sd}_{\{1,\ldots, M\}}\left(G^{\Lambda,\Lambda_{0}}\left([\bigotimes_{i=1}^{M}\O_{A_{i}}]_{D}\bigotimes_{M+1}^{N}\O_{A_{i}}\right)\right)\leq [A_{1}]+\ldots+[A_{M}]-D-1\, ,
\een
as claimed in \eqref{SDpart}.
\end{enumerate}
\end{rem}

\noindent
The corollary follows simply by combining the bounds of theorem \ref{thmboundH} and corollary \ref{corbound}. We continue with the proof of theorem \ref{thmboundH}.

\begin{proof}[Proof of theorem \ref{thmboundH}:]
We follow the same inductive scheme as in the proof of theorem~\ref{thmbound}. Since many parts of the proof are very similar to the one of theorem \ref{thmbound}, we will be relatively brief here. The strategy is to integrate the flow equation
\begin{small}
\ben\label{HFE2}
\begin{split}
&\hspace{2cm}\partial_{\Lambda}\partial_{\vec{p}}^{w}\H^{\Lambda,\Lambda_{0}}_{2n,l}\left([\bigotimes_{i=1}^{M}\O_{A_{i}}]_{D}; \bigotimes_{j=M+1}^{N}\O_{A_{j}} ; p_{1},\ldots,p_{2n}\right)=\\
\vspace{0.4cm}\\
&= \left(\atop{2n+2}{2}\right) \, \int_{k} \dot{C}^{\Lambda}(k)\ \partial_{\vec{p}}^{w}\H^{\Lambda,\Lambda_{0}}_{2n+2,l-1}\left([\bigotimes_{i=1}^{M}\O_{A_{i}}]_{D}; \bigotimes_{j=M+1}^{N}\O_{A_{j}} ; k, -k,  p_{1},\ldots,p_{2n}\right)\\
 &- \mathbb{S}\, \Bigg[ \partial_{\vec{p}}^{w} \sum_{ \substack{l_{1}+l_{2}=l \\ n_{1}+n_{2}=n+1}}\!\!\!\!\! 4n_{1}n_{2}
 \H^{\Lambda,\Lambda_{0}}_{D,2n_{1},l_{1}}\left([\bigotimes_{i=1}^{M}\O_{A_{i}}]_{D}; \bigotimes_{j=M+1}^{N}\O_{A_{j}}; q,  p_{1},\ldots,p_{2n_{1}-1}\right)\\
 & \hspace{4cm} \times \dot{C}^{\Lambda}(q)\,   \L^{\Lambda,\Lambda_{0}}_{2n_{2},l_{2}}(  p_{2n_{1}},\ldots,p_{2n}) \\
 &-\sum_{\substack{l_{1}+l_{2}=l-2 \\ n_{1}+n_{2}=n+1}}\!\!\!\!\! 4n_{1}n_{2}\partial^{w}_{\vec{p}}\int_{k} \F^{\Lambda,\Lambda_{0}}_{D,2n_{1},l_{1}}\left(\bigotimes_{i=1}^{M}\O_{A_{i}}; k,  p_{1},\ldots,p_{2n_{1}-1}\right)\\
 &\hspace{4cm}\times \dot{C}^{\Lambda}(k)\,   \F^{\Lambda,\Lambda_{0}}_{2n_{2},l_{2}}\left(\bigotimes_{i=M+1}^{N}\O_{A_{i}}; -k, p_{2n_{1}},\ldots,p_{2n}\right)   \\
&-\hspace{-.5cm}\sum_{\substack{l_{1}+\ldots+l_{M+1}=l-1 \\ n_{1}+\ldots+n_{M+1}=n+1\\1\leq a\leq M }}\hspace{-.5cm} 4n_{a}n_{M+1} \partial_{\vec{p}}^{w} \int_{k}\, \L^{\Lambda,\Lambda_{0}}_{2n_{a},l_{a}}(\O_{A_{a}}; k,  \vec{p}_{a})  \prod_{c\in\{1,\ldots,M\}\setminus\{a\}}  \L^{\Lambda,\Lambda_{0}}_{2n_{c},l_{c}}(\O_{A_{c}};  \vec{p}_{c})   \\
& \hspace{4cm}\times \dot{C}^{\Lambda}(k)  \F^{\Lambda,\Lambda_{0}}_{2n_{M+1},l_{M+1}}\left(\bigotimes_{i=M+1}^{N}\O_{A_{i}}; -k, p_{2n_{M}},\ldots,p_{2n}\right)  \\
&-\hspace{-.5cm}\sum_{\substack{l_{M}+\ldots+l_{N}=l-1 \\ n_{M}+\ldots+n_{N}=n+1\\M+1\leq a\leq N }}\hspace{-.5cm} 4n_{a}n_{M} \partial_{\vec{p}}^{w} \int_{k}\, \L^{\Lambda,\Lambda_{0}}_{2n_{a},l_{a}}(\O_{A_{a}}; k,  \vec{p}_{a}) \prod_{c\in\{M+1,\ldots,N\}\setminus\{a\}} \L^{\Lambda,\Lambda_{0}}_{2n_{c},l_{c}}(\O_{A_{c}};   \vec{p}_{c})   \\
& \hspace{4cm}\times \dot{C}^{\Lambda}(k)  \F^{\Lambda,\Lambda_{0}}_{D,2n_{M},l_{M}}\left(\bigotimes_{i=1}^{M}
\O_{A_{i}}; -k, p_{1},\ldots,p_{2n_{M}-1}\right)  \\
&-\hspace{-.5cm}\sum_{\substack{l_{1}+\ldots+l_{N}=l \\ n_{1}+\ldots+n_{N}=n+1\\1\leq a\leq M < b\leq N }}\hspace{-.75cm} 4n_{a}n_{b} \partial_{\vec{p}}^{w} \int_{k}\! \L^{\Lambda,\Lambda_{0}}_{2n_{a},l_{a}}(\O_{A_{a}}; k,  \vec{p}_{a})\,  \dot{C}^{\Lambda}(k)  \L^{\Lambda,\Lambda_{0}}_{2n_{b},l_{b}}(\O_{A_{b}}; -k,  \vec{p}_{b})\hspace{-.8cm}  \prod_{c\in\{1,\ldots,N\}\setminus\{a,b\}} \hspace{-.8cm} \L^{\Lambda,\Lambda_{0}}_{2n_{c},l_{c}}(\O_{A_{c}};   \vec{p}_{c}) \Bigg]
\end{split}
\een
\end{small}
and to bound each of the six terms on the right hand side separately. A simplification arises from the fact that we do not have to consider \emph{relevant} contributions here, since the boundary conditions for the functionals $H^{\Lambda,\Lambda_{0}}$ are always given at $\Lambda=\Lambda_{0}$. Here we first verify the bound for the case $x_{M}=0$ before we proceed to the general case $x_{M}\neq 0$, using the translation property \eqref{Htrans}.

\begin{itemize}

\item \underline{\textsf{First and second term on the r.h.s. of the flow equation:}}

The first two terms on the r.h.s. of the flow equation, which are linear in $\H^{\Lambda,\Lambda_{0}}$, can be estimated using the same inductive scheme used in the proof of theorem \ref{thmbound}. As mentioned above, we only have to consider the part of the induction which refers to \emph{irrelevant contributions}.  Collecting all the $\vec{x}$ dependent terms in $\bar{\m}$ (recall that we set $x_{M}=0$ at this point),
\ben
\begin{split}
\bar{\m}(\vec{x},D',D,w):= &\frac{\left(\frac{\max_{1\leq i\leq M}|x_{i}|}{\min_{1\leq i<j\leq M}|x_{i}-x_{j}|}\cdot\frac{\max_{M+1\leq i\leq N}|x_{i}-x_{N}|}{\min_{M+1\leq i<j\leq N}|x_{i}-x_{j}|}\right)^{D'+s+|w|+1} }{\min\limits_{
 \substack{ 1\leq i\leq M \\ M+1\leq j \leq N } } |x_{i}-x_{j}|^{D+s}\, m^{s}  \min\limits_{ \substack{ 1\leq i\leq N, i\neq j \\ 1\leq j \leq M }}|x_{i}-x_{j}|^{[A_{1}]+\ldots+[A_{M}]-D-1+\frac{1}{r}} } \\
 \times& \frac{1 } {
\min\limits_{ \substack{ 1\leq i\leq N, i\neq j \\ M+1\leq j \leq N } }|x_{i}-x_{j}|^{[A_{M+1}]+\ldots+[A_{N}]+1}  }\, \left( \frac{\max\limits_{1\leq i\leq N}|x_{i}|}{\min\limits_{1\leq i\leq M<j\leq N}|x_{i}-x_{j}|} \right)^{|w|}
\end{split}
\een
we can in fact follow exactly the same steps as in the proof of theorem \ref{thmbound} (our boundary conditions correspond to the case $D=-1$ there). Note that, crucially, $\bar{\m}$ satisfies the condition \eqref{xiprop}, and that the condition \eqref{xiprop2}, which is not satisfied by $\bar{\m}$, is not used in the part of the proof dealing with irrelevant contributions. We will not repeat the lengthy estimates here.

\item \underline{\textsf{Third term on the r.h.s. of the flow equation:}}

In order to estimate the third term on the r.h.s. of the flow equation, we will make use of he following lemma.
\begin{lemma}\label{lemmakintH1}
Let $x_{M}=0$ and  $n_{1}+n_{2}=n+1$ and $l-2=l_{1}+l_{2}\geq 0$. Further, let $\vec{p}_{1}=(p_{1},\ldots,p_{2n_{1}-1})$ and $\vec{p}_{2}=(p_{2n_{1}},\ldots,p_{2n})$. Then we have for any $D\leq [A_1]+\ldots+[A_M]$, $r\in\mathbb{N}$ and $s\in\mathbb{N}_{0}$
\ben\label{eq:lemH1}
\begin{split}
&\bigg|\ \partial^{w}_{\vec{p}}\int_{k}\, \F^{\Lambda,\Lambda_{0}}_{D,2n_{1},l_{1}}(\bigotimes_{i=1}^M\O_{A_{i}}(x_{i}); k, \vec{p}_{1})\,  \dot{C}^{\Lambda}(k)\,  \F^{\Lambda,\Lambda_{0}}_{2n_{2},l_{2}}(\bigotimes_{j=M+1}^{N}\O_{A_{j}}(x_{j}); -k, \vec{p}_{2})\ \bigg|\\
&\leq\quad   (\Lambda+m)^{-2n-|w|-1-\frac{1}{r}}\,   \Pol_{1}\left(\log\frac{\Lambda+m}{m}\right) \Pol_{2}\left(\frac{|\vec{p}|}{\Lambda+m}\right) \\
&\times  \frac{ m^{-s}   \left(\frac{\max_{1\leq i\leq M}|x_{i}|}{\min_{1\leq i<j\leq M}|x_{i}-x_{j}|}\cdot \frac{\max_{M+1\leq i\leq N}|x_{i}-x_{N}|}{\min_{M+1\leq i<j\leq N}|x_{i}-x_{j}|}\right)^{D+s+|w|+1} 
}{\min\limits_{1\leq i<j\leq M}|x_{i}-x_{j}|^{[A_{1}]+\ldots+[A_{M}]-D-1+\frac{1}{r}}\min\limits_{M+1\leq i<j\leq N}|x_{i}-x_{j}|^{[A_{M+1}]+\ldots+[A_{N}]+1} |x_{N}|^{D+s} }  
\end{split}
\een
where $\Pol_{i}(x)$ are polynomials in $x$ with positive coefficients.
\end{lemma}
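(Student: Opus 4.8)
The plan is to establish \eqref{eq:lemH1} by exactly the technique already used for Lemma~\ref{lem} and for the ``source terms'' in the proof of Theorem~\ref{thmbound}: peel off the $\vec{x}$-dependent phases by translation invariance, integrate by parts repeatedly in the loop momentum $k$ to manufacture the inter-cluster decay factor $|x_{N}|^{-D-s}$, use the substitution $k_{\mu}=\tilde{k}_{\mu}^{\,r}$ together with the van der Corput estimate \eqref{estimate1} to create the fractional power $\tfrac1r$, and finally insert the inductive bounds \eqref{eqbound} for the two $F$-functionals and carry out the Gaussian $k$-integral via \eqref{gammaloop}.

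Concretely, I would first apply the translation property \eqref{Ftrans}. Since $x_{M}=0$ the factor $\F^{\La,\Lao}_{D,2n_{1},l_{1}}(\bigotimes_{i=1}^{M}\O_{A_{i}}(x_{i});k,\vecp_{1})$ is already expanded about its reference point, whereas for the second factor I shift all of $x_{M+1},\dots,x_{N}$ by $-x_{N}$, producing a phase $\e^{\,i x_{N}\cdot(-k+p_{2n_{1}}+\dots+p_{2n})}$ and replacing $\O_{A_{j}}(x_{j})$ by $\O_{A_{j}}(x_{j}-x_{N})$. Distributing $\pa^{w}_{\vecp}$ over the product by the Leibniz rule, the momentum derivatives hitting these exponentials only produce polynomial factors in the $x_{i}$, so what remains is an integral of the schematic form $\int_{k}\e^{-ik\cdot x_{N}}\,(\text{poly in }x)\;\pa^{v_{0}}_{\vecp}\big[\F^{\La,\Lao}_{D,2n_{1},l_{1}}(\dots;k,\vecp_{1})\,\dot{C}^{\La}(k)\,\F^{\La,\Lao}_{2n_{2},l_{2}}(\dots;-k,\vecp_{2})\big]$, with the second functional being the unregularised one ($D=-1$).

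Next, as in \eqref{3rdirr1}--\eqref{estimate1}, I would integrate by parts $D+s$ times (and a few more to match the $|w|$-count) in the component $k_{\alpha}$ for which $||x_{N}||=|(x_{N})_{\alpha}|$ (sup-norm $||x||:=\max_{\mu}|x_{\mu}|$), which pulls out the prefactor $|(x_{N})_{\alpha}|^{-(D+s)}$ and lands the $k$-derivatives on the integrand; then I perform the substitution $k_{\mu}=\tilde{k}_{\mu}^{\,r}$ with $r$ odd and one further partial integration in $\tilde{k}_{\alpha}$, so that \eqref{estimate1} upgrades the estimate by the extra factor $|(x_{N})_{\alpha}|^{-1/r}$ (this $\tfrac1r$, together with the one already carried by \eqref{eqbound}, yields the stated powers). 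The $k$- and $\tilde{k}$-derivatives get distributed by the Leibniz rule over $\F^{\La,\Lao}_{D,2n_{1},l_{1}}$, $\F^{\La,\Lao}_{2n_{2},l_{2}}$ and $\dot{C}^{\La}$; on the two $F$-functionals I insert \eqref{eqbound} (with regularisation degree $D$ for the first and $-1$ for the second), and on $\dot{C}^{\La}$ the elementary bound $|\pa^{v}\dot{C}^{\La}(k)|\le\La^{-|v|-3}\Pol(|k|/\La)\e^{-(k^{2}+m^{2})/\La^{2}}$. After rescaling $\tilde{k}\to\tilde{k}/\La^{1/r}$, performing the loop integral via \eqref{gammaloop}, and using the exponential $\e^{-m^{2}/\La^{2}}$ to trade inverse powers of $\La$ for inverse powers of $\La+m$, the two $\m$-type numerators produced by \eqref{eqbound} combine---using $|w_{1}|+|w_{2}|\le|w|$, $\max(|w_{i}|,D+1)\le D+s+|w|+1$, and the monotonicity property \eqref{xiprop}---into the claimed ratio factor raised to the power $D+s+|w|+1$ times the two $\min$-denominators. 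The case of even $r$ then follows from the odd cases $r\pm1$ exactly as at the end of the proof of Lemma~\ref{lem}, and the hypothesis $x_{M}=0$ is lifted afterwards (when the lemma is used inside the proof of Theorem~\ref{thmboundH}) via the translation property \eqref{Htrans}.

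The main obstacle is the power counting: one has to fix the number of integrations by parts and the splitting of the $w$- and $k$-derivatives over the three factors so that the powers of $\La+m$ assemble to precisely $(\La+m)^{-2n-|w|-1-\frac1r}$, while the inter-cluster factor comes out no weaker than $|x_{N}|^{-D-s}$ and the polynomial-in-$x_{i}$ prefactors generated by the Leibniz rule are dominated, with room to spare, by the two $\m$-denominators. The delicate point is that $\dot{C}^{\La}$ absorbs part of the $k$-derivatives, and hence part of the $\La$-scaling, without affecting the $x$-dependence; once this is tracked carefully the estimate is a routine, if lengthy, repetition of the ones already carried out for Lemma~\ref{lem} and for the source terms in the proof of Theorem~\ref{thmbound}.
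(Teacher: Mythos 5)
Your overall strategy is the one the paper actually uses: translate the second cluster to its reference point via \eqref{Ftrans}, convert the $\vec p$-derivatives of the phases into $k$-derivatives by partial integration, perform $D+s$ additional integrations by parts in the component $k_\alpha$ with $\|x_N\|=|x_{N,\alpha}|$ to extract the inter-cluster factor $|x_N|^{-(D+s)}$, then distribute the derivatives, insert the inductive bounds, and do the Gaussian loop integral via \eqref{gammaloop}. The one place where your argument does not close --- and which you yourself flag as ``the main obstacle'' --- is the extra $k_\mu=\tilde k_\mu^{\,r}$ substitution together with the van der Corput estimate \eqref{estimate1}. That device is needed in Lemma \ref{lem} because there \emph{all} factors are CAG's with a single insertion, whose bounds \eqref{CAGoneinsert} carry no fractional power, so the $\tfrac1r$ appearing in the target must be manufactured inside the proof. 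Here the situation is different: the first factor is the \emph{regularized} functional $\F^{\La,\Lao}_{D,2n_1,l_1}$, and the inductive bound \eqref{eqbound} already supplies the entire fractional contribution, namely $(\La+m)^{-\frac1r}$ together with $\min_{1\le i<j\le M}|x_i-x_j|^{-\frac1r}$; the second, unregularized factor is estimated by \eqref{eqbound} with its regularization degree set to $-1$ and its free parameter set to $1$, which is exactly why the exponent on $\min_{M+1\le i<j\le N}|x_i-x_j|$ in \eqref{eq:lemH1} is the \emph{integer} $[A_{M+1}]+\dots+[A_N]+1$. With these choices the exponents of $(\La+m)$ add up to $\bigl(D-2n_1-|w_1|+1-\tfrac1r\bigr)+\bigl(-2n_2-|w_2|-1\bigr)-3-(D+s)+4=-2n-|w|-1-s-\tfrac1r$, and the surplus $(\La+m)^{-s}$ is traded for $m^{-s}$; no further gain of $\tfrac1r$ is required.

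If you nevertheless perform the $\tilde k$-substitution and one more partial integration, you acquire an additional factor of order $|x_{N}|^{-1/r'}(\La+m)^{-1/r'}$, so the estimate you obtain reads $(\La+m)^{-2n-|w|-1-\frac1r-\frac1{r'}}\,|x_N|^{-D-s-\frac1{r'}}\cdots$ rather than \eqref{eq:lemH1}. Two fractional powers $\tfrac1r+\tfrac1{r'}$ do not ``combine'' into a single $\tfrac1{r''}$ with $r''\in\mathbb N$ (except in special cases), so the claimed bound is not recovered, and the mismatch then propagates into the verification of \eqref{eqboundH} for the third source term of the flow equation \eqref{HFE2}, which is the whole point of the lemma. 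The fix is simply to delete the van der Corput step; everything else in your sketch (including your remark that the $|w_3|$ phase derivatives must either be converted into $k$-derivatives or be compensated by further partial integrations so as not to leave bare positive powers of $|x_N|$) then reproduces the paper's proof.
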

\begin{proof}
Using the translation properties of the $F$-functionals, eq.\eqref{Ftrans}, the integral can be written as
\ben
\begin{split}
&\left|\ \partial^{w}_{\vec{p}}\int_{k} \F^{\Lambda,\Lambda_{0}}_{D,2n_{1},l_{1}}\left(\bigotimes_{i=1}^{M}\O_{A_{i}}(x_{i}); k,  \vec{p}_{1}\right)\,  \dot{C}^{\Lambda}(k)\,   \F^{\Lambda,\Lambda_{0}}_{2n_{2},l_{2}}\left(\bigotimes_{i=M+1}^{N}\O_{A_{i}}(x_{i}); -k, \vec{p}_{2} \right)\ \right|\\
&\leq \bigg|\ \int_{k}\sum_{w_{1}+w_{2}+w_{3}=w} c_{\{w_{i}\}}\, \partial_{\vec{p}}^{w_{3}} \e^{ ix_{N}(p_{2n_{1}}+\ldots+p_{{2n}})-ik x_{N}}\\
&\quad\times  \partial_{\vecp}^{w_{1}} \F^{\Lambda,\Lambda_{0}}_{D,2n_{1},l_{1}}\left(\bigotimes_{i=1}^{M}\O_{A_{i}}(x_{i}); k,  \vec{p}_{1}\right)\,  \dot{C}^{\Lambda}(k)  \\
&\qquad\qquad\qquad\qquad\times\partial_{\vecp}^{w_{2}}\F^{\Lambda,\Lambda_{0}}_{2n_{2},l_{2}}\left(\bigotimes_{i=M+1}^{N}\O_{A_{i}}(x_{i}-x_{N}); -k, \vec{p}_{2}\right)\, \bigg|
\end{split}
\een 
The momentum derivatives on the exponential can be turned into $k$-derivatives and moved onto the moments of the $F$-functionals via partial integration. Just as in the proof of lemma \ref{lem}, we now introduce $D+s$ additional $k$-derivatives at the cost of a factor $(2/|x_{N}|)^{D+s}$, in order to obtain the desired dependence on $\Lambda$. Hence
\ben\label{thmHboundlem1}
\begin{split}
&\left|\ \partial^{w}_{\vec{p}}\int_{k} \F^{\Lambda,\Lambda_{0}}_{D,2n_{1},l_{1}}\left(\bigotimes_{i=1}^{M}\O_{A_{i}}(x_{i}); k,  \vec{p}_{1}\right)\,  \dot{C}^{\Lambda}(k)\,   \F^{\Lambda,\Lambda_{0}}_{2n_{2},l_{2}}\left(\bigotimes_{i=M+1}^{N}\O_{A_{i}}(x_{i}); -k, \vec{p}_{2}\right)\ \right|\\
&\leq\bigg|\ \int_{k}\sum_{w_{1}+w_{2}+w_{3}=w} c_{\{w_{i}\}}    \e^{ ix_{N}(p_{2n_{1}}+\ldots+p_{{2n}})- ik x_{N}} \ \left(\frac{2}{|x_{N}|}\right)^{D+s} \\
&\times  \partial_{k_{\alpha}}^{D+s}\partial_{k}^{w_{3}} \bigg( \partial_{\vecp}^{w_{1}} \F^{\Lambda,\Lambda_{0}}_{D,2n_{1},l_{1}}\left(\bigotimes_{i=1}^{M}\O_{A_{i}}(x_{i}); k,  \vec{p}_{1}\right)\,  \dot{C}^{\Lambda}(k)\\  
&\hspace{5cm}\times\partial_{\vecp}^{w_{2}} \F^{\Lambda,\Lambda_{0}}_{2n_{2},l_{2}}\left(\bigotimes_{i=M+1}^{N}\O_{A_{i}}(x_{i}-x_{N}); -k, \vec{p}_{2}\right)\bigg)\, \bigg|
\end{split}
\een 
where $\alpha\in\{1,\ldots, 4\}$ corresponds to the maximal component of $x_{N}$, i.e. $||x_{N}||=:|x_{N,\alpha}|$. Distributing the $k$-derivatives over the three factors, substituting our bounds for the $F$-functionals [see theorem \ref{thmbound}] and estimating the $k$-integral as in \eqref{gammaloop} then yields the lemma after some straightforward estimates.
\end{proof}
Continuing the proof of theorem \ref{thmboundH}, we see that this lemma confirms the validity of the claimed bound, \eqref{eqboundH}, in the case $t=1$, $x_{M}=0$.
 For $t=0$, we now estimate the $\Lambda$-integral over the third term on the r.h.s. of the flow equation combining lemma \ref{lemmakintH1} and lemma \ref{lemA}. One thereby verifies that also this contribution is  compatible with the claimed bound.

\item \underline{\textsf{Fourth and fifth term on the r.h.s. of the flow equation:}}

The following lemma will help us to estimate both these contributions:
\begin{lemma}\label{lemmakintH2}
Let  $n_{1}+\ldots+n_{M+1}=n+1$ and $l-1=l_{1}+\ldots+l_{M+1}$.
Then we have for any $D\leq [A_{M+1}]+\ldots+[A_N]$ and $a\in\{1,\ldots, M\}$ 
\ben\label{eq:lemH2}
\begin{split}
&\bigg|\ \partial_{\vec{p}}^{w} \int_{k}\, \L^{\Lambda,\Lambda_{0}}_{2n_{a},l_{a}}(\O_{A_{a}}; k, \vec{p}_{a})  \prod_{c\in\{1,\ldots,M\}\setminus\{a\}} \L^{\Lambda,\Lambda_{0}}_{2n_{c},l_{c}}(\O_{A_{c}};  \vec{p}_{c})   \\
&\qquad \times \dot{C}^{\Lambda}(k)  \F^{\Lambda,\Lambda_{0}}_{D,2n_{M+1},l_{M+1}}\left(\bigotimes_{i=M+1}^{N}\O_{A_{i}}; -k, p_{2n_{M}},\ldots,p_{2n}\right) \bigg|\\
&\leq\,  \Lambda^{-2n-|w|-\frac{1}{r_{1}}-\frac{1}{r_{2}}}\,\Pol_{1}\left(\log\frac{\Lambda+m}{m}\right) \Pol_{2}\left(\frac{|\vec{p}|}{\Lambda+m}\right)  \\
&\times  \frac{ m^{-s} \left(\frac{\max_{M+1\leq i\leq N}|x_{i}-x_{N}|}{\min_{M+1\leq i<j\leq N}|x_{i}-x_{j}|}\right)^{D+s+[A_{1}]+\ldots+[A_{M}]+|w|+1} \left(\frac{\max_{1\leq i\leq M}|x_{i}|}{|x_{N}-x_{a}|}\right)^{|w|}}{|x_{N}-x_{a}|^{[A_{1}]+\ldots+[A_{M}]+D+s+\frac{1}{r_{1}}}\min_{M+1\leq i<j\leq N}|x_{i}-x_{j}|^{[A_{M+1}]+\ldots+[A_{N}]-D-1+\frac{1}{r_{2}}} }   
\end{split}
\een
where $r_{1},r_{2}\in\mathbb{N}$ and $s\in\mathbb{N}_{0}$ and where $\Pol_{i}(x)$ are polynomials in $x$ with positive coefficients.
\end{lemma}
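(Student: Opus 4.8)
\textbf{Proof proposal for Lemma \ref{lemmakintH2}.}
The plan is to proceed exactly as in the proofs of Lemma \ref{lem} and Lemma \ref{lemmakintH1}: restore the suppressed position arguments, use translation invariance to reduce to functionals at the origin times exponential phases, trade coordinate decay for inverse powers of $\Lambda$ by repeated partial integration in the loop momentum $k$ (including the van der Corput substitution $k_\mu = \tilde k_\mu^{r}$), and then substitute the already established bounds \eqref{CAGoneinsert} for the CAG's with one insertion and \eqref{eqbound} (Theorem \ref{thmbound}) for the $F$-functional. Recall that at this stage $x_M = 0$.

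First I would write each $\L^{\Lambda,\Lambda_{0}}_{2n_{c},l_{c}}(\O_{A_{c}}(x_{c});\cdot)$ using \eqref{CAGtrans} and the $F$-functional $\F^{\Lambda,\Lambda_{0}}_{D,2n_{M+1},l_{M+1}}(\bigotimes_{i=M+1}^{N}\O_{A_{i}}(x_{i});-k,\vec p_{2})$ using \eqref{Ftrans}, translating the latter so that $x_{N}$ becomes the origin, to match the reference point in Theorem \ref{thmbound}. After collecting phases the only $k$-dependent exponential is $\e^{ik(x_{a}-x_{N})}$, and by Leibniz the external derivatives $\partial^{w}_{\vec p}$ either hit one of the exponentials — producing a factor bounded by $\big(\max_{1\le i\le M}|x_{i}|\big)^{|w_{1}|}\big(\max_{M+1\le i\le N}|x_{i}-x_{N}|\big)^{|w_{2}|}$, which feeds the $|w|$-dependent ratios in \eqref{eq:lemH2} — or one of the three families of functionals. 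Next, as in \eqref{3rdirr1}, I would integrate by parts in $k_{\alpha}$ (with $\|x_{a}-x_{N}\| = |(x_{a}-x_{N})_{\alpha}|$) against $\e^{ik(x_{a}-x_{N})}$, introducing enough extra $k$-derivatives to convert the coordinate decay $\|x_{a}-x_{N}\|^{-([A_{1}]+\dots+[A_{M}])-D-s}$ into the power of $\Lambda$ demanded by \eqref{eq:lemH2}, and then perform the substitution $k_{\mu}=\tilde k_{\mu}^{r_{1}}$ followed by a further partial integration in $\tilde k_{\alpha}$; the van der Corput estimate \eqref{estimate1} then supplies the fractional gain $\|x_{a}-x_{N}\|^{-1/r_{1}}$. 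On the resulting integrand I would insert \eqref{CAGoneinsert} for the product of CAG's with one insertion (total dimension $[A_{1}]+\dots+[A_{M}]$, total leg number $\sum_{c\le M}2n_{c}$), the bound \eqref{eqbound} with parameter $r=r_{2}$ and primed dimension $[A_{M+1}]+\dots+[A_{N}]$ for the $F$-functional over the points $x_{M+1},\dots,x_{N}$ — this is precisely what produces the factor $\min_{M+1\le i<j\le N}|x_{i}-x_{j}|^{-([A_{M+1}]+\dots+[A_{N}])+D+1-1/r_{2}}$, the ratio $\big(\frac{\max_{M+1\le i\le N}|x_{i}-x_{N}|}{\min_{M+1\le i<j\le N}|x_{i}-x_{j}|}\big)^{\cdots}$, and the second fractional power $(\Lambda+m)^{-1/r_{2}}$ — and the elementary estimate $|\partial^{v}\dot C^{\Lambda}(k)|\le \Lambda^{-|v|-3}\Pol(|k|/\Lambda)\,\e^{-(k^{2}+m^{2})/\Lambda^{2}}$ for the propagator. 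Distributing the extra $k$-derivatives over these three groups by Leibniz, rescaling $\tilde k\to\tilde k/\Lambda^{1/r_{1}}$, trading inverse powers of $\Lambda$ for inverse powers of $\Lambda+m$ with the help of $\e^{-m^{2}/\Lambda^{2}}$, and finally carrying out the Gaussian $\tilde k$-integral via \eqref{gammaloop} (as in \eqref{3rdirr5}), all numerical factors get absorbed into new polynomials $\Pol_{i}$, and collecting the powers of $\Lambda+m$ and of the various distances gives \eqref{eq:lemH2}. The restriction to odd $r_{1}$ is removed at the end by interpolating between $r_{1}\pm 1$ according to whether $(\Lambda+m)\|x_{a}-x_{N}\|$ exceeds one, exactly as in Lemma \ref{lem} (and the even-$r_{2}$ case is already covered by Theorem \ref{thmbound}).

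The main obstacle is purely bookkeeping: one must track carefully how $\partial^{w}$ is split between the phases and the functionals, verify that the CAG block contributes total dimension $[A_{1}]+\dots+[A_{M}]$ while the $F$-block contributes $[A_{M+1}]+\dots+[A_{N}]$ with the relations $\sum n_{c}=n+1$, $\sum l_{c}=l-1$, and — most delicately — confirm that the two independent fractional improvements assemble into exactly $(\Lambda+m)^{-1/r_{1}-1/r_{2}}$ with matching coordinate factors $\|x_{a}-x_{N}\|^{-1/r_{1}}$ (from the van der Corput step on $\e^{ik(x_{a}-x_{N})}$) and $\min_{M+1\le i<j\le N}|x_{i}-x_{j}|^{-1/r_{2}}$ (from applying Theorem \ref{thmbound} with $r=r_{2}$). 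Everything else is a routine repetition of the estimates already performed in the proof of Lemma \ref{lem}.
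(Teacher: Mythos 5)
Your proposal follows essentially the same route as the paper's proof: translate the one-insertion CAG's and the $F$-functional to extract the phase $\e^{ik(x_a-x_N)}$, integrate by parts in $k_\alpha$ to trade the decay in $|x_a-x_N|$ for inverse powers of the cutoff, apply the substitution $k_\mu=\tilde k_\mu^{r_1}$ with the van der Corput estimate for the $1/r_1$ gain, and then insert the bounds \eqref{CAGoneinsert} and \eqref{eqbound} (with $r=r_2$) together with the Gaussian loop estimate \eqref{gammaloop}. The paper compresses all of this into one displayed inequality plus the remark that the rest is ``lengthy but straightforward,'' so your more explicit accounting of where the factors $1/r_1$, $1/r_2$ and the $|w|$-dependent ratios originate is a faithful expansion of the same argument.
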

\begin{proof}
Using the same strategy as in the proof of lemma \ref{lemmakintH1}, we can estimate the l.h.s. of eq.\eqref{eq:lemH2} as
\ben\label{lem9eq}
\begin{split}
&\bigg|\ \partial_{\vec{p}}^{w} \int_{k}\, \L^{\Lambda,\Lambda_{0}}_{2n_{a},l_{a}}(\O_{A_{a}}; k,  \vec{p}_{a}) \prod_{c\in\{1,\ldots,M\}\setminus\{a\}} \L^{\Lambda,\Lambda_{0}}_{2n_{c},l_{c}}(\O_{A_{c}};   \vec{p}_{c} )  \\
&\qquad \times  \dot{C}^{\Lambda}(k)  \F^{\Lambda,\Lambda_{0}}_{D,2n_{M+1},l_{M+1}}\left(\bigotimes_{i=M+1}^{N}\O_{A_{i}}; -k, p_{2n_{M}},\ldots,p_{2n}\right)  \ \bigg|\\
&\leq \sum_{w_{1}+w_{2}=w} c_{\{w_{i}\}} \left(\frac{\max_{1\leq i \leq M}|x_{i}|}{|x_{a}-x_{N}|/2}\right)^{|w_{1}|}\frac{1}{(|x_{a}-x_{N}|/2)^{[A_{1}]+\ldots+[A_{M}]+D+s}} \\
&\times \bigg| \int_{k} \e^{i k (x_{a}-x_{N})} \partial_{k_{\alpha}}^{[A_{1}]+\ldots+[A_{M}]+D+s+|w_{1}|}\partial_{\vecp}^{w_{2}}\bigg( \L^{\Lambda,\Lambda_{0}}_{2n_{a},l_{a}}(\O_{A_{a}}(0); k,  \vec{p}_{j}) \, \dot{C}^{\Lambda}(k)  \!\!\!\!\! \prod_{c\in\{1,\ldots,M\}\setminus\{a\}}  \\ 
&\quad\times \L^{\Lambda,\Lambda_{0}}_{2n_{c},l_{c}}(\O_{A_{c}}(0);   \vec{p}_{c})  \F^{\Lambda,\Lambda_{0}}_{D,2n_{M+1},l_{M+1}}\left(\bigotimes_{i=M+1}^{N}\O_{A_{i}}(x_{i}-x_{N}); -k, p_{2n_{M}},\ldots,p_{2n}\right) \bigg)\bigg|
\end{split}
\een
Using the partial integration ''trick'' as in \eqref{3rdirr2}, i.e. transforming the integration variable to $\tilde{k}_{\mu}^{r_{1}}=k_{\mu}$ and integrating by parts once more, inserting the known bounds for the CAG's and $F^{\Lambda,\Lambda_{0}}$-functionals and estimating the loop integral as in \eqref{gammaloop}, we verify the lemma after some lengthy but straightforward estimates.
%
%
\end{proof}

This lemma allows us to find a bound for the fourth and the fifth term on the r.h.s. of the flow equation, \eqref{HFE2}. Using the case $D=-1$ and $r_{2}=1$ in the lemma, one immediately checks that the fourth term on the r.h.s. of the flow equation satisfies the bound \eqref{eqboundH} for $t=1$, $x_{M}=0$, and combining this bound with lemma \ref{lemA} one also verifies the case $t=0$. Similarly, for the fifth term on the r.h.s. of the flow equation we use lemma \ref{lemmakintH2} again, but we exchange the role of the indices $(1,\ldots, M)\leftrightarrow (M+1,\ldots,N)$. This way, it is not hard to verify that the claimed bound is satisfied by the contributions in question.

\item \underline{\textsf{Sixth term on the r.h.s. of the flow equation:}}

The loop integral in last term on the r.h.s. of the flow equation \eqref{HFE2} can be estimated with the help of lemma \ref{lem}, and the subsequent $\Lambda$-integral can be estimated with the help of lemma \ref{lemA}.

\end{itemize}

In summary, we have found that all six terms on the r.h.s. of the flow equation satisfy the bound claimed in theorem \ref{thmboundH} for $x_{M}=0$. To finish the proof, we combine the translation formula \eqref{Htrans} with the previously established bound for the $x_{M}=0$ in order to obtain an estimate for $x_{M}\neq 0$.
\end{proof}

\section[Proof of proposition 4]{Proof of proposition \ref{thmAGAP}}\label{ap:propgder}

Here we are going to prove that both sides of the equation
\ben\label{AGAPrepeat}
\begin{split}
&\hbar\, \partial_{g}G^{\Lambda,\Lambda_{0}}(\bigotimes_{i=1}^{N}\O_{A_{i}}(x_{i}))=\\
&\frac{-1}{4!}\int\d^{4} y\Big[ G^{\Lambda,\Lambda_{0}}(\bigotimes_{i=1}^{N}\O_{A_{i}}(x_{i})\otimes \varphi^{4}(y))-G^{\Lambda,\Lambda_{0}}(\bigotimes_{i=1}^{N}\O_{A_{i}}(x_{i})) L^{\Lambda,\Lambda_{0}}(\varphi^{4}(y))\\
&- \sum_{j=1}^{N}\sum_{[C]\leq [A_{j}]} \C_{\In A_{j}}^{C}(y,x_{j}) G^{\Lambda,\Lambda_{0}}(\bigotimes_{\atop{i=1}{i\neq j}}^{N}\O_{A_{i}}(x_{i})\otimes\O_{C}(x_{j})) \Big]
\end{split}
\een
indeed satisfy the same flow equations and boundary conditions. We make use of the decomposition $G^{\Lambda,\Lambda_{0}}(\bigotimes_{i=1}^{N}\O_{A_{i}})= \hbar F^{\Lambda,\Lambda_{0}}(\bigotimes_{i=1}^{N}\O_{A_{i}})+ \prod_{i=1}^{N}L^{\Lambda,\Lambda_{0}}(\O_{A_{i}})$ and study the $g$-derivative of the expressions on the right side of this equation. To begin with, consider the derivative of the factorized contribution to $\hbar\partial_{g}G^{\Lambda,\Lambda_{0}}(\bigotimes_{i=1}^{N}\O_{A_{i}}(x_{i}))$.
Using proposition \ref{thmAP2} we find
\ben\label{factorgder}
\hbar\,\partial_{g}\prod_{i=1}^{N}L^{\Lambda,\Lambda_{0}}(\O_{A_{i}})=\frac{\hbar}{4!}\int\d^{4}y\, \sum_{j=1}^{N}L^{\Lambda,\Lambda_{0}}_{D=[A_{j}]}(\O_{A_{j}}\otimes\varphi^{4}(y))\prod_{\atop{i=1}{i\neq j}}^{N}L^{\Lambda,\Lambda_{0}}(\O_{A_{i}})
\een
We find a similar contribution on the r.h.s. of eq.\eqref{AGAPrepeat}:
\ben\label{AGAPcancel}
\begin{split}
&\frac{1}{4!}\int\d^{4}y\, \sum_{j=1}^{N}\sum_{[C]\leq [A_{j}]} \C_{\In A_{j}}^{C}(y,x_{j})L^{\Lambda,\Lambda_{0}}(\O_{C}(y)) \prod_{\atop{i=1}{i\neq j}}^{N}L^{\Lambda,\Lambda_{0}}(\O_{A_{i}})\\
=&\frac{\hbar}{4!}\int\d^{4}y\, \sum_{j=1}^{N}\left(L^{\Lambda,\Lambda_{0}}_{D=[A_{j}]}(\O_{A_{j}}\otimes\varphi^{4}(y))-L^{\Lambda,\Lambda_{0}}(\O_{A_{j}}\otimes\varphi^{4}(y))\right)\prod_{\atop{i=1}{i\neq j}}^{N}L^{\Lambda,\Lambda_{0}}(\O_{A_{i}})
\end{split}
\een
In the second line we made use of equation \eqref{RGD}. We see that the first term on the r.h.s. of \eqref{AGAPcancel} coincides with the r.h.s. of equation \eqref{factorgder}, which means that these terms cancel in equation \eqref{AGAPrepeat}. Note also that the factorized contributions (i.e. terms containing only CAG's with one insertion) to the first two terms on the r.h.s of equation \eqref{AGAPrepeat} cancel each other. Let us now come to the various contributions from the $F^{\Lambda,\Lambda_{0}}$-functionals. Consider first the $g$-derivative of the flow equation for $F^{\Lambda,\Lambda_{0}}$
\ben\label{AGAPfe5}
\begin{split}
&\partial_{\Lambda}\partial_{g}F^{\Lambda,\Lambda_{0}}(\bigotimes_{i=1}^{N}\O_{A_{i}})\\
=&\frac{\hbar}{2}\bra \varp , \dot{C}^{\Lambda}\star\varp \ket\, \partial_{g}F^{\Lambda,\Lambda_{0}}(\bigotimes_{i=1}^{N}\O_{A_{i}})- \bra \varp \partial_{g}F^{\Lambda,\Lambda_{0}}(\bigotimes_{i=1}^{N}\O_{A_{i}}), \dot{C}^{\Lambda}\star\varp L^{\Lambda,\Lambda_{0}} \ket\\
-&\bra \varp F^{\Lambda,\Lambda_{0}}(\bigotimes_{i=1}^{N}\O_{A_{i}}), \dot{C}^{\Lambda}\star\varp \frac{1}{4!}\int\d^{4} y\, L^{\Lambda,\Lambda_{0}}(\varphi^{4}(y)) \ket\\
+&\sum_{1\leq i<j\leq N} \bra \varp \frac{1}{4!}\int\d^{4} y\,  L_{[A_{i}]}^{\Lambda,\Lambda_{0}}(\O_{A_{i}}\otimes\varphi^{4}(y)), \dot{C}^{\Lambda}\star\varp L^{\Lambda,\Lambda_{0}}(\O_{A_{j}})   \ket \!\!\!\!\!\!\!\!\!\!\!\! \prod_{r\in\{1,\ldots,N\}\setminus\{i,j\}}\hspace{-.6cm} L^{\Lambda,\Lambda_{0}}(\O_{A_{r}})\\
+&\sum_{1\leq i<j\leq N} \bra \varp L^{\Lambda,\Lambda_{0}}(\O_{A_{i}}), \dot{C}^{\Lambda}\star\varp L^{\Lambda,\Lambda_{0}}(\O_{A_{j}})   \ket \\
&\quad\times\sum_{k\in\{1,\ldots, N\}\setminus\{i,j\}}\frac{1}{4!}\int\d^{4} y\, L_{[A_{k}]}^{\Lambda,\Lambda_{0}}(\O_{A_{k}}\otimes\varphi^{4}(y)),  \!\!\!\!\!\! \prod_{r\in\{1,\ldots,N\}\setminus\{i,j,k\}} L^{\Lambda,\Lambda_{0}}(\O_{A_{r}})\, ,
\end{split}
\een
where we made use of proposition \ref{thmAP2}. The boundary conditions read
\ben\label{AGAPbc}
\partial^{w}_{\vec{p}} \partial_{g}\F^{\Lambda_{0},\Lambda_{0}}_{n,l}(\bigotimes_{i=1}^{N}\O_{A_{i}}; \vec{p})=0\quad \text{for all }n,l,w.
\een
We want to compare this to the flow equations for the terms on the r.h.s. of eq.\eqref{AGAPrepeat}. To start with, we have
\ben\label{AGAPfe1}
\begin{split}
&\partial_{\Lambda}\frac{-1}{4!}\int\d^{4} y\, F^{\Lambda,\Lambda_{0}}(\bigotimes_{i=1}^{N}\O_{A_{i}}\otimes \varphi^{4}(y))\\
&=\frac{\hbar}{2}\bra \varp , \dot{C}^{\Lambda}\star\varp \ket\, \frac{-1}{4!}\int\d^{4} y\, F^{\Lambda,\Lambda_{0}}(\bigotimes_{i=1}^{N}\O_{A_{i}}\otimes \varphi^{4}(y))\\
&- \bra \varp \frac{-1}{4!}\int\d^{4} y\, F^{\Lambda,\Lambda_{0}}(\bigotimes_{i=1}^{N}\O_{A_{i}}\otimes \varphi^{4}(y)), \dot{C}^{\Lambda}\star\varp L^{\Lambda,\Lambda_{0}} \ket\\
&-\!\!\!\!\hspace{-.2cm}\sum_{1\leq i<j\leq N}\hspace{-.2cm}
 \bra \varp  L^{\Lambda,\Lambda_{0}}(\O_{A_{i}}), \dot{C}^{\Lambda}\star\varp L^{\Lambda,\Lambda_{0}}(\O_{A_{j}})   \ket \!\!\!\!\!\!\!\!\hspace{-.2cm} \prod_{r\in\{1,\ldots,N\}\setminus\{i,j\}}\!\!\!\!\!\!\!\hspace{-.3cm} L^{\Lambda,\Lambda_{0}}(\O_{A_{r}}) \int\frac{\d^{4} y}{4!} L^{\Lambda,\Lambda_{0}}(\varphi^{4}(y))\\
&-\!\!\!\sum_{ j \in\{1,\ldots, N\}} \bra \varp  L^{\Lambda,\Lambda_{0}}(\O_{A_{j}}), \dot{C}^{\Lambda}\star\varp \frac{1}{4!}\int\d^{4} y L^{\Lambda,\Lambda_{0}}(\varphi^{4}(y))   \ket \ \prod_{\atop{r=1}{r\neq j}}^{N} L^{\Lambda,\Lambda_{0}}(\O_{A_{r}}) \\
\end{split}
\een
Next, we have
\ben\label{AGAPfe2}
\begin{split}
&\partial_{\Lambda}\left( F^{\Lambda,\Lambda_{0}}(\bigotimes_{i=1}^{N}\O_{A_{i}})\frac{1}{4!}\int\d^{4} y\, L^{\Lambda,\Lambda_{0}}(\varphi^{4}(y))\right) \\
&=\frac{\hbar}{2}\bra \varp , \dot{C}^{\Lambda}\star\varp \ket\, F^{\Lambda,\Lambda_{0}}(\bigotimes_{i=1}^{N}\O_{A_{i}})\frac{1}{4!}\int\d^{4} y\, L^{\Lambda,\Lambda_{0}}(\varphi^{4}(y))\\
&- \bra \varp F^{\Lambda,\Lambda_{0}}(\bigotimes_{i=1}^{N}\O_{A_{i}})\frac{1}{4!}\int\d^{4} y\, L^{\Lambda,\Lambda_{0}}(\varphi^{4}(y)), \dot{C}^{\Lambda}\star\varp L^{\Lambda,\Lambda_{0}} \ket\\
&- \hbar\bra \varp F^{\Lambda,\Lambda_{0}}(\bigotimes_{i=1}^{N}\O_{A_{i}}), \dot{C}^{\Lambda}\star\varp \frac{1}{4!}\int\d^{4} y\, L^{\Lambda,\Lambda_{0}}(\varphi^{4}(y)) \ket\\
&+\!\!\!\hspace{-.2cm}\sum_{1\leq i<j\leq N}\hspace{-.2cm}
 \bra \varp  L^{\Lambda,\Lambda_{0}}(\O_{A_{i}}), \dot{C}^{\Lambda}\star\varp L^{\Lambda,\Lambda_{0}}(\O_{A_{j}})   \ket \!\!\!\!\!\!\!\!\hspace{-.2cm} \prod_{r\in\{1,\ldots,N\}\setminus\{i,j\}}\!\!\!\!\!\!\!\hspace{-.3cm} L^{\Lambda,\Lambda_{0}}(\O_{A_{r}}) \int\frac{\d^{4} y}{4!} L^{\Lambda,\Lambda_{0}}(\varphi^{4}(y))\\
\end{split}
\een
Finally, for the last term in eq.\eqref{AGAPrepeat}:
\ben\label{AGAPfe3}
\begin{split}
&\partial_{\Lambda}\left( \frac{1}{4!}\int\d^{4} y\sum_{j=1}^{N}\sum_{[C]\leq [A_{j}]} \C_{\In A_{j}}^{C}(y,x_{j}) F^{\Lambda,\Lambda_{0}}(\bigotimes_{\atop{i=1}{i\neq j}}^{N}\O_{A_{i}}(x_{i})\otimes\O_{C}(x_{j}))\right) \\
&=\frac{\hbar}{2}\bra \varp , \dot{C}^{\Lambda}\star\varp \ket\, \frac{1}{4!}\int\d^{4} y\sum_{j=1}^{N}\sum_{[C]\leq [A_{j}]} \C_{\In A_{j}}^{C}(y,x_{j}) F^{\Lambda,\Lambda_{0}}(\bigotimes_{\atop{i=1}{i\neq j}}^{N}\O_{A_{i}}(x_{i})\otimes\O_{C}(x_{j}))\\
&- \bra \varp \int \frac{\d^{4} y}{4!} \sum_{j=1}^{N}\sum_{[C]\leq [A_{j}]}\hspace{-.2cm} \C_{\In A_{j}}^{C}(y,x_{j}) F^{\Lambda,\Lambda_{0}}(\bigotimes_{\atop{i=1}{i\neq j}}^{N}\O_{A_{i}}(x_{i})\otimes\O_{C}(x_{j})), \dot{C}^{\Lambda}\star\varp L^{\Lambda,\Lambda_{0}} \ket\\
&-\hbar\!\!\! \sum_{1\leq i<j\leq N} \bra \varp  L^{\Lambda,\Lambda_{0}}(\O_{A_{i}}), \dot{C}^{\Lambda}\star\varp L^{\Lambda,\Lambda_{0}}(\O_{A_{j}})   \ket\\
&\hspace{3cm}\times\sum_{k\in\{1,\ldots,N\}\setminus\{i,j\}} \frac{1}{4!} \int \d^{4} y L^{\Lambda,\Lambda_{0}}(\O_{A_{k}}\otimes\varphi^{4}(y)) \!\!\!\!\!\!\!\! \prod_{r\in\{1,\ldots,N\}\setminus\{i,j,k\}}\!\!\!\!\!\!\! L^{\Lambda,\Lambda_{0}}(\O_{A_{r}}) \\
&+\hbar\!\!\!\sum_{1\leq i<j\leq N} \bra \varp  L^{\Lambda,\Lambda_{0}}(\O_{A_{i}}), \dot{C}^{\Lambda}\star\varp L^{\Lambda,\Lambda_{0}}(\O_{A_{j}})   \ket\\
&\hspace{3cm}\times\sum_{k\in\{1,\ldots,N\}\setminus\{i,j\}} \frac{1}{4!}\int\d^{4} y L^{\Lambda,\Lambda_{0}}_{[A_{k}]}(\O_{A_{k}}\otimes\varphi^{4}(y)) \!\!\!\!\!\!\!\! \prod_{r\in\{1,\ldots,N\}\setminus\{i,j,k\}}\!\!\!\!\!\!\! L^{\Lambda,\Lambda_{0}}(\O_{A_{r}}) \\
&-\hbar\!\!\!\sum_{1\leq i<j\leq N} \bra \varp  L^{\Lambda,\Lambda_{0}}(\O_{A_{i}}), \dot{C}^{\Lambda}\star\varp \frac{1}{4!}\int\d^{4} y L^{\Lambda,\Lambda_{0}}(\O_{A_{j}}\otimes\varphi^{4}(y))    \ket \!\!\!\!\!\!\!\! \prod_{r\in\{1,\ldots,N\}\setminus\{i,j\}}\!\!\!\!\!\!\! L^{\Lambda,\Lambda_{0}}(\O_{A_{r}}) \\
&+\hbar\!\!\!\sum_{1\leq i<j\leq N} \bra \varp  L^{\Lambda,\Lambda_{0}}(\O_{A_{i}}), \dot{C}^{\Lambda}\star\varp \frac{1}{4!}\int\d^{4} y L^{\Lambda,\Lambda_{0}}_{[A_{j}]}(\O_{A_{j}}\otimes\varphi^{4}(y))    \ket \!\!\!\!\!\!\!\! \prod_{r\in\{1,\ldots,N\}\setminus\{i,j\}}\!\!\!\!\!\!\! L^{\Lambda,\Lambda_{0}}(\O_{A_{r}}) 
\end{split}
\een
Also recall the remaining term from eq.\eqref{AGAPcancel}, which satisfies the flow equation
\ben\label{AGAPfe4}
\begin{split}
&\partial_{\Lambda} \left(\frac{-1}{4!}\int\d^{4}y\, \sum_{j=1}^{N}L^{\Lambda,\Lambda_{0}}(\O_{A_{j}}\otimes\varphi^{4}(y))\prod_{\atop{i=1}{i\neq j}}^{N}L^{\Lambda,\Lambda_{0}}(\O_{A_{i}})\right)\\
&=\frac{\hbar}{2}\bra \varp , \dot{C}^{\Lambda}\star\varp \ket\, \frac{-1}{4!}\int\d^{4}y\, \sum_{j=1}^{N}L^{\Lambda,\Lambda_{0}}(\O_{A_{j}}\otimes\varphi^{4}(y))\prod_{\atop{i=1}{i\neq j}}^{N}L^{\Lambda,\Lambda_{0}}(\O_{A_{i}})\\
&- \bra \varp \frac{-1}{4!}\int\d^{4}y\, \sum_{j=1}^{N}L^{\Lambda,\Lambda_{0}}(\O_{A_{j}}\otimes\varphi^{4}(y))\prod_{\atop{i=1}{i\neq j}}^{N}L^{\Lambda,\Lambda_{0}}(\O_{A_{i}}), \dot{C}^{\Lambda}\star\varp L^{\Lambda,\Lambda_{0}} \ket\\
&+\hbar\!\!\!\sum_{1\leq i<j\leq N} \bra \varp  L^{\Lambda,\Lambda_{0}}(\O_{A_{i}}), \dot{C}^{\Lambda}\star\varp L^{\Lambda,\Lambda_{0}}(\O_{A_{j}})   \ket\\
&\hspace{3cm}\times\sum_{k\in\{1,\ldots,N\}\setminus\{i,j\}} \frac{1}{4!}\int\d^{4} y L^{\Lambda,\Lambda_{0}}(\O_{A_{k}}\otimes\varphi^{4}(y)) \!\!\!\!\!\!\!\!\!\!\! \prod_{r\in\{1,\ldots,N\}\setminus\{i,j,k\}}\!\!\!\!\!\!\!\!\!\!\! L^{\Lambda,\Lambda_{0}}(\O_{A_{r}}) \\
&+\hbar\!\!\!\!\!\sum_{1\leq i<j\leq N} \bra \varp  L^{\Lambda,\Lambda_{0}}(\O_{A_{i}}), \dot{C}^{\Lambda}\star\varp \frac{1}{4!}\int\d^{4} y L^{\Lambda,\Lambda_{0}}(\O_{A_{j}}\otimes\varphi^{4}(y))    \ket \!\!\!\!\!\!\!\!\!\!\! \prod_{r\in\{1,\ldots,N\}\setminus\{i,j\}}\!\!\!\!\!\!\! L^{\Lambda,\Lambda_{0}}(\O_{A_{r}}) \\
&+\!\!\!\sum_{ j \in\{1,\ldots, N\}} \bra \varp  L^{\Lambda,\Lambda_{0}}(\O_{A_{j}}), \dot{C}^{\Lambda}\star\varp \frac{1}{4!}\int\d^{4} y L^{\Lambda,\Lambda_{0}}(\varphi^{4}(y))   \ket \ \prod_{\atop{r=1}{r\neq j}}^{N} L^{\Lambda,\Lambda_{0}}(\O_{A_{r}})\, .
\end{split}
\een
Now, summing up equations \eqref{AGAPfe1},\eqref{AGAPfe2},\eqref{AGAPfe3} and \eqref{AGAPfe4}, one can check that these contributions satisfy the same flow equation as $\hbar\,\partial_{g}F^{\Lambda,\Lambda_{0}}(\bigotimes_{i=1}^{N}\O_{A_{i}})$, see eq.\eqref{AGAPfe5}. Also note that all these contributions satisfy the boundary conditions of the form \eqref{AGAPbc}. It follows that the left and right hand side of equation \eqref{AGAPrepeat} are equal. 

\vspace{.5cm}

\paragraph{ Acknowledgements:} We would like to thank Ch. Kopper for numerous useful 
conversations and comments. The research of S.H. is partly supported by ERC starting grant QC \& C 259562. 

\addcontentsline{toc}{chapter}{Bibliography}
\providecommand{\href}[2]{#2}\begingroup\raggedright\endgroup

\end{document}